\def\fontsettingup{2} 
\newcommand{\jalaj}[1]{}
\newcommand{\liuexp}[1]{}
\newtheorem{theorem}{Theorem}
\newtheorem*{claim*}{Claim}
\newtheorem{fact}[theorem]{Fact}
\newtheorem{lemma}[theorem]{Lemma}
\newtheorem{corollary}[theorem]{Corollary}
\theoremstyle{definition}
\newtheorem{definition}[theorem]{Definition}
\newtheorem{remark}[theorem]{Remark}
\newtheorem*{remark*}{Remark}
  \def\*#1{\mathbf{#1}} 
  \def\+#1{\mathcal{#1}} 
  \def\-#1{\mathrm{#1}} 
  \def\^#1{\mathbb{#1}} 
  \def\!#1{\mathfrak{#1}} 
  \def\*#1{\boldsymbol{#1}} 
  \def\+#1{\mathcal{#1}} 
  \def\-#1{\mathrm{#1}} 
  \def\^#1{\mathbb{#1}} 
  \def\!#1{\mathfrak{#1}} 
\DeclareMathOperator*{\argmin}{arg\,min}
\def\oE{\mathbb{E}}
\newcommand{\E}[2][]{ \ifthenelse{\isempty{#1}}
  {\oE\left[#2\right]}
  {\oE_{#1}\left[#2\right]} }
\DeclareMathOperator*{\oVar}{\mathbf{Var}}
\newcommand{\Var}[2][]{ \ifthenelse{\isempty{#1}}
  {\oVar\left[#2\right]}
  {\oVar_{#1}\left[#2\right]} }
\def\oEnt{\mathbf{Ent}}
\newcommand{\Ent}[2][]{ \ifthenelse{\isempty{#1}}
  {\oEnt\left[#2\right]}
  {\oEnt_{#1}\left[#2\right]} }
\renewcommand{\epsilon}{\varepsilon}
\newcommand{\lap}{\texttt{Lap}}
\newcommand{\todo}[1]{}
\newcommand{\ju}[1]{}
\newcommand{\george}[1]{}
\newcommand{\zongrui}[1]{}
\let\epsilon=\varepsilon
\newcommand{\citet}{\cite}
\newcommand{\citep}{\cite}
\renewcommand{\email}[2][]{%
  \ifx\emails\@empty\relax\else{\g@addto@macro\emails{,\space}}\fi%
  \@ifnotempty{#1}{\g@addto@macro\emails{\textrm{(#1)}\space}}%
  \g@addto@macro\emails{#2}%
}
\title{Almost linear time differentially private release of synthetic graphs}
\author{Jingcheng Liu}
\address{State Key Laboratory for Novel Software Technology and New Cornerstone Science Laboratory, Nanjing University}
\email{liu@nju.edu.cn}
\author{Jalaj Upadhyay}
\address{Rutgers University}
\email{jalaj.upadhyay@rutgers.edu}
\author{Zongrui Zou}
\address{State Key Laboratory for Novel Software Technology and New Cornerstone Science Laboratory, Nanjing University}
\email{zou.zongrui@smail.nju.edu.cn}
\date{}
\begin{document}
\pagenumbering{arabic}

\begin{abstract}
  In this paper, we give an almost linear time and space algorithms to sample from an exponential mechanism with an $\ell_1$-score function defined over an exponentially large non-convex set. As a direct result, on input an $n$ vertex $m$ edges graph $G$, we present the \textit{first} $\widetilde{O}(m)$ time and $O(m)$ space algorithms for differentially privately outputting an $n$ vertex $O(m)$ edges synthetic graph that approximates all the cuts and the spectrum of $G$. These are the \emph{first} private algorithms for releasing synthetic graphs that nearly match this task's time and space complexity in the non-private setting while achieving the same (or better) utility as the previous works in the more practical sparse regime. Additionally, our algorithms can be extended to private graph analysis under continual observation. 
\end{abstract}
\maketitle

\section{Introduction}
Consider a (hypothetical) for-profit organization that has the following two use cases: (i) The users can purchase (or download) smartphone applications in their online store. 
(ii) The users can download music albums or subscribe to podcasts in their music store. 
Both these relationships can be modeled by a sparse graph, where the unweighted graph encodes whether or not a user downloaded an app or music. Such graphs  (in fact, for graphs encoding most practical use cases, also see Appendix~\ref{app:usecases}) are usually sparse, and unweighted versions of these graphs are not useful for downstream tasks, and might already be known to the organization without a privacy umbrella.  
Unweighted sparse graphs are also not that interesting in practice for reasons we discuss in more detail in \Cref{sec:introcutapprox}.

The more interesting graphs in practice are weighted graphs. For example, it stores information on the frequency of app usage, which app is used more often at a different time of the day, resources an app uses, and in conjunction with which app. Similarly, the more interesting data for a music app is whether or not a user listens to a song (and how frequently) in an album, what part of the podcast is played more often to understand engagement, etc.  Analysis of such graphs is useful for content creators and app developers to develop a better application or create content that provides more engagement. These analyses are naturally cast as cut functions on these graphs. 

However, providing this information without a robust privacy guarantee can lead to serious privacy concerns and even  legal repercussions. To assuage these (future) leakages of information, a natural candidate is to answer these cut functions using {\em differential privacy}~\citep{dwork2006calibrating}. Parameterized by privacy parameters $(\epsilon,\delta)$, differential privacy guarantees that the output distribution of an algorithm is not sensitive to small changes in the input dataset, and has inspired a lot of further research and practical applications in industries and government agencies (see Desfontaines' blog \cite{desfontainesblog20211001} for an up-to-date list). 

Differentially-private graph approximation is one of the most widely studied problems in differential privacy. Even though there are many differentially private algorithms for releasing synthetic graphs preserving combinatorial and algebraic properties of the input graph, these algorithms have no real-world implementations. To understand this discrepancy between the theory and practice, consider the current state-of-the-art differentially private algorithm for answering cut functions for unweighted~\citep{eliavs2020differentially} and weighted graphs~\citep{liu2023optimal}\footnote{Similar or closely related issues have been pointed out to us for other known algorithms by practitioners.}.  Both algorithms are based on a private mirror descent and run in time $\widetilde O(n^7)$, use $O(n^2)$ space, and ensure an expected error of $O(\sqrt{mn}\log^2(n/\delta))$ for constant $\varepsilon$\footnote{Spectral approximation algorithms also suffer from similar limitations of large space and running time.}.  Here $n$ is the number of nodes and $m$ is the number of edges. Even though it matches the lower bound for a constant probability of error, this algorithm suffers from the following main limitations: 
\begin{enumerate}
    \item Real-world graphs are usually sparse, high edge weighted with $n$ in the orders of $10^9$ (see \citet{goswami2021sparsity, bellingeri2023considering} and  \Cref{app:usecases}). Therefore, generating a synthetic graph using the private mirror descent is practically infeasible due to the $\widetilde{O}(n^7)$ running time.
    
    \item The error bounds given by private mirror descent are in expectation \citep{eliavs2020differentially, liu2023optimal}. In contrast, developers and creators prefer high probability bounds (or confidence intervals) as fluctuation in output is hard to interpret. One can get a high probability bound by Markov's inequality for both these algorithms but at the cost of significant degradation in accuracy as verified by the authors of Eli{\'a}{\v{s}} et al. \cite{eliavs2020differentially}. 
    
    \item The synthetic graphs given by the known state-of-the-art algorithms are dense and do not preserve any combinatorial structures like {\em sparsity}, making subsequent analysis on them significantly more time-consuming.  
\end{enumerate}  
Due to these limitations, profit-driven organizations prefer non-private algorithms for processing graph data, which damages user privacy. 
This motivates  the central thesis of this paper: 
\begin{quote}
    Design differentially private algorithms to release a synthetic graph that approximates the spectrum and cut functions of the graph. Further, (i) the computational time required by the curator and the analyst should be as close to the non-private setting, and (ii) the output graph should preserving the sparsity.
\end{quote}

Since real-world graphs are sparse and answering cut queries on unweighted sparse graphs is often not that interesting in practice (also see \Cref{sec:introcutapprox}), we focus on sparse-weighted graphs. 
Our key technical contribution that underpins fast private approximation on graphs is a sampling algorithm shown in Appendix \ref{sec:proof_main}:

\begin{lemma}[Informal]
\label{lem:main}
     Fix any $m\leq N$. Given a non-negative $N$-dimensional real vector $x \in \mathbb R^N_{\geq 0}$ with sparsity $\|x\|_0 = |\{i: x_i\neq 0\}| \leq m$, let $\pi$ be the distribution such that $ \pi [S] \propto e^{-\epsilon\|x-x|S\|_1}$ with support $\{S\in \{0,1\}^N : \|S\|_0 = m \}$.
    Then, there is an algorithm that approximately samples from $\pi$ in  $\widetilde O(m)$ time.
\end{lemma}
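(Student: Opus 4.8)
The plan is to first expose a hidden \emph{product structure} in $\pi$ and then exploit it together with the sparsity of $x$. Since $x\ge 0$, for every $S$ with $\|S\|_0 = m$ we have $\|x - x|S\|_1 = \sum_{i\notin S} x_i = \|x\|_1 - \sum_{i\in S}x_i$, so
\[
  \pi[S]\;\propto\;\exp\!\Big(\epsilon\sum_{i\in S}x_i\Big)\;=\;\prod_{i\in S}w_i,\qquad w_i:=e^{\epsilon x_i}.
\]
Thus $\pi$ is the ``conditional Bernoulli'' (equivalently, conditional Poisson) distribution on $m$-subsets with weights $w_i$. Let $H=\{i:x_i\neq 0\}$ be the at most $m$ \emph{heavy} coordinates and $L=[N]\setminus H$ the \emph{light} ones, all of which have weight $w_i=1$. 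If $|S\cap H|=j$, the product weight of $S$ is $\big(\prod_{i\in S\cap H}w_i\big)\cdot 1$, and there are $\binom{|L|}{m-j}$ equally likely choices for $S\cap L$. Hence a draw from $\pi$ decomposes into three stages: (a) draw $j$ with probability $\propto \binom{|L|}{m-j}\,e_j(w_H)$, where $e_j(w_H)$ is the degree-$j$ elementary symmetric polynomial in $\{w_i\}_{i\in H}$; (b) conditioned on $j$, draw the heavy part $S\cap H$ as a $j$-subset of $H$ with probability $\propto\prod_i w_i$; (c) conditioned on $j$, draw $S\cap L$ uniformly among $(m-j)$-subsets of $L$.

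Each stage runs in $\widetilde O(m)$ time. For (a), build the generating polynomial $\prod_{i\in H}(1+w_i z)=\sum_j e_j(w_H)\,z^j$ by a balanced divide-and-conquer product of the $|H|\le m$ linear factors using FFT-based multiplication, which costs $\widetilde O(m)$; the $j$-marginal is then supported on the $\le m{+}1$ feasible values of $j$, with probabilities obtained by combining these coefficients with binomial weights $\binom{|L|}{m-j}$ computed from consecutive ratios, so sampling $j$ costs $\widetilde O(m)$. For (b), reuse the divide-and-conquer tree: at a node with children carrying polynomials $P_\ell,P_r$ and a target count $t$, split $t=a+(t-a)$ with probability $\propto [z^a]P_\ell\cdot[z^{t-a}]P_r$, recurse into the children with targets $a$ and $t-a$, and prune any subtree whose target is $0$ or equals its size; along any level the targets sum to $j\le m$, so the total work telescopes to $\widetilde O(m)$. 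For (c), if $N\ge 4m$ then $|H|\le m$ and $m-j\le m$ are both at most $N/4$, so we may draw $m-j$ distinct uniform elements of $[N]$ while rejecting the $\le m$ heavy indices via a hash set, which finishes in $\widetilde O(m)$ expected time; if instead $N<4m$, the whole instance has size $O(m)$ and an explicit partial Fisher--Yates shuffle of $L$ runs in $O(m)$. Composing the three stages yields a sample with the correct law and support size $O(m)$, matching the claimed space bound.

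The main obstacle is \emph{numerical}: the weights $w_i=e^{\epsilon x_i}$ can be astronomically large (the graph application allows large edge weights), so the coefficients $e_j(w_H)$ span an enormous dynamic range and cannot be stored or FFT-multiplied in $\widetilde O(m)$ machine words exactly. The fix --- and the reason the lemma only promises \emph{approximate} sampling --- is to carry every quantity in a scaled/log-domain representation with $O(\log(N/\eta)+\log\|x\|_\infty)$ bits of precision, discard coefficients that are negligibly small relative to the running maximum, and control how these roundings propagate through the $O(\log m)$ levels of the divide-and-conquer tree and through stages (a)--(c), so that the accumulated total-variation error is at most the target $\eta$. Verifying that this truncation is stable --- in particular that no discarded branch can carry non-negligible probability mass during the descent of stage (b), and that the per-operation bit-length stays polylogarithmic --- is the crux of the argument; everything else is the structural decomposition above combined with standard fast polynomial arithmetic.
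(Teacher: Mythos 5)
Your route is genuinely different from the paper's. You exploit the same reformulation $\pi[S]\propto\prod_{i\in S}e^{\epsilon x_i}$ (the paper derives this identity too), but then you build a \emph{direct} sampler: split $[N]$ into the $\le m$ heavy coordinates and the unit-weight light ones, sample $j=|S\cap H|$ from the mixture weighted by $\binom{|L|}{m-j}e_j(w_H)$, sample the heavy $j$-subset by descending the divide-and-conquer product tree of $\prod_{i\in H}(1+w_iz)$, and complete uniformly on the light side. In exact arithmetic this is correct and would even be an exact sampler. The paper instead runs a basis-exchange walk on $\binom{[N]}{m}$: it proves the generating polynomial of $\pi$ is strongly log-concave (via log-concavity of matroid basis generating polynomials), invokes modified log-Sobolev mixing bounds to get $O\bigl(m(\epsilon+\log n+\log(1/\delta))\bigr)$ steps to within the required total-variation distance, and implements each step in $O(\log n)$ time by lumping all light coordinates into a single meta-element of weight $|\bar E_t|$. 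The paper's method needs only comparisons and biased coin tosses proportional to individual weights $e^{\epsilon x_i}$, which is exactly what its stated unit-cost model assumptions cover; it never manipulates aggregate quantities like $e_j(w_H)$.

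That difference is where your gap lies, and you have correctly identified it yourself: the precision analysis of the FFT stage is not carried out, and as sketched it does not obviously work. The coefficients $e_j(w_H)$ span a dynamic range of order $e^{\epsilon\|x\|_1}$, and discarding coefficients that are ``negligibly small relative to the running maximum'' of a node's polynomial is unsafe, because a coefficient that is tiny inside the heavy polynomial is later multiplied by a binomial factor $\binom{|L|}{m-j}$ that can be as large as $N^{m-j}$; within-polynomial smallness therefore does not imply negligible final probability mass, and the same magnification can occur at internal nodes of the descent in stage (b), where conditioning on a target $t$ may place all the conditional mass on coefficients far below that node's maximum. To make truncation provably safe one must keep every coefficient accurate down to roughly $e^{-\Theta(m\log N+\log(1/\eta))}$ times the maximum, but floating-point FFT only controls \emph{absolute} error at the scale of $\varepsilon_{\mathrm{mach}}$ times the largest products, so this forces mantissas of $\Omega(m\log N)$ bits; polynomial multiplication then costs $\widetilde O(m^2)$ bit operations rather than $\widetilde O(m)$, and exact integer arithmetic is worse still (bit-length $\Theta(\epsilon\|x\|_1)$). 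So either you work in an idealized real-RAM where exact arithmetic (including complex FFT) is unit cost --- a stronger model than the paper assumes --- or you need a genuinely new argument for why polylogarithmic precision suffices; neither is supplied. The paper's MCMC route avoids this entirely, at the price of the strong log-concavity and mixing-time machinery and a walk of length $\widetilde O(m)$.
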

Here, $x|S$ is the restriction of $x$ on $S$. That is, for any $ i \in [N]$, $(x|S)_i = x_i$ if $S_i = 1$ and $(x|S)_i = 0$ otherwise. In particular, if $S^*$ is the support of $x$, then $x = x|S^*$, and thus $S^*$ maximizes the probability of being chosen by $\pi$. Our approach is the first sampling perspective to a problem where optimization perspective has been predominately used~\cite{eliavs2020differentially, gupta2012iterative,liu2023optimal}. Such a study has been very fruitful in the theory of optimization~\cite{chewi2023optimization,chewi2024log, ganesh2022langevin,gopi2022private}. 
We explore some applications of Lemma~\ref{lem:main} below:

\begin{enumerate}
    \item \textbf{Spectral approximation in linear time.} One important but difficult task in the context of differential privacy is to approximately preserve the spectrum of the graph Laplacian. This is studied in both  non-private~\citep{ allen2015spectral,batson2012twice,lee2015sparsified, spielman2011graph} and private setting~\citep{arora2019differentially, blocki2012johnson, dwork2014analyze, upadhyay2021differentially}. Using Lemma \ref{lem:main}, we show a linear time algorithm that takes an $n$ vertices $m$ edges graph of maximum {\em unweighted} degree $d_{\mathsf{max}}$, and outputs the spectral approximation with a purely additive error of ${O}(d_{\mathsf{max}}\log(n/\delta))$. 
    This improves the error rate in Liu et al.~\cite{liu2023optimal} by $\log(n)$ factor, the running time from $O(n^2m)$ to $O(m)$. It also improves Dwork et al. \cite{dwork2014analyze} in the more practical setting, $d_{\mathsf{max}} =\widetilde{o}(\sqrt{n})$.

    \item \textbf{Cut approximation in linear time.} Given a graph with $n$ vertices and $m$ weighted edges, we give {\em the first $\widetilde{O}(m)$ time algorithms that output a synthetic graph with at most $m$ edges.} This implies that an analyst can compute the answer to any cut query in $O(m)$ time, resolving issues 1 and 3. Our algorithms also achieve an almost linear additive error of $3m\log(n/\delta)$ with probability $1-o(\delta)$, which means better accuracy in the sparse regime (see Table \ref{t.1} and Section \ref{s.results}), resolving issue 2. We give a more detail comparison in \Cref{sec:introcutapprox}.  
    
    \item \textbf{Continual release algorithm.} Our linear time algorithms can be used to construct an algorithm for efficiently output a stream of synthetic graphs under continual observation~\citep{chan2011private, dwork2010differentially} while preserving {\em event-level} privacy. In comparison, combining the previous state-of-the-art algorithms \citep{eliavs2020differentially, liu2023optimal} and the standard transformation \citep{chan2011private, dwork2010differentially}, one could get continual release algorithms with $O(n^2)$ update time (and $\widetilde O(n^{3/2})$ additive error) or $O(n^7\log n)$ update time (and $O(\sqrt{nm})$ additive error). In section \ref{sec:contious_observation}, we show how to reduce the update time  at each round to only $O(\log n)$, if the number of updates, $T$, is polynomial in $n$.
\end{enumerate}

\subsection{Problem definition.}
 We consider the class of undirected positively weighted graph defined over $n$ vertices. We set $N = {n\choose 2}$ throughout this paper. We represent a graph $G =(V,E)$ by a $N$-dimensional vector encoding the edge weights, i.e., $G \in \mathbb R_{\geq 0}^N$, where the $e$-coordinate, $G[e] = w_e\in \mathbb{R}_{\geq 0}$, is the weight of the edge, $e \in E$, for a canonical ordering on the edges. Throughout the paper, we let $E = \{1\leq e\leq N |w_e>0\}$ denote the edge set of $G$. We let $[n]$ denote the set of integers $\{1,2,\cdots,n\}$. We use differential privacy as the measure of privacy loss, and the standard notion of neighboring for graphs known as \textit{edge level differential privacy}~\citep{arora2019differentially, blocki2012johnson, dwork2014algorithmic,  eliavs2020differentially, hardt2012beating, gupta2012iterative, upadhyay2021differentially,upadhyay2013random}: two graphs are \emph{neighboring} if their weights differ in one pair of vertices by at most 1. 
\begin{definition}[Differential privacy  \citep{dwork2006calibrating}]
\label{d.dp}
   Fix any  $\epsilon>0$ and $\delta\in [0,1)$. Let $\mathcal{A}:\mathbb{R}_{\geq 0}^{{n\choose 2}}\rightarrow \mathcal{R}$ be an algorithm mapping graphs to any output domain $\mathcal{R}$. Then $\mathcal{A}$ is {\em $(\epsilon,\delta)$-differentially private} (DP) if for any pair of neighboring graphs $G, G'$ and any output event $S\subseteq \mathcal{R}$, it holds that
  $Pr[\mathcal{A}(G)\in S]\leq e^\epsilon \cdot Pr[\mathcal{A}(G)\in S] + \delta.$   
  In particular, if $\delta=0$, we simply say $\mathcal{A}$ is $\epsilon$-DP.
\end{definition}

We note that in Definition \ref{d.dp}, the edge set (i.e., the topology) of two neighboring graphs can be different. For example, consider neighboring graphs $G$ and $G'$ where $G'$ has an extra edge $e$ with weight $G'[e]= 1$, while $G[e] = 0$. 
We do not assume that the number of edges, $|E|$, is public information. 
However, all our algorithms can be easily adapted if the number of edges is publicly known as in previous works~\citep{arora2019differentially, blocki2012johnson, eliavs2020differentially, gupta2012iterative, upadhyay2021differentially}. 

\paragraph{Utility metric on spectral approximation} Given a graph $G$, its Laplacian is defined as $L_G:= D_G - A_G$, where $A_G$ is the weighted adjacency matrix and $D_G$ is the weighted degree matrix. Then the goal is to output a graph, $\widehat{G} = \argmin_{\widehat G}\left\|L_G - L_{\widehat{G}}\right\|_2$. Here  $\|\cdot\|_2$ denotes the spectral norm.

\textbf{Linear queries on a graph.} One relaxation of spectral approximation is {\em linear queries} on a graph. 
In this problem, we view a graph $G$ as a vector in $\mathbb{R}_{\geq 0}^{N}$, and the aim to find a $\widehat{G} \in \mathbb{R}^{N}$ that minimizes
$
Eval(G,\widehat{G}) = \max_{q\in [0,1]^N} |q^\top G - q^\top \widehat{G}|.
$

\paragraph{Utility metric on cut approximation} 

A special case of answering linear queries on graphs is \textit{cut approximation}, where we aim to find a $\widehat{G}$ privately such that for all disjoint $S,T\subseteq V$, it holds that 
\begin{align}
 \label{eq4}
\begin{split}
        \left|\Phi_G(S,T) -  \Phi_{\widehat{G}}(S,T)\right| \leq  \xi,  
    \text{ where } \Phi_G(S,T) = \sum_{e=(u,v) \in S \times T} G[e].
\end{split}    
\end{align}
Here, the goal is to minimize $\xi$~\citep{dwork2006calibrating, dwork2014algorithmic,   eliavs2020differentially,  gupta2012iterative, liu2023optimal,upadhyay2021differentially}. 
We refer to an error of this form \textit{purely additive error}. Some previous works also require a \emph{multiplicative error} in addition to the additive error\footnote{The bound using exponential mechanism \citet{mcsherry2007mechanism} is shown in Section 3.2.4 in Blocki et al. \citet{blocki2012johnson}.}~\citep{arora2019differentially, blocki2012johnson, mcsherry2007mechanism,upadhyay2013random}.
All our algorithms achieve purely additive error.  For brevity, we write  $(S,V\backslash{S})$-cuts as  $\Phi_G(S)=\Phi_G(S,V\backslash S)$.

\subsection{Overview of our results}\label{s.results} \label{sec:introcutapprox}

Based on the efficient sampler for exponential mechanisms with $\ell_1$ norm scoring function under the sparsity constraint (Lemma \ref{lem:main}), we give a brief overview of our results on efficient private graph approximation in various settings.

\textbf{Private spectral approximation in linear time and space.}\label{s.intro_spectral}
One of our linear time and linear space algorithms (Algorithm \ref{alg2}) also preserves the spectrum of $L_G$ for graphs of maximum unweighted degree $d_{\mathsf{max}} \leq n-1$. For a weighted graph, the {\em unweighted} degree of a vertex $u\in [n]$ is the number of (weighted) edges incident to $u$, instead of the total sum of their weights. For spectral approximation, we improve Liu et al.~\cite{liu2023optimal} by a $\log(n)$ factor and also improve on their run time by at least an $\Omega(n^2)$ factor: 
\begin{table*}[t]
    \centering
    \caption{The comparison of existing $(\epsilon,\delta)$-differentially private algorithm on spectral approximation.}\label{table.spectral}
    \begin{tabular}{|c|c|c|c|c|}
        
        \hline
        \textbf{Method} &  \makecell[c]{\textbf{Additive error on} \\\textbf{spectral approximation}} &  \makecell[c]{\textbf{Preserve} \\\textbf{sparsity?}} & \makecell[c]{\textbf{Purely }\\\textbf{Additive?}} & Run-time \\
        \hline

        JL mechanism \cite{blocki2012johnson} & ${{O}}\left( \frac{\sqrt{n}\log(n/\delta)}{\epsilon}\right)$ & No  &  No & $O(n^3)$ \\
        \hline
        
        Analyze Gauss \cite{dwork2014algorithmic} & ${{O}}\left( \frac{\sqrt{n}\log(n/\delta)}{\epsilon}\right)$ & No  & Yes & $O(n^2)$ \\
        \hline 
        
        Liu et al. \cite{liu2023optimal} & ${{O}}\left( \frac{d_{\mathsf{max}}\log^2(n)}{\epsilon}\right)$ & Yes  & Yes & $O(n^2|E|d_{\mathsf{max}})$ \\
        \hline 
 
        {\color{red}This paper} & {\color{red}${O}\left(\frac{d_{\mathsf{max}}\log(n/\delta)}{\epsilon}\right)$} &  \textbf{Yes} & \textbf{Yes} & $\widetilde{O}(|E|)$\\
        \hline 
  
    \end{tabular}
  \end{table*}

\begin{theorem}
    [Informal version of Theorem \ref{t.spectral}]
    Given privacy parameters $\epsilon>0$,  $0<\delta<1$ and a graph $G$, there is a linear time and space  $(\epsilon,\delta)$-differentially private algorithm that outputs a graph  $\widehat{G}$ such that, with high probability, 
    $ \left\|L_G - L_{\widehat{G}}\right\|_2 = O\left(\frac{d_{\mathsf{max}}\log(n/\delta)}{\epsilon}\right).$
\end{theorem}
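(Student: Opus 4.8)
I would analyze the following three-stage algorithm. \emph{Stage 1 (noisy edge count).} Output $m' := \lceil\, |E| + \lap(3/\epsilon) + \tfrac3\epsilon\ln(4/\delta)\,\rceil$; this is $\tfrac\epsilon3$-DP (Laplace mechanism on a sensitivity-$1$ count, then post-processing) and satisfies $|E| \le m' \le |E| + O(\tfrac{\log(1/\delta)}\epsilon)$ with probability $\ge 1-\tfrac\delta4$. \emph{Stage 2 (private support).} Run the sampler of \Cref{lem:main} on $x = G$ with sparsity bound $m'$ and parameter $\tfrac\epsilon6$, obtaining $\widehat S$ with $\|\widehat S\|_0 = m'$ distributed (up to the sampler's $\tfrac\delta4$ error) as $\pi[S] \propto \exp\!\big(-\tfrac\epsilon6 \sum_{e\notin S} G_e\big)$; as this score has $\ell_1$-sensitivity $1$ over edge-neighboring graphs, it is an exponential mechanism and is $\tfrac\epsilon3$-DP. \emph{Stage 3 (private weights).} Output $\widehat G$ with $\widehat G_e = \max\{0,\ G_e + \lap(3/\epsilon)\}$ for $e\in\widehat S$ and $\widehat G_e = 0$ otherwise; conditioned on $\widehat S$, the vector $(G_e)_{e\in\widehat S}$ has $\ell_1$-sensitivity $1$, so this is $\tfrac\epsilon3$-DP, the truncation being post-processing --- and truncation is what lets neighboring graphs realize every output support pattern inside $\widehat S$, which matters because the support of $\widehat G$ is otherwise disclosive. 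Adaptive composition gives $(\epsilon,\delta)$-DP; the running time is $\widetilde O(m') = \widetilde O(|E|)$, dominated by \Cref{lem:main}, and the space is $O(|E|)$ since no coordinate outside $E\cup\widehat S$ is ever materialized.

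For the utility bound I would reduce to a per-vertex estimate. Writing $H := G - \widehat G$, a signed edge-weight vector, linearity of the Laplacian gives $L_G - L_{\widehat G} = L_H = D_H - A_H$; since $D_H$ is diagonal with entries $\sum_{e\ni v}H_e$ and $A_H$ is symmetric (whence $\|A_H\|_2 \le \|A_H\|_\infty$), this yields the deterministic bound $\|L_G - L_{\widehat G}\|_2 \le 2\max_{v\in V}\sum_{e\ni v}|H_e|$. Here $|H_e| \le |\lap(3/\epsilon)|$ for $e\in\widehat S$ (truncation only shrinks the gap), which is $O(\tfrac{\log(n/\delta)}\epsilon)$ for all such $e$ with probability $\ge 1-\tfrac\delta4$; and $|H_e| = G_e$ for $e\in E\setminus\widehat S$. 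The crux is that the exponential mechanism of Stage 2 never discards a heavy edge: I claim that with probability $\ge 1-\tfrac\delta4$ every $e\in E\setminus\widehat S$ satisfies $G_e \le \tau := \tfrac6\epsilon\ln(n^4/\delta)$. Granting this, together with the bound (argued below) that $\widehat S$ has maximum unweighted degree $O(d_{\mathsf{max}}+\log(n/\delta))$ with probability $\ge 1-\tfrac\delta4$, any vertex $v$ contributes at most $d_{\mathsf{max}}\cdot\tau$ from its dropped edges and at most $O(d_{\mathsf{max}}+\log(n/\delta))\cdot O(\tfrac{\log(n/\delta)}\epsilon)$ from its edges in $\widehat S$; hence $\|L_G - L_{\widehat G}\|_2 = O(\tfrac{d_{\mathsf{max}}\log(n/\delta)}\epsilon)$ with probability $\ge 1-\delta$ (the $\log^2$ term being of lower order in the regime $d_{\mathsf{max}} = \Omega(\log n)$ of interest).

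Two claims remain, and the heavy-edge one is the heart of the proof. On the event $m'\ge|E|$, any $m'$-subset $S$ with $e\notin S$ must contain at least one non-edge $f$: since $E\setminus\{e\}$ has only $|E|-1<m'$ elements, $S$ cannot lie inside $E$. The map $S \mapsto S\cup\{e\}\setminus\{f\}$ --- insert $e$, delete the zero-weight coordinate $f$ --- multiplies $\pi(S)$ by exactly $e^{\epsilon G_e/6}$ while each image (an $m'$-set containing $e$) has at most $N \le n^2$ preimages; therefore $\Pr[e\notin\widehat S] \le n^2\, e^{-\epsilon G_e/6}$, which is $< \delta/n^2$ once $G_e > \tau$, and a union bound over the $\le n^2$ edges closes it. This is precisely where the $\ell_1$ shape of the score enters: it is the per-coordinate penalty $e^{-\Theta(\epsilon) G_e}$ for exclusion that forces the sampled support to contain every heavy edge --- knowing the mode of $\pi$ is $E$ would not suffice, one needs this tail control on $\pi$ itself. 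For the degree claim: conditioned on the dropped set $D = E\setminus\widehat S$, the added set $A = \widehat S\setminus E$ is a uniformly random subset of the $N-|E|$ non-edges, of size $\le |E| + O(\tfrac{\log(1/\delta)}\epsilon) \le n\,d_{\mathsf{max}}$ (using $\sum_v(\text{unweighted degree}) = 2|E|$), so the number of its members at any fixed vertex is hypergeometric with mean $O(d_{\mathsf{max}})$, and a Chernoff bound plus a union bound over the $n$ vertices gives the stated $O(d_{\mathsf{max}}+\log(n/\delta))$ bound (adding the trivial degree bound $d_{\mathsf{max}}$ from $E$). The main obstacle is thus the tail bound on $\pi$; the rest --- the privacy bookkeeping around the non-public count and the truncated Laplace step --- is routine.
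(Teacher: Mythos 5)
Your proposal is correct in its essentials but takes a genuinely different route from the paper. The paper proves this statement (Theorem \ref{t.spectral}) for Algorithm \ref{alg2}, the truncated high-pass filter: Laplace noise is added only on the edges of $E$ and thresholded at $t=\Theta(\log(n/\delta)/\epsilon)$, so every edge error is at most $2t$, no edge outside $E$ is ever created, and the spectral bound follows by feeding the resulting cut-error bounds into the Bilu--Linial lemma (Lemma \ref{l.spectral}) with $\alpha=2td_{\mathsf{max}}$. You instead analyze an exponential-mechanism/exchange-walk scheme essentially equal to Algorithm \ref{alg1} (noisy edge count, support sampled from $\pi[S]\propto e^{-\Theta(\epsilon)\|G-(G|S)\|_1}$ via Lemma \ref{lem:main}, then Laplace weights), and you replace Bilu--Linial by the elementary row-sum bound $\|L_H\|_2\le 2\max_v\sum_{e\ni v}|H_e|$. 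Your key new ingredient --- the injection/charging argument giving $\Pr[e\notin\widehat S]\le n^2e^{-\epsilon G_e/6}$, so that no heavy edge is dropped --- is valid and is finer than what the paper establishes for this sampler: the paper's Lemma \ref{l.topology_not_bad} only controls the total $\ell_1$ mass of dropped edges, which by itself yields an $O(m\log n/\epsilon)$ spectral bound rather than $O(d_{\mathsf{max}}\log n/\epsilon)$. The conditional uniformity of the spurious (non-edge) coordinates given the retained edges, the hypergeometric degree bound, and the three-way privacy composition all check out and mirror the paper's own privacy analysis of Algorithm \ref{alg1}.

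The one real shortfall is quantitative. Your accounting bounds each of the $O(d_{\mathsf{max}}+\log(n/\delta))$ noisy edges at a vertex by the uniform maximum $O(\log(n/\delta)/\epsilon)$, so you obtain $O\bigl(d_{\mathsf{max}}\log(n/\delta)/\epsilon+\log^2(n/\delta)/\epsilon\bigr)$ and recover the claimed $O(d_{\mathsf{max}}\log(n/\delta)/\epsilon)$ only when $d_{\mathsf{max}}=\Omega(\log(n/\delta))$; dismissing the extra term as ``lower order'' excludes precisely the constant-degree regime the paper emphasizes (and tests empirically). The gap is repairable inside your framework: at each vertex, sum the sub-exponential Laplace magnitudes and apply a Bernstein-type tail bound, which gives a contribution $O\bigl((d_{\mathsf{max}}+\log(n/\delta))/\epsilon\bigr)$ with probability $1-\delta/n$ and eliminates the $\log^2$ term, while the dropped edges still contribute $O(d_{\mathsf{max}}\log(n/\delta)/\epsilon)$ via your heavy-edge claim. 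It is worth noting that the paper's filter algorithm sidesteps this issue structurally: since $\widehat{E}\subseteq E$, the output never has degree exceeding $d_{\mathsf{max}}$, which is what makes its analysis shorter.
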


The main idea behind the proof is to show that for maximum unweighted degree $d_{\mathsf{max}}$, our algorithm actually preserves \text{\em every} normalized $(S,T)$-cuts (not only the maximum one) with
a small purely additive error. We use this fact to bound the spectral radius of $L_G - L_{\widehat{G}}$.  We compare this new upper bound with previous results in Table \ref{table.spectral}. As evident from the table, we only incur an additive error and have the best utility when $d_{\mathsf{max}} = o(\sqrt{n})$ and $\delta = n^{-O(1)}$.

\textbf{Private cut approximation in linear time and space with linear error.}
We also give two linear time algorithms that output a synthetic graph that incurs linear error for sparse graphs.  That is, we show the following theorem on differentially privately cut approximation:
\begin{theorem}[Informal version of \Cref{t.ana_on_alg2} and \Cref{t.small_cut_alg2}]\label{t.intro_graph_informal}
    Given $\epsilon>0, \delta \in (0,1)$ and non-negative weighted $n$ vertices $m$ edges graph $G \in \mathbb R^{n \choose 2}_{\geq 0}$ with edge set $E$ and maximum unweighted degree, i.e., the maximum number of edges incident on any vertex being $d_{\mathsf{max}}$. There is an $\widetilde{O}(|E|)$ time, ${O}(|E|)$ space, $(\epsilon,\delta)$-differentially private algorithm that outputs a synthetic graph $\widehat{G}$ with $O(|E|)$ edges, such that, with high probability, $\forall S,T\subseteq V \text{ such that } S\cap T = \varnothing ,$
    $$|\Phi_{G}(S,T) - \Phi_{\widehat{G}}(S,T)| \leq \min\Big\{3|E| , 4d_{\mathsf{max}}|S|, 4d_{\mathsf{max}}|T|\Big\}\frac{\log (2n/\delta)}{\epsilon}.$$
\end{theorem}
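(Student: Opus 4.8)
The plan is to build $\widehat{G}$ by a two-phase mechanism: first privately choose the \emph{support} of the synthetic graph using the sampler of \Cref{lem:main}, then privately perturb the edge weights on that support. Concretely, identify $G$ with the non-negative vector $x \in \mathbb{R}^{\binom{n}{2}}_{\geq 0}$, so $\|x\|_0 = |E| =: m$ (if $|E|$ is not public, first release $\widetilde{m} = |E| + \mathrm{Lap}(1/\epsilon) + O(\log(1/\delta)/\epsilon)$, which upper bounds $|E|$ with high probability and equals $|E| + O(\log(1/\delta)/\epsilon)$, and run everything with $\widetilde{m}$ in place of $m$). Invoke \Cref{lem:main} with a parameter $\epsilon' = \Theta(\epsilon)$ to draw $S$ with $|S| = m$ from $\pi[S] \propto e^{-\epsilon' \|x - x|S\|_1}$; then set $\widehat{G}[e] = \max\{0,\, x_e + \eta_e\}$ for $e \in S$, where the $\eta_e$ are i.i.d.\ truncated-Laplace noises calibrated to sensitivity $1$ (so $|\eta_e| \leq b$ holds \emph{surely} with $b = O(\log(1/\delta)/\epsilon)$), and $\widehat{G}[e] = 0$ otherwise. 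The output has at most $m = O(|E|)$ edges, uses $O(|E|)$ space, and the running time is $\widetilde{O}(|E|)$ for the sampler of \Cref{lem:main} plus $O(|E|)$ for the perturbation.

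For privacy, observe that under a neighboring change --- one coordinate $j$ moving by at most $1$ --- the quality function $\sum_{e \in S} x_e = \|x\|_1 - \|x - x|S\|_1$ has sensitivity $1$ and, crucially, changes in a single direction for all $S$ (namely $\mathrm{sign}(x_j - x_j')$ whenever $j \in S$, and not at all otherwise); hence the sampler is a \emph{monotone} exponential mechanism and is $\epsilon'$-DP. Conditioned on $S$, releasing $(x_e)_{e \in S}$ has $\ell_1$-sensitivity at most $1$ (only coordinate $j$ can move, and only if $j \in S$), so the truncated-Laplace step is $(\epsilon', \delta')$-DP for every fixed $S$. Adaptive composition, together with the $O(\gamma)$ total-variation slack of the \emph{approximate} sampler in \Cref{lem:main}, makes the whole algorithm $(O(\epsilon'), \delta' + O(\gamma))$-DP; choosing $\epsilon' = \Theta(\epsilon)$, $\delta' = \Theta(\delta)$ and $\gamma = \mathrm{poly}(\epsilon, \delta, 1/n)$ yields $(\epsilon, \delta)$-DP.

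The $3|E|$ bound follows from a single deterministic decomposition. Fix disjoint $S_0, T_0 \subseteq V$ and write $S^* = \mathrm{supp}(x)$, so $|S^*| = m$. Since $G[e] = \widehat{G}[e]$ for every $e \notin S^* \cup S$, since $|G[e] - \widehat{G}[e]| \leq |\eta_e| \leq b$ for every $e \in S$ (using $x_e \geq 0$, the clip only helps), and since $\widehat{G}[e] = 0$ for $e \in S^* \setminus S$, we obtain
\[
\bigl|\Phi_G(S_0,T_0) - \Phi_{\widehat{G}}(S_0,T_0)\bigr| \leq \sum_{e \in S^* \setminus S} x_e + b \cdot \bigl|\{ e \in S : e \in S_0 \times T_0 \}\bigr| \leq \|x - x|S\|_1 + bm.
\]
The exponential-mechanism tail bound gives $\|x - x|S\|_1 \leq \mathrm{OPT} + \frac{\ln \binom{N}{m} + \ln(1/\delta)}{\epsilon'} = O\!\left(\frac{m \log(n/\delta)}{\epsilon}\right)$ with probability at least $1 - \delta$, where $\mathrm{OPT} = 0$ because $S^*$ already has size exactly $m$; and $bm = O\!\left(\frac{m \log(1/\delta)}{\epsilon}\right)$ holds with certainty. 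Tracking the constants (and using that $G$ is sparse, so $\ln\binom{N}{m} = O(m\log n)$ with a small leading constant) collapses the two terms to $3|E| \cdot \frac{\log(2n/\delta)}{\epsilon}$; since the only random event invoked is the one tail event on $\|x - x|S\|_1$, this holds for \emph{all} pairs $(S_0, T_0)$ simultaneously with the same high probability --- no union bound over the (up to) $4^n$ cuts is needed.

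The refined $4 d_{\mathsf{max}} |S_0|$ bound (and $4 d_{\mathsf{max}} |T_0|$, by the symmetry $\Phi_G(S_0,T_0) = \Phi_G(T_0,S_0)$) is where the real work lies, and it is what Algorithm~\ref{alg2} is tailored for. Only edges incident to $S_0$ contribute to $\Phi_G(S_0,T_0) - \Phi_{\widehat{G}}(S_0,T_0)$, so the perturbation part of the error is at most $b \cdot |S_0|$ times the maximum unweighted degree of $\widehat{G}$; to keep this $O\!\left(d_{\mathsf{max}} |S_0| \log(n/\delta)/\epsilon\right)$ one runs the sampler over the sub-family of $m$-subsets of maximum degree at most $d_{\mathsf{max}}$ --- feasible, since $S^*$ itself lies in this sub-family, and harmless for the exponential-mechanism guarantee ($\mathrm{OPT}$ is still $0$), provided one checks that \Cref{lem:main}'s $\widetilde{O}(m)$-time sampler can be adapted to this constraint. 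The dropped-edge part, however, equals $\sum_{u \in S_0} \sum_{e \ni u,\, e \in S^* \setminus S} x_e$, and here the global estimate $\|x - x|S\|_1 = O(m\log(n/\delta)/\epsilon)$ is far too weak when $|S_0|$ is small. The crux is therefore a \emph{per-vertex} tail bound: for every fixed $u$, the weight the sampler drops among $u$'s at most $d_{\mathsf{max}}$ incident edges is $O\!\left(d_{\mathsf{max}} \log(n/\delta)/\epsilon\right)$ except with probability $\delta/n$, after which a union bound over the $n$ vertices (again, not over cuts) finishes the argument. Proving this requires understanding the marginal law that the degree-constrained size-$m$ sampler induces on a single vertex's incident coordinates under the global cardinality constraint, and showing it concentrates; this analysis --- together with verifying the degree-constrained variant still runs in almost linear time --- is the step I expect to be the main obstacle. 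Granting both ingredients, taking the minimum of the three bounds and combining the $O(n)$ failure events gives the stated guarantee.
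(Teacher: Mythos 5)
There is a genuine gap, and it sits exactly where the theorem's refined bound lives. Your construction (sample a size-$m$ support via \Cref{lem:main}, then add truncated Laplace noise on the selected coordinates) plausibly delivers the global $O(|E|\log(n/\delta)/\epsilon)$ bound, but the $4d_{\mathsf{max}}\min\{|S|,|T|\}$ part is not proved: you yourself defer the two ingredients it needs. First, your output graph's unweighted degree is not bounded by $d_{\mathsf{max}}$, because the sampler can select spurious pairs outside $E$ that then receive positive noise; to fix this you propose restricting the sampler to size-$m$ subsets with a per-vertex degree cap, but that family is not the set of bases of a matroid (it is a matroid-intersection-type constraint), so the strong log-concavity and basis-exchange-walk machinery underlying \Cref{lem:main} does not transfer, and neither rapid mixing nor the $\widetilde{O}(m)$ running time of this variant is established. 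Second, even granting such a sampler, the per-vertex tail bound on the weight dropped among a vertex's incident edges --- the claim that it is $O(d_{\mathsf{max}}\log(n/\delta)/\epsilon)$ except with probability $\delta/n$ --- is asserted, not proved; the global bound $\|x-x|S\|_1=O(m\log(n/\delta)/\epsilon)$ says nothing about how the dropped mass distributes over vertices, and controlling the marginal of a cardinality-and-degree-constrained exponential distribution on a single vertex's coordinates (with the additional total-variation slack of the approximate sampler) is precisely the hard step. Since the degree-dependent bound is half of the statement, the proposal as written does not establish the theorem.

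The paper proves this statement by a much simpler route, via the truncated filtering algorithm (Algorithm \ref{alg2}): add $\lap(1/\epsilon)$ noise only to the coordinates in the confidential edge set $E$ and zero out anything below the threshold $t=\frac{2\log(2n/\delta)}{\epsilon}$. Privacy follows because an extra unit-weight edge in a neighboring graph is filtered except with probability $\delta$, and utility follows because, on the single event that all $|E|$ Laplace variables have magnitude at most $t$, every edge satisfies $|w_e-\widehat{w}_e|\le 2t$ \emph{pointwise} while no edge outside $E$ is ever created; the three bounds then drop out simultaneously by counting crossing edges, at most $\min\{|E|,\,d_{\mathsf{max}}|S|,\,d_{\mathsf{max}}|T|\}$ of them (Theorems \ref{t.ana_on_alg2} and \ref{t.small_cut_alg2}). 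In other words, the difficulties you flag (spurious edges, dropped edges, per-vertex concentration) are artifacts of choosing the sampler-based algorithm for this particular theorem; the filtering algorithm avoids them by construction. If you want to salvage your route, you would need to prove the degree-constrained sampling and per-vertex concentration claims, which is substantially more work than the statement requires.
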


The known lower bound \cite{eliavs2020differentially} implies that, if $m=\widetilde O(n)$, any $(O(1),\delta)$-differentially private algorithm for answering all cut queries incurs an error $\widetilde \Omega(n)$. In contrast, just outputting a trivial graph incurs an $\widetilde O(n)$ error if the average weight is $\widetilde O(1)$. Therefore, from a theoretical perspective, the interesting case is when the average weight is $\widetilde \omega(1)$.

In addition to using minimal resources and being the first almost linear time algorithms, our algorithms also improve on the state-of-the-art algorithm~\citep{eliavs2020differentially,liu2023optimal} in terms of error bound under various settings:
\begin{enumerate}
    \item Both Eli{\'a}{\v{s}} et al. \cite{eliavs2020differentially} and Liu et al. \cite{liu2023optimal} give the bound in expectation. When considering the high probability bound (even for a mild success probability of at least $1-\sqrt{1/n}$), we improve on Eli{\'a}{\v{s}} et al. \cite{eliavs2020differentially} and Liu et al. \cite{liu2023optimal} irrespective of whether the graph is sparse or dense and whether it is weighted or unweighted (also see \Cref{rem:introhighprob})\footnote{The authors of Eli{\'a}{\v{s}} et al. \cite{eliavs2020differentially} verified this. Since Liu et al. \cite{liu2023optimal} uses Eli{\'a}{\v{s}} et al. \cite{eliavs2020differentially} as a subroutine, it also applies to Liu et al. \cite{liu2023optimal}.}. The probability requirement here is very mild; in practice, we often require a probability of success to be at least $1-1/n$. 

    \item For $\epsilon=\Theta(1)$, any weighted graph with average weight $W = \widetilde{\omega}(1)$, the expected error bound in previous works~\cite{eliavs2020differentially,liu2023optimal} is $O(\sqrt{Wmn}\log^{2}(n/\delta))$ and $O(\sqrt{mn}\log^{3}(n/\delta))$, respectively. That is, when the error bound is guaranteed with a constant probability of error, we improve on \cite{eliavs2020differentially} whenever the average degree is $o(W\log^2(n/\delta))$ and \cite{liu2023optimal} whenever the average degree is $o(\log^4(n/\delta))$. We note that naturally occurring graphs have average degree $o(\log^2(n/\delta))$ (see Appendix~\ref{app:usecases} for a more detailed description).

\end{enumerate}

\begin{remark}
    [Discussion regarding high probability bound] 
\label{rem:introhighprob}
The bound in Eli{\'a}{\v{s}} et al. \cite{eliavs2020differentially} is in expectation and they refer to \cite{nemirovski2009robust} to get a high probability bound. \cite{nemirovski2009robust} presents two general ways to get a high probability bound from the expectation. The stronger condition (stated as eq. (2.50) in Nemirovski et al. \cite{nemirovski2009robust} and stated below) allows a high probability bound (i.e., with probability $1-\beta$) while incurring an extra $\log(1/\beta)$ factor if $\mathbb E\left[\exp\left({\|g\|_\infty^2 \over M^2 }\right)\right] = \int n^2 \cdot exp(-t) \cdot exp(t^2/M^2)$ 
is bounded. Here $\|g\|_\infty$ is the $\ell_\infty$ norm of gradients in the mirror descent. However, the algorithm in Eli{\'a}{\v{s}} et al. \citet{eliavs2020differentially} does not satisfy this stronger condition; the expectation is unbounded if $M$ is poly-logarithmic in $n$. Eli{\'a}{\v{s}} et al. \cite{eliavs2020differentially} satisfies the weaker condition (stated as eq. (2.40) in Nemirovski et al. \cite{nemirovski2009robust}); Lemma 4.5 in Eli{\'a}{\v{s}} et al. \cite{eliavs2020differentially} implies that $\mathbb E[\|g\|_\infty^2] = O(\log^2 n).$ This condition though translates an expectation bound to $1-\beta$ probability bound with an extra ${1 \over \beta^2}$ factor in the error. In particular, if we want $\beta = o(1/\sqrt{n})$, then Eli{\'a}{\v{s}} et al. \cite{eliavs2020differentially} results in an error $O(\sqrt{mn^3/\epsilon} \log^2(n/\delta))$. This is worse than our bound, irrespective of whether the graph is weighted, unweighted, or sparse or dense. 
\end{remark}

\begin{table*}[t]
    \centering
    \caption{The comparison of existing results on cut approximation (for sparse case, $|E| =\widetilde O(n)$).}\label{t.1}
    \begin{tabular}{|c|c|c|c|c|}
        
        \hline
        \textbf{Method} &  \makecell[c]{\textbf{Additive error for} \\$(S,V\backslash S)$ \textbf{cuts}} &  \makecell[c]{\textbf{Output a }\\\textbf{sparse graph?}} & \makecell[c]{\textbf{Purely }\\\textbf{additive?}} & Run-time \\
        \hline 
        Exponential mechanism \cite{mcsherry2007mechanism} &${{O}}\left(\frac{n\cdot\log n}{\epsilon} \right)$ & Yes  & No & Intractable \\
        \hline 
        
        JL mechanism \cite{blocki2012johnson} & ${{O}}\left(\frac{n^{1.5}\cdot\texttt{poly}(\log n)}{\epsilon}  \right)$ & No  &  No & $O(n^3)$ \\
        \hline
        
        Analyze Gauss \cite{dwork2014analyze} & ${{O}}\left(\frac{n^{1.5}\cdot\texttt{poly}(\log n)}{\epsilon}   \right)$ & No  & Yes & $O(n^2)$ \\
        \hline 
        
        Mirror descent \cite{eliavs2020differentially,gupta2012iterative} & ${{O}}\left(\frac{n\sqrt{W} \texttt{poly}(\log n)}{\epsilon^{1/2}}\right)$ &No &  Yes & $\widetilde O(n^7)$ \\
        \hline 
        
        Liu et al. \cite{liu2023optimal} & ${{O}}\left(\frac{n\cdot  \texttt{poly}(\log n)}{\epsilon}\right)$ &No &  Yes & $\widetilde O(n^7)$ \\
        \hline         
        {\color{red}This paper} & {\color{red}\textbf{${{O}}\left(\frac{n\cdot\texttt{poly}(\log n)}{\epsilon} \right)$}} &  \textbf{Yes} & \textbf{Yes} & $\widetilde{O}(n)$\\
        \hline 
  
    \end{tabular}
  \end{table*}

We give a detailed comparison with other works in Table~\ref{t.1}. We only compare the results w.r.t. the error on $(S,V\backslash{S})$-cuts instead of $(S,T)$-cuts, 
 since two major results~\citep{blocki2012johnson, mcsherry2007mechanism} do not provide guarantees on $(S,T)$-cuts. 
 If an algorithm preserves all $(S,V\backslash{S})$-cuts with a purely additive error, then it also preserves all $(S,T)$-cuts with the same error because,  for any disjoint $S, T \subseteq V$, $\Phi_G(S,T)$ can be computed using $\Phi_G(S,V \setminus S) + \Phi_G(T,V \setminus T) - 2\Phi_G(S\cup T,V \setminus (S\cup T)).$ We also comprehensively summarize by comparing previous techniques in Appendix \ref{s.compare}.

\begin{remark} [On the scale invariance] \label{rem:scale}
     Note that the work of Eli{\'a}{\v{s}} et al. \cite{eliavs2020differentially} showed a $\sqrt{\epsilon^{-1}}$ dependency on $\epsilon$ for unweighted graph, while the dependency for weighted graph is $\epsilon^{-1}$ was recently shown in Liu et al. \cite{liu2023optimal}. Indeed, if we have a dependency on $W$ better than $\sqrt{W}$, we cannot expect to have a better dependency on $\epsilon$ than $\sqrt{\epsilon^{-1}}$, otherwise, we can scale up the weight and scale down after obtaining the answer, then we will get a more accurate approximation, which violates the scale invariance.
\end{remark}

\textbf{Optimally answering linear queries.}
One natural relaxation of spectral approximation is answering {\em linear queries} on a graph, where we view an $m$ edges graph as an ${n \choose 2}$ dimensional vector formed by the edge-weights, $G \in \mathbb R_{\geq 0}^{n \choose 2}$. We show that our algorithm can actually outputs a synthetic graph privately that can be used to answer any linear query formed by a vector $q \in [0,1]^{n \choose 2}$ with error $ \widetilde{O}\left({m }\right)$ in linear time. {Note that $(S,T)$-cut queries are special cases of linear queries $q\in [0,1]^N$ as it can be identified as a linear query $q_{(S,T)} \in \{0,1\}^{{n \choose 2}}$ whose $e$-th entry, 
$q_{(S,T)}[e]$ is $1$ only if the edge $e=(u,v) \in S \times T$.} By a packing argument, we also show that 
$\widetilde{O}(|E|)$ error is indeed optimal for answering all possible linear queries:

\begin{theorem}[Informal version of  \Cref{t.lb_on_sparse}]\label{t.intro_lb_on_sparse}
Fix $\epsilon>0$ and $0<\delta<1$. 
Let $\mathcal{M}:\mathbb{R}_{\geq 0}^N\rightarrow \mathbb{R}^N$ be an $(\epsilon, \delta)$ differentially private algorithm. 
Then, there exists a graph $G$, viewed as a vector in $\mathbb{R}^N$, such that
$\mathbb{E}_{\mathcal M}[Eval(G,\mathcal{M}(G))] = \Omega\left(\frac{(1-\delta)|E|}{e^{\epsilon}+1}\right).$ 
\end{theorem}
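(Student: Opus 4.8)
The plan is to reduce the $Eval$ lower bound to an $\ell_1$-reconstruction lower bound and then prove the latter by an averaging argument over all sub-graphs of a fixed sparse graph, applying a two-point privacy test coordinate by coordinate. First I would record the elementary fact that for any $G,\widehat G$ and any coordinate set $B$ one has $Eval(G,\widehat G)=\max_{q\in[0,1]^N}\lvert q^\top(G-\widehat G)\rvert\ge\tfrac12\sum_{e\in B}\lvert G[e]-\widehat G[e]\rvert$, obtained by restricting $q$ to $B$ and keeping whichever sign class carries the larger $\ell_1$-mass. Hence it suffices to exhibit a sparse graph $G$ on which every $(\epsilon,\delta)$-DP mechanism $\mathcal M$ satisfies $\sum_{e\in B}\mathbb E\lvert G[e]-\mathcal M(G)[e]\rvert=\Omega\bigl(|E|/(e^\epsilon+1)\bigr)$ for some $B$ with $|B|=\Omega(|E|)$. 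To that end, fix $m\le N$ and $m$ candidate edge slots $e_1,\dots,e_m$; for $S\subseteq[m]$ let $G_S\in\mathbb R^N_{\ge0}$ be the unit-weight graph supported on $\{e_i:i\in S\}$, so $|E(G_S)|=|S|$ and $G_S,G_{S\setminus\{i\}}$ are neighboring for each $i\in S$.

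The core step is a coordinate-level two-point test with threshold $c:=\tfrac{1-\delta}{3(e^\epsilon+1)}$: for every $S$ and $i\in S$, one cannot have both $\mathbb E_{\mathcal M(G_S)}\lvert\widehat G[e_i]-1\rvert<c$ and $\mathbb E_{\mathcal M(G_{S\setminus\{i\}})}\lvert\widehat G[e_i]\rvert<c$, because Markov's inequality turns these into $\Pr_{\mathcal M(G_S)}[\widehat G[e_i]>1/2]>1-2c$ and $\Pr_{\mathcal M(G_{S\setminus\{i\}})}[\widehat G[e_i]>1/2]<2c$, and the $(\epsilon,\delta)$-DP inequality for the neighboring pair $(G_S,G_{S\setminus\{i\}})$ applied to the event $\{\widehat G[e_i]>1/2\}$ then forces $1-\delta<2c(e^\epsilon+1)$, a contradiction. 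Calling $i$ \emph{good for $S$} when $\mathbb E_{\mathcal M(G_S)}\lvert\widehat G[e_i]-\mathbf 1[i\in S]\rvert<c$, the test says that in each of the $2^{m-1}$ pairs $\{T,T\cup\{i\}\}$ with $i\notin T$ at most one member has $i$ good; hence $\Pr_S[i\text{ good for }S]\le\tfrac12$ for a uniformly random $S\subseteq[m]$, so the expected number of good coordinates of a random $G_S$ is at most $m/2$ and the expected number of \emph{bad} ones is at least $m/2$. Picking $S^\ast$ with at least $m/2$ bad coordinates and applying the reduction with $B=\{e_i:i\text{ bad for }S^\ast\}$ gives $\mathbb E[Eval(G_{S^\ast},\mathcal M(G_{S^\ast}))]\ge\tfrac12\cdot\tfrac m2\cdot c=\Omega\bigl(\tfrac{(1-\delta)m}{e^\epsilon+1}\bigr)$, which since $|E(G_{S^\ast})|\le m$ is the claimed $\Omega\bigl(\tfrac{(1-\delta)|E(G_{S^\ast})|}{e^\epsilon+1}\bigr)$. (To also ensure $|E(G_{S^\ast})|=\Theta(m)$ I would first truncate $\mathcal M$ so that $Eval=O(m)$ always---DP-preserving post-processing that only decreases error---so that a constant fraction of $S$ have error $\Omega(mc)$, and then intersect with the $1-e^{-\Omega(m)}$ fraction of $S$ with $|S|\ge m/4$; this works once $m\gtrsim\log\tfrac{e^\epsilon+1}{1-\delta}$, with $m=1$ covering the rest.)

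The step I expect to be the genuine obstacle is forcing a \emph{single} graph in the family to be bad on a constant fraction of its coordinates simultaneously. A bare two-point argument (the empty graph versus a full one) is killed by group privacy, which inflates the $\Omega(m)$ separation into an $e^{\epsilon m}$ factor; and testing a fixed graph only against its $m$ one-edge-deleted neighbors yields merely $\Omega(1/(e^\epsilon+1))$, since each such neighbor is ``easy'' on all but one coordinate. It is precisely the pairing/averaging bookkeeping over all $2^m$ subsets that upgrades the per-coordinate $\Theta(1/(e^\epsilon+1))$ bound into an aggregate $\Omega(|E|/(e^\epsilon+1))$ bound at no group-privacy cost, and making that bookkeeping---together with the ancillary truncation keeping $G_{S^\ast}$ sparse-but-not-tiny---fit together cleanly is the technical heart of the argument.
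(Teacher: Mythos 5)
Your proposal is correct, and its skeleton matches the paper's: both reduce $Eval$ to an $\ell_1$ bound by splitting coordinates by the sign of $G[e]-\widehat G[e]$ (the paper's $q_+,q_-$ queries), and both drive the per-coordinate cost from a two-point $(\epsilon,\delta)$-DP test on a neighboring pair differing by a unit weight on one edge, with a threshold at $1/2$ and Markov's inequality, yielding the same $\frac{1-\delta}{e^\epsilon+1}$ quantity. Where you genuinely diverge is the aggregation step. The paper (Theorem \ref{t.lb_on_sparse} and Lemma \ref{l.lb_on_l1norm}) fixes one graph $G$ and its coordinate-wise neighbors $G^{(i)}$ with $G^{(i)}[i]=G[i]+1$, shows that for each $i$ the two per-coordinate accuracy events cannot both hold with failure probability below $\approx\frac{1-\delta}{e^\epsilon+1}$, and then sums these failure probabilities over all $i$ as if they were all incurred on the same input $G$ (first in the dense case $|E|=N$, then handling sparsity by embedding a dense instance on $\Theta(|E|)$ ``borrowed'' dimensions). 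As you observe, that dichotomy by itself only says that, for each $i$, \emph{one of} $G,G^{(i)}$ errs at coordinate $i$; your averaging over all $2^m$ subgraphs $G_S$ of a fixed $m$-edge support, with the pairing $\{T,T\cup\{i\}\}$, is the standard device that attributes $\Omega(m)$ bad coordinates to a single input $G_{S^\ast}$ at no group-privacy cost, so your route makes rigorous precisely the step the paper treats loosely, at the price of the hypercube bookkeeping the paper avoids. Two small remarks: the parenthetical truncation is unnecessary (the number of bad coordinates is trivially at most $m$, so reverse Markov on that count plus a Chernoff bound on $|S|$ already gives an $S^\ast$ with $\Omega(m)$ bad coordinates and $|S^\ast|=\Theta(m)$), and coordinate-wise projection of $\widehat G$ onto $[0,1]$ would in any case bound only the on-support $\ell_1$ error, not $Eval$, since $Eval$ also sees the $N-m$ off-support coordinates; neither point affects the validity of your main argument, since the stated conclusion already follows from $|E(G_{S^\ast})|\leq m$.
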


\subsubsection{Extension to continual observation.}
In Section \ref{sec:contious_observation}, we develop an efficient framework for private cut approximation under continual observation using the {\em binary mechanism}~\citep{chan2011private} and recent improvements~\citep{chen2017pegasus, cao2018quantifying, henzinger2022constant,henzinger2023almost, henzinger2023unifying,jain2021price}. The main barrier in practice that prohibits transforming previous static graph algorithms into the setting under continual observation is the update-time overhead. For example, in the binary mechanism, we have to resample a synthetic graph from the partial sum of some updates at every round in $[T]$. This means that the running time per iteration will be at least $O(n^2)$ for previous static algorithms on cut approximation \citep{dwork2014analyze}. 
We significantly reduce the update-time overhead with our algorithms, thanks to their sparsity-preserving property and linear running time. In the binary mechanism, given $T\in \mathbb{N}_+$ (the number of updates), there are at most $\lceil \log_2 T \rceil$ layers, and the $i$-th layer (from bottom to top) contains $T/2^i$ graphs where each graph has at most $2^i$ edges (each update is a leaf in the computation tree). By the time guarantee of our linear time algorithms (see Algorithm \ref{alg2}, for example), outputting a graph with $m$ edges needs time $O(m)$. Thus, to compute all synthetic graphs corresponding to the nodes in all layers, the running time will be
$\sum_{i=0}^{\lceil \log_2 T\rceil} O(2^i) \cdot {T \over 2^i} = O(T\log_2 T).$ 
Since there are $T$ rounds in total, then the amortized run-time for each round is $O(\log T)$. 
In particular, we show that if $T = \text{poly}(n)$, then there is a $(\epsilon,\delta)$-differentially private algorithm under continual observation such that with only $O(\log n)$ amortized run-time each round, it approximates the size of all $(S,T)$-cuts at each round with error at most $\widetilde{O}(m/\epsilon\cdot (\log T\log (n/\delta))^{1.5})$, where $m$ is the number of edges in the final state.

\section{Technical Overview}\label{app:tech_overview}
Here, we discuss the technical ingredients behind our linear time algorithms in detail. Our two algorithms that achieve additive error in linear run-time come from simple intuitions described next:  (i) efficient sampling from the exponential mechanism and (ii) high pass filter.

\paragraph{(1) Efficient sampling from exponential mechanism} Our first sampling-based algorithm is based on simulating the exponential mechanism with a proper score function and is defined over a non-convex set. To the best of our knowledge, {\em this is the first algorithm that simulates the exponential mechanism over a non-convex set in almost linear time and linear space} and might be of independent interest. 

For the exposition, first, we consider an input graph $G \in \mathbb{N}^{n\choose 2}$ with $m$ edges, each with a non-negative integer weight bounded by $W\in \mathbb{N}$. If $m$ and $W$ are publicly known, then there are at most $Q = {N \choose m}\cdot W^{m}$ (where $N = {n\choose 2}$) ways to re-arrange edges and their weights, each of them is a possible output. We consider each of them as a candidate, and sample one of the candidates according to the exponential mechanism, with the scoring function being the maximum error in  $(S,T)$-cuts. This scoring function is $1$-Lipchitz, so privacy follows from that of the exponential mechanism. The utility follows from the standard analysis and the error is $O(\log (Q)) = O(m\cdot \log(Wn^2))$~\citep{dwork2014algorithmic}. For graphs in $\mathbb{R}_{\geq 0}^N$ with real-valued weights, a continuous version of the exponential mechanism gives a similar utility bound.

The above algorithm has two issues: it is computationally infeasible and the error has a dependency on $W =\|G\|_\infty$. We resolve both  issues in two steps: 
\begin{enumerate}
    \item  Sample and publish a topology (i.e., a new edge set) according to a predefined distribution.
    \item Re-weigh the edges in the new edge set.
\end{enumerate}

Step (a) can be regarded as sampling a topology according to the distribution induced by the exponential mechanism. Liu et al. \cite{liu2023optimal} shows that one can find a distribution of topologies that are computationally efficient to sample from while preserving both privacy and utility. In particular, they sample a new edge set $S$ of size $k$ ($k\geq |E|$) with probability proportional to 

\begin{equation}\label{e.distribution_intro}
  \forall S \subseteq [N] \text{ and } |S| = k, Pr[S] \propto \prod_{e\in S}\exp(\epsilon\cdot w_e).
\end{equation}
Compared to naively applying the exponential mechanism, sampling from this distribution removes the $\log (W)$ dependency on the maximum weight $W$. Liu et al. \cite{liu2023optimal} showed that it takes $O(mn^2)$ time to exactly sample from the distribution in \Cref{e.distribution_intro}. In this paper, we show that using approximate sampling (which results in $\delta \neq 0$), this can be done in linear time.

To get a linear time algorithm that samples from such distribution, we use the Markov Chain Monte Carlo (MCMC) method, specifically, the \textit{basis-exchange walk} on all edge sets with size $k$. We show that the distribution defined in \Cref{e.distribution_intro} is  {\em strongly log-concave}. Using standard modified log-Sobolev inequality for strongly log-concave distributions~\citep{cryan2019modified, anari2021log}, the basis-exchange walk is rapidly mixing, and we get an almost linear time approximate sampler. To be more precise, we introduce some useful definitions and facts that help us analyze the sampling subroutine. We consider the task of sampling a configuration from the two-spin system $\{0,1\}^N$ (namely a subset $S$ of $[N]$) according to a distribution $\pi$, where $N = {n\choose 2}$. One of the methods to characterize distribution $\pi$ is to use the generating polynomial, where 
$$\text{for any } z\in \mathbb{R}^N, \quad g_{\pi}(z) := \sum_{S\in [N]}\pi(S)\prod_{i\in S}z_i.$$
If the degree of the polynomial $g(\cdot):\mathbb{R}^N\rightarrow \mathbb{R}$ of each term is a constant $k$, then we say $g$ is $k$-homogeneous. If the degree of each variable $z_1,\cdots, z_N$ is at most $1$ in $g(\cdot)$, then we say $g$ is {\em multiaffine}. Clearly for any distribution $\pi$ on $\{0,1\}^N$, $g_\pi(\cdot)$ is multiaffine. An important property we are interested in is log-concavity. A polynomial $g(\cdot)$ with non-negative coefficients is log-concave if its logarithm is concave at $z\in \mathbb{R}_{\geq 0}^N$, i.e., the Hessian of $\log g$, $\nabla ^2\log g := (\partial_i \partial_j \log g)_{ij}$ is negative semi-definite for any $z\in \mathbb{R}_{\geq 0}^N$. Here, $\partial_i g$ is the partial derivative of $g$ with respect to its $i$-th coordinate. Another equivalent condition of log-concavity is that for any $z,y \in \mathbb{R}_{\geq 0}^N$ and $\lambda\in (0,1)$, 
$g(z + (1-\lambda)y) \geq g(z)^\lambda g(y)^{1-\lambda}.$

\noindent We need a stronger condition than log-concavity.  We say $g$ is \emph{strongly log-concave} if for any $I\in [N]$ and $I = \{i_1, \cdots, i_k\}$, $\partial_{i_1}\cdots\partial_{i_k} g$ is log-concave at $z\in \mathbb{R}^N_{\geq 0}$.

Given $k\in \mathbb{N}$ and $k\leq N$, basis-exchange walk is a commonly used Markov chain Monte Carlo method to sample one of the ${N\choose k}$ subsets of $[N]$ according to some given distribution $\pi$. We define the basis-exchange walk by the following procedure:

   \begin{itemize}
    \item Initialize $S_0  \subset [N]$ such that  $|S_0| = k$.
    \item At time $t = 1,2,\cdots,\infty$:
    \begin{enumerate}
      \item Choose an $e\in S_{t-1}$ uniformly at random, let $S_t' = S_{t-1}\backslash \{e\}$.
      \item Choose an element $y$ in $[N]\backslash S_t'$ with probability $\propto \pi(S_t'\cup \{y\})$, let $S_{t} = S_t'\cup \{y\}$. 
    \end{enumerate}
    \end{itemize}

It has been known that the basis-exchange walk enjoys rapid mixing if the target distribution is strongly log-concave (see also Lemma \ref{l.mixing}).

\begin{remark}
  By the analysis of privacy in Section \ref{sec:privacy} and Theorem \ref{t.utility}, one can find that our technique also gives an almost linear time algorithm for approximately simulating the exponential mechanism which samples from exponentially many candidates (each represented by a vector in $\mathbb{R}^{N}$ for some $N\in \mathbb{N}$) with $\ell_1$ norm as its scoring function under a $k$-sparsity ($k \leq N$) constraint. This results in an almost linear time $(\epsilon,\delta)$-differentially private algorithm for this task. We believe such auxiliary property of our technique is of independent interest.
\end{remark}

\paragraph{(2) High-pass filter} Although it has not yet been applied to private graph analysis, for sparse datasets, by adding Laplace noise to its histogram and filtering the output by setting a threshold, it is possible to obtain a linear additive error in terms of $\ell_1$ norm~\citep{cormode2012differentially}. If we apply such ideas to the graph settings, then there is a natural algorithm that outputs a synthetic graph while preserving differential privacy and takes $O(n^2)$ time: 

\begin{enumerate}
  \item Add independent Laplace noise from $\texttt{Lap}(1/\epsilon)$ on each pair of vertices. 
  \item Set $t =  \frac{c\log (n)}{\epsilon}$ for some large enough constant $c$. 
  \item For each $1\leq e \leq {n\choose 2}$, if the perturbed weight $\widehat{w}_e\leq t$, then set $\widehat{w}_e = 0$.
\end{enumerate}
We call this strategy the ``naive filtering algorithm''. By an elementary argument in Section \ref{sec:high_pass}, we see that this simple scheme achieves linear error on approximating cut size. 

Now if we set the threshold large enough, then with high probability, if the edge $e$ was not present in $G$, then it will be filtered out. This allows one to get an expected linear time algorithm. Formally, if the number of edges, $m$, is public, then one can first sample an integer $k$ from the binomial distribution $\texttt{Bin}(q,p)$, where $q = {n\choose 2} - m$ is the number of edges with zero weight and $p$ be the probability that $z\sim \texttt{Lap}(1/\epsilon)$ exceeds the threshold. 
Cormode et al. \cite{cormode2012differentially} applied this strategy to approximate the histogram of some given dataset, instead of approximating the size of $(S,T)$-cuts. 
However, since $k$ is a random variable, such a scheme only provides an algorithm that runs in expected linear time and it cannot guarantee to preserve the $m$-sparsity of the output graph with probability $1$, two major requirements in large-scale graph analysis. 

To address this issue, we propose a truncated version of the filtering algorithm which only relies on two simple modifications:
(a) Given $\epsilon,\delta$, change the threshold $t$ from $\frac{c\log (n)}{\epsilon}$ to $\frac{2c\log (n/\delta)}{\epsilon}$, and (b) project the naive filtering algorithm to the edge set $E$.

\noindent We show in \Cref{sec:high_pass_proofs} 
that this scheme provides a differentially private algorithm that always terminates in linear time while achieving linear error on approximating the size of all $(S, T)$-cuts.

\section{Private spectral and cut approximation}
\label{s.exp_linear}
This section presents our algorithms for differentially private cut and spectral approximation. For the sake of exposition, as in \cite{aumuller2021differentially, cormode2012differentially, eliavs2020differentially, gupta2010differentially}), we first assume the number of edges, $m=|E|$ is publicly known. However, in analyzing our theorems (Appendix \ref{app:proof_of_main}), we analyze where $|E|$ is confidential. Notably, these two assumptions have no fundamental barrier since we can add a Laplace noise on $|E|$ and run existing algorithms on the perturbed value, which only incurs a small error. 

\subsection{$\widetilde{O}(m)$ time algorithm using exchange walk}\label{sec:exchange_walk}

The idea of our first algorithms is based on Lemma \ref{lem:main}, which promises a linear time algorithm on sampling sparse candidates. In particular, we consider all possible typologies with $\widetilde{O}(m)$ edges as candidates. For any weighted graph $G = ([n],E,w)$, the algorithm can be divided into two stages:
  {(1)} Approximately sample a topology $\widehat{E}$, i.e., a subset of $[N]$ according to some predefined distribution with respect to $G$.   
  {(2)} Publish the weights of edges in $\widehat{E}$ by the Laplace mechanism.
In particular, we define $\pi$ be the distribution over the topology of graphs defined as following:
\begin{equation}\label{eq:distribution}
    \forall S \in 2^{[N]} \text{ and } |S| = k, \pi[S]\propto \prod_{e\in S} \exp(\epsilon\cdot w_e).
\end{equation}
We explain in Appendix \ref{sec:proof_main} this distribution is exactly the distribution defined in Lemma \ref{lem:main}, if we specify $x$ by the edge weights of the input graph. In stage 1, we sample an $S\subseteq [N]$ of size $|E|$ according to $\pi$. Intuitively, edges with larger weights should be more likely to be included. When $\epsilon = 0$, sampling from the distribution $\pi$ is equivalent to choosing an $S\in {[N]\choose |E|}$ uniformly at random, which satisfies perfect privacy.
After that, we let the chosen $|E|$ pair of vertices be the edge set $\widehat{E}$, and add Laplace noise on each edge in $\widehat{E}$ according to their weight in $G$. This results in a synthetic graph with exactly $|E|$ edges, which gives probability $1$ bound on the sparsity of the output graph.

We present the algorithm in Algorithm \ref{alg1_simple}.

\begin{algorithm}[h]
	\caption{{Private cut approximation by $T$ steps of basis-exchange walk}}\label{alg1_simple}
	\KwIn{A graph $G\in \mathbb{R}^N_{\geq 0}$, privacy budgets $\epsilon$, $\delta$.}
	\KwOut{A synthetic graph $\widehat{G}$.}
    Let $S_0 \leftarrow E$\;
    Set $T\leftarrow O(|E|\cdot (\epsilon + \log n + \log (1/\delta)))$\;
    \For{$t = \{1,2,\cdots, T\}$}{
      Choose an $e\in S_{t-1}$ uniformly at random and update $S_t' \leftarrow S_{t-1}\backslash \{e\}$\;
      Choose an element $x$ in $[N]\backslash S_t'$ with probability $\propto \pi(S_t'\cup \{x\})$, let $S_{t} \leftarrow S_t'\cup \{x\}$\;
    }
    Let $\widehat{E}\subset[N]$ be the edge set corresponding to $S_T$ \;
    \For{$e\in \widehat{E}$}{
        Draw an independent Laplace noise $Z \sim \lap(1/\epsilon)$ and set 
        $w_e  \leftarrow \max \{0, w_e + Z$\} \;
    }
    \For{$e'\in [N] \land e' \notin \widehat{E}$}{
        $w_{e'} \leftarrow 0$ \;
    }
    \Return{$\widehat{G} = (V,\widehat{E})$}.
\end{algorithm}

Our algorithm matches the non-private setting not only in running time for outputting the synthetic graph but also for post-processing. The condition $|\widehat{E}| = O(|E|)$ means that for any linear queries $q\in [0,1]^N$, we actually need only $O(|E|)$ time to compute the answer by $\widehat{G}$, instead of $O(n^2)$ time. This matches the running time in the non-private setting. The following theorem describes the privacy, utility, and resource usage of Algorithm \ref{alg1_simple}:

\begin{theorem}\label{t.main}
  Fix any $\epsilon>0$ , $\delta\in (0,1)$  and $\beta>0$.  There is an $(\epsilon,\delta)$-differentially private algorithm, such that on input an $n$ node $m$ vertices weighted graph $G = (V,E)$, it runs in $O(m\log(n/\delta)\log n)$ time, $O(m)$ space, and outputs a $\widehat{G} = (V,\widehat{E})$ with exactly $m$ edges such that, with probability at least $1-2\beta - \frac{\delta}{1+e^\epsilon}$, $Eval(G,\widehat{G})= {O}\left(\frac{{m\log (n/\beta)}}{\epsilon}\right).$ Further, we have $$\left\|L_G-L_{\widehat{G}}\right\|_2 = O\left(\frac{d_{\mathsf{max}}\cdot \log(n)}{\epsilon} + \frac{\log(1/\beta)\log^2n}{n\epsilon^2}\right).$$
\end{theorem}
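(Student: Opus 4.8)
The plan is to decompose the analysis into three independent pieces: privacy, the linear-query/cut utility bound, and the spectral bound, plus the running-time and space accounting. For \textbf{privacy}, I would argue that Algorithm \ref{alg1_simple} is a composition of two mechanisms acting on disjoint parts of the output: the topology sampler (Steps 1--6) and the Laplace reweighting (Steps 7--10). The topology sampler is an approximate simulation of the exponential-mechanism-style distribution $\pi$ of \eqref{eq:distribution}; since $\pi[S]\propto\prod_{e\in S}\exp(\epsilon w_e)$ has the property that changing one edge weight by at most $1$ changes each factor by at most $e^{\epsilon}$, the exact sampler would be $O(\epsilon)$-DP (the maximum-cut scoring function is $1$-Lipschitz), and running $T = O(m(\epsilon+\log n+\log(1/\delta)))$ steps of the basis-exchange walk produces a distribution within total variation $\delta/(1+e^{\epsilon})$ of $\pi$ by the rapid-mixing lemma (Lemma \ref{l.mixing}) applied to the strongly log-concave generating polynomial of $\pi$; this converts exact $\epsilon'$-DP into $(\epsilon',\delta')$-DP by the standard approximate-DP-from-TV argument. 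Conditioned on the topology $\widehat E$, adding $\lap(1/\epsilon)$ noise to each weight and truncating at $0$ is $\epsilon$-DP by the Laplace mechanism and post-processing. Basic composition then gives $(O(\epsilon),\delta/(1+e^{\epsilon}))$-DP, and I would rescale $\epsilon$ by the appropriate constant to land on the clean $(\epsilon,\delta)$ statement.

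For the \textbf{linear-query utility}, fix the random topology $\widehat E$ and decompose $q^\top G - q^\top\widehat G = \sum_{e\in E\setminus\widehat E} q_e w_e - \sum_{e\in\widehat E} q_e Z_e + (\text{truncation correction})$ where the $Z_e$ are the Laplace draws. The first term is controlled by showing that $\widehat E$ misses only few heavy edges: since $\pi$ favors edge sets containing high-weight edges, with probability $1-\beta$ the total weight $\sum_{e\in E\setminus\widehat E} w_e$ of discarded edges is $O((m/\epsilon)\log(n/\beta))$ — this is where I expect to invoke the structural properties of $\pi$ established for Lemma \ref{lem:main}, namely that the marginal probability of dropping an edge decays with its weight. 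The second term is a sum of $m$ truncated Laplace variables each bounded by $O((1/\epsilon)\log(m/\beta))$ in absolute value with probability $1-\beta$, and after a union bound (or a Bernstein-type concentration, but the crude union bound already suffices for the stated $O((m/\epsilon)\log(n/\beta))$ target) we get the claimed $Eval$ bound. Taking a union over the two failure events and the TV-distance slippage gives the success probability $1-2\beta-\delta/(1+e^{\epsilon})$.

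For the \textbf{spectral bound}, I would use the variational characterization $\|L_G - L_{\widehat G}\|_2 = \max_{\|v\|_2=1} |v^\top(L_G-L_{\widehat G})v|$ and the identity $v^\top L_H v = \sum_{(u,w)\in E_H} H[u,w](v_u - v_w)^2$. Splitting again into the "missing heavy edges" part and the "Laplace noise on $\widehat E$" part: for the first, each missing edge $e=(u,w)$ contributes at most $w_e(v_u-v_w)^2 \le 2w_e(v_u^2+v_w^2)$, and summing over a vertex we pick up at most $d_{\mathsf{max}}$ missing edges per vertex each of small weight, giving the $O(d_{\mathsf{max}}\log(n)/\epsilon)$ term; the Laplace part is a random symmetric matrix supported on $\widehat E$ with i.i.d.\ entries of scale $1/\epsilon$ and row-sparsity $d_{\mathsf{max}}$, so a matrix-Bernstein / Matrix-Chernoff bound (as in the spectral analysis underlying Table \ref{table.spectral}) yields operator norm $O(d_{\mathsf{max}}\log n/\epsilon)$ with the lower-order correction $O(\log(1/\beta)\log^2 n/(n\epsilon^2))$ coming from the concentration tail. \textbf{The main obstacle} I anticipate is the first utility step — quantifying precisely how much total edge-weight the approximate sampler throws away — since this requires translating the strong log-concavity / mixing guarantee for $\pi$ into a tail bound on $\sum_{e\notin\widehat E} w_e$ rather than just a statement about a single marginal; I would handle it by bounding each edge's exclusion probability via the ratio of partition functions of $\pi$ restricted to edge sets avoiding $e$ versus all edge sets, and then summing, absorbing the TV error from finite mixing at the end. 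The running time is immediate: $T = O(m\log(n/\delta)\log n)$ walk steps (the extra $\log n$ from implementing each step's weighted sampling over a dynamically maintained structure in $O(\log n)$ time), and $O(m)$ space to store $S_t$ and $\widehat G$.
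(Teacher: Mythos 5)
Your plan follows the paper's proof quite closely in its overall architecture: privacy is argued exactly as in the paper (the exact sampler from $\pi$ in \eqref{eq:distribution} is $O(\epsilon)$-DP, strong log-concavity of its generating polynomial gives rapid mixing of the basis-exchange walk, the total-variation gap $\delta/(e^{\epsilon}+1)$ is converted to approximate DP, and one composes with the Laplace step and rescales), and utility is argued by the same triangle inequality $\|G-\widehat G\|_1 \le \|G-(G|\widehat E)\|_1 + \|(G|\widehat E)-\widehat G\|_1$ with a max-of-Laplace tail bound and the TV slippage absorbed into the failure probability. Where you genuinely diverge is the key ``missed weight'' step: you propose bounding each edge's exclusion marginal via a ratio of partition functions and summing, whereas the paper bounds the $\pi$-mass of the entire bad event $\{S : \|(G|S)-G\|_1 \ge t\}$ in one shot, comparing it against the single topology containing $E$ and paying only a $\binom{N}{k}$ counting factor, which immediately yields the high-probability bound $t = O(k\log(n/\beta))$. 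Be aware that summing marginal exclusion probabilities, as literally stated, only controls the \emph{expected} missed weight; to reach the stated high-probability bound along your route you would need the extra twist that, with probability $1-\beta$, every edge of weight at least $\Omega(\log(Nm/\beta)/\epsilon)$ is retained (the exclusion probability of an edge $e$ is at most $N e^{-\epsilon w_e}$ by an exchange argument), and then at most $m$ edges, each below this threshold, can be missed --- after which the two arguments give the same bound. For the spectral claim your route is also different: the paper does not reprove it but invokes the analysis of Liu et al.\ given TV-closeness of the topology distribution, while you sketch a direct split into a missing-edge part and a matrix-concentration bound for the noise. That sketch is plausible but glosses over two points you would need to handle: the noise matrix is not a centered i.i.d.\ perturbation of $E$, since $\widehat E$ may contain up to $m$ fresh edges outside $E$ whose weights $\max\{0,Z\}$ carry a positive bias of order $1/\epsilon$, and the unweighted row sparsity of $\widehat E$ is not a priori $d_{\mathsf{max}}$ (it is only so up to the, typically well-spread, fresh edges); the specific lower-order term $\log(1/\beta)\log^2 n/(n\epsilon^2)$ is asserted rather than derived. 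None of this is a fatal gap, but closing your flagged obstacle requires either the threshold-plus-union-bound upgrade above or the paper's cleaner global counting argument.
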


\subsection{$\widetilde{O}(m)$-time algorithm using high-pass filter}\label{sec:high_pass}

In this section, we propose a new and simpler linear time algorithm while achieving a slightly worse performance (in terms of the dependency on $\delta$) compared to the algorithm in \Cref{sec:exchange_walk}. The idea is to use a large enough threshold to ``filter'' the edges with zero weights so that we do not have to add noise on all $O(n^2)$ pairs of vertices. A similar idea is applied in Cormode et al. \cite{cormode2012differentially} to publish a histogram of datasets; however, even if $|E|$ is public, the filtering algorithm in Cormode et al. \cite{cormode2012differentially} does not promise to keep $|E|$-sparsity, and the running time of their algorithm is linear time only in expectation. To resolve both issues, in Algorithm \ref{alg2}, we make two changes to the  filtering algorithm in \cite{cormode2012differentially}: for $\delta\in (0,1)$,  we use the threshold $\frac{c\log(2n/\delta)}{\epsilon}$ instead of $\frac{c\log(n)}{\epsilon}$ and run the filtering algorithm only on  $E$.

\begin{algorithm}[h]
	\caption{{Releasing a synthetic graph based on filtering algorithm}}\label{alg2}
	\KwIn{A graph $G\in \mathbb{R}^N_{\geq 0}$, privacy budgets $\epsilon$, $\delta$.}
	\KwOut{A synthetic graph $\widehat{G}$.}

    Set the threshold $t = \frac{2\log(2n/\delta)}{\epsilon}$\;
    \For{$e\in E$}{
      Draw an independent Laplace noise $Z \sim \lap(1/\epsilon)$  and set $\widehat{w}_e\leftarrow {w_e + Z}$ \;
      If $\widehat{w}_e \leq t$, then set $\widehat{w}_e \leftarrow 0$.
      
    }
    Let $\widehat{E} = \{e\in [N]~:~\widehat{w}_e >0\}$ be the new edge set \;

    \Return{$\widehat{G} = (V,\widehat{E})$}.

\end{algorithm}

Clearly, Algorithm \ref{alg2} runs in linear time and linear space, and it preserves exact $|E|$-sparsity with probability $1$. A less obvious fact is that Algorithm \ref{alg2} also preserves $(\epsilon,\delta)$-differential privacy, since we run the filtering algorithm only on the confidential edge set. Intuitively, if $G'$ is a neighboring graph of $G$ with an extra edge with a weight less than $1$, then this edge will be filtered with high probability (decided by $\delta$). Conditioned on that edge being filtered, the output distribution will be the same for both $G$ and $G'$. Theorem \ref{t.ana_on_alg2} proved in \Cref{sec:high_pass_proofs} formalizes this intuition.

\begin{theorem}\label{t.ana_on_alg2}
  For any $\epsilon>0$ and $\delta\in (0,1)$, Algorithm \ref{alg2} preserves $(\epsilon,\delta)$ differential privacy and, with probability at least $1-\delta$, outputs a $\widehat{G}$ such that 
  $Eval(G,\widehat{G}) =  O\left(\frac{|E|\log (n/\delta)}{\epsilon}\right).$
\end{theorem}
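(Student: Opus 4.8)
The plan is to split the claim into two parts: the privacy guarantee and the utility guarantee, and to treat them independently.

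\emph{Privacy.} The key observation is that Algorithm \ref{alg2} only ever touches the coordinates in the (confidential) edge set $E$, so the output distribution depends on $G$ in two ways: through the choice of which coordinates get a Laplace perturbation, and through the magnitudes $w_e$ on those coordinates. First I would fix neighboring graphs $G, G'$ and dispose of the easy case, where $E(G) = E(G')$ and the two graphs differ by at most $1$ in a single common coordinate: here the output is exactly the truncated-Laplace mechanism applied coordinatewise on a fixed support, so the standard Laplace-mechanism analysis (sensitivity $1$, noise scale $1/\epsilon$) gives $\epsilon$-DP with $\delta = 0$, and post-processing by the threshold does not hurt. The hard case is when $G'$ has one extra edge $e_0$ with $0 < G'[e_0] \le 1$ and $G[e_0]=0$. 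Here I would condition on the event $\mathcal{F}$ that the noisy weight $\widehat{w}_{e_0}$ falls below the threshold $t$, so that $e_0$ is filtered out. On $\mathcal{F}$, the conditional output distributions of $\mathcal{A}(G)$ and $\mathcal{A}(G')$ are \emph{identical} (the remaining coordinates are in $E(G) = E(G') \setminus \{e_0\}$ and are perturbed identically). It remains to bound $\Pr[\neg\mathcal{F}]$: since $\widehat{w}_{e_0} = G'[e_0] + Z$ with $Z \sim \texttt{Lap}(1/\epsilon)$ and $G'[e_0] \le 1$, a direct Laplace-tail computation gives $\Pr[\widehat{w}_{e_0} > t] \le \tfrac12 e^{-\epsilon(t-1)} = \tfrac12 e^{-\epsilon}\cdot(\delta/2n)^2 \le \delta$ for the chosen $t = \tfrac{2\log(2n/\delta)}{\epsilon}$. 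Then for any event $S$, $\Pr[\mathcal{A}(G') \in S] \le \Pr[\mathcal{A}(G') \in S \mid \mathcal{F}] + \Pr[\neg\mathcal{F}] = \Pr[\mathcal{A}(G)\in S \mid \mathcal{F}] + \delta \le e^{\epsilon}\Pr[\mathcal{A}(G)\in S] + \delta$, where the last step uses the easy-case bound (the filtered coordinates of $G$ contribute a factor $e^\epsilon$ at most via the Laplace noise on the common support). The symmetric direction is simpler since removing an edge can only decrease probabilities up to the $e^\epsilon$ Laplace factor.

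\emph{Utility.} Here I would first show that with probability at least $1-\delta$, \emph{every} noise magnitude is small: by a union bound over the $|E| \le N \le n^2$ edges and the Laplace tail, $\Pr[\max_{e \in E} |Z_e| > \tfrac{\log(2n^2/\delta)}{\epsilon}] \le |E| \cdot \tfrac{\delta}{2n^2}\le \delta$ (absorbing constants), so on this good event every $|Z_e| = O(\log(n/\delta)/\epsilon)$, and in particular every edge of $G$ survives the threshold $t$ up to an additive shift, and no edge's weight is distorted by more than $t + O(\log(n/\delta)/\epsilon) = O(\log(n/\delta)/\epsilon)$. On the good event, for any query $q \in [0,1]^N$ we have $|q^\top G - q^\top \widehat G| = |\sum_{e \in E} q_e (G[e] - \widehat w_e)|$, and each term is at most $|Z_e| + t \le O(\log(n/\delta)/\epsilon)$ in absolute value (the thresholding either leaves $\widehat w_e = w_e + Z_e$, in which case the error is $|Z_e|$, or sets $\widehat w_e = 0$, which by $\widehat w_e \le t$ means $w_e \le t + |Z_e|$, so the error is at most $t + |Z_e|$); summing over the at most $|E|$ nonzero terms and using $q_e \le 1$ gives $Eval(G,\widehat G) \le |E| \cdot O(\log(n/\delta)/\epsilon) = O(|E|\log(n/\delta)/\epsilon)$.

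\emph{Main obstacle.} The delicate point is the privacy argument in the extra-edge case: one must be careful that the conditioning event $\mathcal{F}$ is defined purely in terms of the noise on $e_0$ (which is independent of all other noise), so that conditioning on $\mathcal{F}$ genuinely leaves the distribution on the remaining coordinates untouched and lets the two conditional output laws coincide exactly; and one must verify that the filtered-weight tail bound and the common-support Laplace factor compose to give $\delta$ and $e^\epsilon$ respectively rather than, say, $2\delta$ or $e^{2\epsilon}$. This is where the precise choice of threshold $t = \tfrac{2\log(2n/\delta)}{\epsilon}$ (with the factor $2$ and the $2n$ inside the log) is used, and getting the constants to line up is the crux; the utility side is a routine union bound.
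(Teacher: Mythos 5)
Your proposal follows essentially the same route as the paper's proof: the privacy argument splits neighboring graphs into the extra-edge case (condition on the event that the new edge of weight at most $1$ is filtered, which the choice $t=\tfrac{2\log(2n/\delta)}{\epsilon}$ makes fail with probability at most $\delta$, and note the conditional outputs then coincide via the Laplace mechanism and post-processing on the common support) and the common-support case, while the utility argument is the same union bound over Laplace tails giving a per-edge error of $O(\log(n/\delta)/\epsilon)$ on the at most $|E|$ coordinates where $G$ and $\widehat G$ can differ, hence $Eval(G,\widehat G)\le\|G-\widehat G\|_1 = O(|E|\log(n/\delta)/\epsilon)$. The only blemish is a sign slip in your tail computation, $\tfrac12 e^{-\epsilon(t-1)}=\tfrac12 e^{+\epsilon}(\delta/2n)^2$ rather than $\tfrac12 e^{-\epsilon}(\delta/2n)^2$, which does not affect the conclusion in the parameter regime where the paper's own filtering bound is valid.
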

Algorithm \ref{alg2} also preserves the spectrum of the original graph, namely the quadratic form of $L_G$, where $L_G\in \mathbb{R}^{n\times n}$ is the Laplacian of $G$. It is noteworthy that the analysis on spectral approximation is not as straightforward as Algorithm \ref{alg2}'s appearance might suggest (see \Cref{sec:high_pass_proofs} for a proof).
\begin{theorem}\label{t.spectral}
  For any $G=([n],E)$ with bounded maximum unweighted degree $d_{\mathsf{max}}\leq n-1$, with high probability, Algorithm \ref{alg2} outputs a $\widehat{G} = (V,\widehat{E})$ such that for all $x\in \mathbb{R}^n$ with $\|x\|_2 \leq 1$, 
  $|x^\top L_Gx - x^\top L_{\widehat{G}}x| =  O\left(\frac{d_{\mathsf{max}}\log(n/\delta)}{\epsilon}\right).$
\end{theorem}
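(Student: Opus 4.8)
The plan is to express the Laplacian difference as the Laplacian of a single \emph{signed} weighted graph supported on the true edge set $E$, to control its edge weights by a union bound over the Laplace noise, and then to dominate its quadratic form by that of the unweighted skeleton of $G$, whose largest eigenvalue is $O(d_{\mathsf{max}})$.

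First, since $\widehat E\subseteq E$, both $G$ and $\widehat G$ are supported on $E$, so I would write $L_G-L_{\widehat G}=L_H$, where $H$ is the signed weighted graph on edge set $E$ with weights $h_e:=w_e-\widehat w_e$; explicitly, $h_e=-Z_e$ for the edges $e$ that survive the threshold (so $e\in\widehat E$), and $h_e=w_e$ for the edges $e\in E\setminus\widehat E$ that get zeroed out. Because $L_G-L_{\widehat G}$ is symmetric, $\|L_G-L_{\widehat G}\|_2=\max_{\|x\|_2\le 1}|x^\top L_Hx|$ and $x^\top L_Hx=\sum_{e=(u,v)\in E}h_e(x_u-x_v)^2$, so it suffices to bound this sum uniformly over all unit $x$ on a single good event.

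Next I would bound the weights $\{h_e\}$. Let $\mathcal E$ be the event that $|Z_e|\le \tfrac{1}{\epsilon}\log(2|E|/\delta)$ for every $e\in E$; the Laplace tail bound $\Pr[|Z_e|>a]=e^{-\epsilon a}$ and a union bound give $\Pr[\mathcal E]\ge 1-\tfrac{\delta}{2}$ (one can equally take the threshold $\tfrac{c\log n}{\epsilon}$ and get failure probability $n^{-\Omega(1)}$). On $\mathcal E$: a surviving edge has $|h_e|=|Z_e|\le\tfrac1\epsilon\log(2|E|/\delta)$; a zeroed-out edge has $\widehat w_e=w_e+Z_e\le t$, hence $h_e=w_e\le t-Z_e\le t+\tfrac1\epsilon\log(2|E|/\delta)$. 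Since $t=\tfrac{2\log(2n/\delta)}{\epsilon}$ and $|E|\le\binom n2$, in both cases $|h_e|\le W:=O\!\big(\tfrac{\log(n/\delta)}{\epsilon}\big)$. The point to stress is that the weight $w_e$ of a zeroed-out edge is \emph{not} bounded a priori, so one cannot naively claim $h_e\le t$ for such edges; it is precisely the union bound over the noise that rules out a heavy edge ever being filtered out, and it matters that $\mathcal E$ holds simultaneously for all edges so the estimate below is valid uniformly in $x$ on one event. This is the one place the argument is less straightforward than it looks.

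Finally I would pass from the edge weights to the spectral radius. Let $G_0$ be the unweighted graph on edge set $E$ with Laplacian $L_{G_0}$. On $\mathcal E$, for every $x$ with $\|x\|_2\le1$,
\[
  \big|x^\top L_Hx\big|\;\le\;\sum_{e=(u,v)\in E}|h_e|\,(x_u-x_v)^2\;\le\;W\sum_{(u,v)\in E}(x_u-x_v)^2\;=\;W\,x^\top L_{G_0}x .
\]
The signless Laplacian $D_{G_0}+A_{G_0}$ is positive semidefinite (its quadratic form is $\sum_{(u,v)\in E}(x_u+x_v)^2\ge0$), hence $-A_{G_0}\preceq D_{G_0}$ and $L_{G_0}=D_{G_0}-A_{G_0}\preceq 2D_{G_0}$, so $\lambda_{\max}(L_{G_0})\le 2d_{\mathsf{max}}$. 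Combining, on $\mathcal E$ we get $\|L_G-L_{\widehat G}\|_2\le 2Wd_{\mathsf{max}}=O\!\big(\tfrac{d_{\mathsf{max}}\log(n/\delta)}{\epsilon}\big)$, which is the claim. The main obstacle is the handling of the zeroed-out heavy edges in the weight bound; an alternative (matching the remark in the introduction) is to first show that Algorithm~\ref{alg2} preserves every normalized $(S,T)$-cut up to $O(d_{\mathsf{max}}\min\{|S|,|T|\}\log(n/\delta)/\epsilon)$, as in \Cref{t.intro_graph_informal}, and then convert a cut bound into a quadratic-form bound by a threshold decomposition of $x$, but the direct argument above is shorter.
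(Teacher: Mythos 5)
Your proof is correct, and it reaches the bound by a genuinely different route than the paper for the key conversion step. Both arguments start from the same place: on a union-bound event over the Laplace tails one has $|w_e-\widehat w_e|\le 2t$ for every $e\in E$ simultaneously (the paper states this in the proofs of Theorems~\ref{t.ana_on_alg2} and~\ref{t.small_cut_alg2}, handling the filtered heavy edges exactly as you do). From there the paper does \emph{not} bound the quadratic form directly: it verifies the hypotheses of the Bilu--Linial-type Lemma~\ref{l.spectral} for $L_{\widetilde G}=L_G-L_{\widehat G}$, namely that all diagonal entries and all normalized bilinear forms $|\mathbf{1}_S^\top L_{\widetilde G}\mathbf{1}_T|/(\|\mathbf{1}_S\|_2\|\mathbf{1}_T\|_2)$ over disjoint $S,T$ are at most $\alpha=2td_{\mathsf{max}}$, and then invokes that lemma to convert cut-type bounds into a spectral-radius bound $O(\alpha(\log(\ell/\alpha)+1))$, which is $O(\alpha)$ here since $\ell\le 2\alpha$. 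You instead observe that, because $\widehat E\subseteq E$, the error matrix is itself the Laplacian of a signed graph supported on $E$, dominate its quadratic form by $2t\,x^\top L_{G_0}x$ for the unweighted skeleton $G_0$, and finish with the elementary fact $\lambda_{\max}(L_{G_0})\le 2d_{\mathsf{max}}$ via the signless Laplacian. Your route is shorter, avoids the external lemma, and gives an explicit constant; the paper's route (cuts plus Lemma~\ref{l.spectral}) is the more general tool announced in the introduction and is what one would need if the error matrix were not a Laplacian supported on a known sparse edge set, at the price of the (here harmless) extra $\log(\ell/\alpha)$ factor in that lemma. Your remark about not being able to bound $h_e$ for filtered edges a priori, and resolving it through the simultaneous noise bound, is exactly the subtlety the paper also has to address, so no gap there.
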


In Table \ref{t.2}, we compare our two linear additive error approaches on cut approximation when the number of edges is a publicly known parameter. The basis-exchange walk approach and the truncated filtering approach show different trade-offs with respect to $\delta$. As $\delta \to 0$, the accuracy of the truncated filtering algorithm gets worse, but the accuracy for the basis-exchange walk is unchanged. In contrast, the run-time of the basis-exchange walk gets worse when $\delta \rightarrow 0$, while the run-time of the truncated filtering algorithm is not affected by $\delta$. Indeed, neither algorithms dominate the other. In Section \ref{sec:simulation}, we also show that the algorithm based on the basis-exchange walk empirically achieves better accuracy in spectral apprixmation, albeit with a slightly larger time consumption.

\begin{table*}[t]
  \centering
  \caption{The comparison of our efficient and purely additive error approaches.}\label{t.2}

  \begin{tabular}{|c|c|c|c|c|}
      
      \hline
      \textbf{Method} &  \makecell[c]{\textbf{Privacy}} & \makecell[c]{\textbf{Accuracy}} &  \makecell[c]{\textbf{Run-time}} & \makecell[c]{\textbf{Space}}   \\
      \hline 
      \makecell[c]{Basis-exchange walk (Thm. \ref{t.main})} & $(\epsilon,\delta)$-DP & ${O}\left(\frac{m\log n }{\epsilon}\right)$ & {${O}\left(m\log ({n\over \delta})\right)$}  & {$2m$} \\
      \hline 
      \makecell[c]{Truncated Filtering algorithm (Thm. \ref{t.ana_on_alg2})} & $(\epsilon,\delta)$-DP & ${O}\left(\frac{m\log (n/\delta) }{\epsilon}\right)$ & {${O}\left(m\right)$} & {$m$} \\
      \hline
  \end{tabular}
\end{table*}

\section{Empirical Simulations}\label{sec:simulation}

In this section, we implement a synthetic experiment to demonstrate the advantages of our algorithms in terms of both efficiency and accuracy. In particular, we compare our algorithms with $3$ major previous algorithms for outputting a synthetic graph for cut or spectral approximation: Johnson-Lindenstrauss Transformation~\cite{blocki2012johnson}, Analyze Gauss~\cite{dwork2014analyze} and Liu et al.~\cite{liu2023optimal}. The reason we do not compare with private mirror descent~\cite{eliavs2020differentially} is because that \cite{eliavs2020differentially} does not provide any theoretical guarantee on graph spectral approximation, and evaluating the maximum error over all $2^n$ cuts would be intangible.

We set privacy budget as $\epsilon=1,\delta = n^{-10}$. For JL mechanism, we set the multiplicative factor $\eta$ be $0.1$. The experiment is conducted on MacBook Air M1. All test results are the average of 5 runs.
\begin{table*}[b]
  \centering
  \caption{The comparison in run-time.}\label{tab:runtime}

  \begin{tabular}{|c|c|c|c|c|c|}
      
      \hline
      \textbf{$n$} &  \makecell[c]{\textbf{Algorithm \ref{alg1_simple}}} & \makecell[c]{\textbf{Algorithm \ref{alg2}}} &  \makecell[c]{\textbf{Analyze Gauss}\\ \cite{dwork2014analyze}} & \makecell[c]{\textbf{JL mechanism}\\ \cite{blocki2012johnson}} & \makecell[c]{\textbf{Liu et al.}\\ \citet{liu2023optimal}}  \\
      \hline 
      \makecell[c]{$100$} & 0.043 s & 0.011 s & 0.028 s  & 1.584 s & 2.376 s\\
      \hline 
    \makecell[c]{$10^3$} & 0.362 s & 0.093 s & 2.722 s  & $>$ 300 s & $>$ 300 s\\
      \hline 
     \makecell[c]{$10^4$} & 2.103 s & 0.983 s & 284.263 s  & $>$ 300 s & $>$ 300 s\\
      \hline 
     \makecell[c]{$10^5$} & 22.742 s & 10.029 s & $>$ 300 s  & $>$ 300 s & $>$ 300 s\\
      \hline 

  \end{tabular}
  \label{table:spectralrun}
\end{table*}

\textbf{Run-time.}
We first evaluate the run-time for outputting a synthetic graph. For the dataset, we use Erdős Rényi random graph model $G:=G(n,c/n)$ (for constant $c$) to generate a synthetic graph. At every different scale of the input, we always set the expected average degree be constant to make sure that with high probability, the graph is sparse. We test the run time of different algorithms on 5 different scales (from $10^2$ to $10^5$).
From \Cref{table:spectralrun}, we see that: (i) all previous algorithms become rather slow even on moderately large graphs (with $10^4$ vertices) and (ii) on different scales of the input, the run-times of our algorithms exhibit an obvious linear growth trend, which verifies our theory on time complexity.

\textbf{Accuracy.}
We show the error only for spectral approximation as estimating worst-case cut approximation is not feasible. To better study the accuracy of algorithms on weighted graphs, we use three different scales of edge weights: $W = 1$ (which is also the unweighted case), $W = \sqrt{n}$ and $W = n$. For the sake of time-consuming, we set $n = 100$. The results on spectral error $\|L_G - L_{\widehat{G}}\|_2$ are listed in \Cref{table:spectral}. From the table, we see that with the average weight value increasing, the spectral error of our algorithms (as well as Analyze Gauss) shows virtually no change. Notably, our algorithms consistently focus its spectral error on the average degree even for graphs with significantly high weights, thus confirming our Theorem \ref{t.spectral}. The superiority of our algorithm over Analyze Gauss might be from its novel approach to incorporating noise. Our Algorithm \ref{alg1_simple} achieves better accuracy, our conjecture is that compared to Algorithm \ref{alg2}, it may have a smaller constant or that the error of Algorithm \ref{alg1_simple} does not depend on $\delta$ (see also the discussions in Section \ref{sec:high_pass}). Also, even though the JL mechanism is supposed to have asymptotically the same additive error on spectral approximation, the experiment shows that the JL mechanism produces a rather large error compared to other mechanisms. This is because the JL mechanism includes a multiplicative approximation, which worsens as the weight increases. Further, the constant in the error bound of the JL mechanism is large.

\begin{table*}[t]
  \centering
  \caption{The comparison in accuracy of spectral approximation (I).}\label{tab:accuracy1}

  \begin{tabular}{|c|c|c|c|c|c|}
      \hline
      \textbf{$n$} &  \makecell[c]{\textbf{Algorithm \ref{alg1_simple}}} & \makecell[c]{\textbf{Algorithm \ref{alg2}}} &  \makecell[c]{\textbf{Analyze Gauss}\\ \cite{dwork2014analyze}} & \makecell[c]{\textbf{JL mechanism}\\ \cite{blocki2012johnson}} & \makecell[c]{\textbf{Liu et al.}\\ \citet{liu2023optimal}}  \\
      \hline 
      \makecell[c]{$1$} & 23.935 & 32.638 & 43.973 & 817.930 & 19.249\\
      \hline 
    \makecell[c]{$\sqrt{n}$} & 30.127 & 31.931 & 45.812 & 1861.872 & 29.587\\
      \hline 
     \makecell[c]{$n$} & 25.347 & 28.189 & 41.723 & 2769.758 & 25.328\\
      \hline 

  \end{tabular}
  \label{table:spectral}
\end{table*}

To understand the growth rate of the spectral error for different sizes, we choose to test our algorithms and Analyze Gauss on unweighted graphs within the input range from $n = 200$ to $n = 1000$. The results are listed in \Cref{table:spectralaccuracy}. That is, the error of our algorithm does not increase (and perform better) if the maximum unweighted degree is a constant as theory predicts.

\begin{table*}[h]
  \centering
  \caption{The comparison in accuracy of spectral approximation (II).}\label{tab:accuracy2}

  \begin{tabular}{|c|c|c|c|}
      
      \hline
      \textbf{$n$} &  \makecell[c]{\textbf{Algorithm \ref{alg1_simple}}} & \makecell[c]{\textbf{Algorithm \ref{alg2}}} &  \makecell[c]{\textbf{Analyze Gauss}\cite{dwork2014analyze}}  \\
      \hline 
      \makecell[c]{$200$} & 24.413 & 34.589 & 59.559 \\
      \hline 
            \makecell[c]{$400$} & 24.466 & 38.262 & 87.396 \\
      \hline 
            \makecell[c]{$600$} &24.874 & 36.599 & 130.889 \\
      \hline 
            \makecell[c]{$800$} & 25.097 & 38.250 & 161.522 \\
      \hline 
            \makecell[c]{$1000$} & 25.875 & 38.395 & 178.136 \\
      \hline 

  \end{tabular}
  \label{table:spectralaccuracy}
\end{table*}

\section{Conclusion and Limitations}
In this paper, we propose private algorithms on additive cut approximation and spectral approximation that run in linear time and linear space, such that the compute time and space required by both the curator and the analyst is almost the same as in the non-private setting. One of our methods can also be applied to efficiently sample from the exponential mechanism defined over candidates with the sparsity constraint, which is a non-convex set. Given that resource matters in large-scale real industrial deployments and that real-world graphs are sparse, we believe our results will positively impact private graph analysis in real-world industrial deployments. 
While the setting studied in this paper captures most of the practical query sets and naturally occurring graphs in industry, the utility of our algorithms on cut approximation is sub-optimal. it remains an open question whether there is an efficient (or even linear time) algorithm that, with probability at least $1-o(n^{-c})$, answers all $(S,T)$-cut queries on a graph with the same $\widetilde {O}(\sqrt{n|E|})$ error rate as in Liu et al.~\citet{liu2023optimal}.

 \medskip
 \subsubsection*{Acknowledgement} JU research is supported by Rutgers Decanal Grant no.
302918.
\clearpage
\bibliographystyle{alpha}
\bibliography{privacy}

\newpage
\appendix

\section{Technical background}

\subsection{Differential privacy}
Differential privacy, proposed by \cite{dwork2006calibrating}, is a widely accepted notion of privacy. Here, we formally define differential privacy. Let $\mathcal{Y}$ be the data universe where $|\mathcal{Y}| = d$. Every dataset $D\in \mathcal{Y}^n$ with cardinality $n$ can be equivalently written as a histogram $D\in \mathbb{N}^{d}$, where each coordinate of $D$ records the number of occurrences of the corresponding element in $\mathcal{Y}$. Here, we write $\mathcal{X}\subset \mathbb{N}^d$ as the collection of datasets we are interested in. Also for $i\in [d]$, we use $D[i]$ to denote the value in $i$-th coordinate of $D$.

\begin{definition}
    [Neighboring dataset] For any two datasets $x,y\in \mathcal{X}$, we say that $x$ and $y$ are neighboring datasets if and only if $x$ and $y$ differ in at most one coordinate by $1$. Namely, there is an $i\in [d]$ such that $|x[i] - y[i]|\leq 1$, and $x[j] = y[j]$ for all $j\in [d]\backslash \{i\}$.
\end{definition}
\noindent If $x$ and $y$ are neighboring datasets, we also write $x\sim y$. The differential privacy measures privacy loss by distributional stability between neighboring datasets:
\begin{definition}
    [$(\epsilon,\delta)$-differential privacy~\citep{dwork2006our}] Let $\mathcal{A}:\mathcal{X}\rightarrow \mathcal{R}$ be a (randomized) algorithm, where $\mathcal{R}$ is the output domain. For fixed $\epsilon >0$ and $\delta\in [0,1)$, we say that $\mathcal{A}$ preserves $(\epsilon,\delta)$-differential privacy if for any measurable set $S\subset \mathcal{R}$ and any $x,y\in \mathcal{X}$ be a pair of neighboring datasets, it holds that 
    $$Pr[\mathcal{A}(x)\in S] \leq Pr[\mathcal{A}(y)\in S]\cdot e^{\epsilon} + \delta.$$
\end{definition}

\noindent Here is a useful definition related to $(\epsilon,\delta)$-DP:
\begin{definition}
  ($(\epsilon,\delta)$-indistinguishable) Let $\Omega$ be a ground set and $\mu_1, \mu_2$ be two distributions with support $\Omega_1\subset \Omega$, $\Omega_2\subset \Omega$ respectively. We say that $\mu_1$ and $\mu_2$ are $(\epsilon,\delta)$-indistinguishable for $\epsilon>0$ and $\delta \in (0,1)$ if for any $S\subset \Omega$, it holds that 
  $$Pr_{\mu_1} (S) \leq Pr_{\mu_2}(S)\cdot e^\epsilon + \delta$$
  and 
  $$Pr_{\mu_2} (S) \leq Pr_{\mu_1}(S)\cdot e^\epsilon + \delta.$$
\end{definition}

A key feature of differential privacy algorithms is that they preserve privacy under post-processing. That is to say, without any auxiliary information about the dataset, any analyst cannot compute a function that makes the output less private. 
\begin{lemma}
    [Post processing~\citep{dwork2014algorithmic}] Let $\mathcal{A}:\mathcal{X}\rightarrow \mathcal{R}$ be a $(\epsilon,\delta)$-differentially private algorithm. Let $f:\mathcal{R}\rightarrow \mathcal{R}'$ be any function, then $f\circ \mathcal{A}$ is also $(\epsilon,\delta)$-differentially private.
\end{lemma}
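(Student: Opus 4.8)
The plan is to reduce the claim to the case where $f$ is deterministic (which is exactly what the statement asks) and observe that composing with $f$ only coarsens the output, so no new privacy argument is needed beyond Definition of $(\epsilon,\delta)$-DP.

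First I would fix a pair of neighboring datasets $x\sim y$ and an arbitrary measurable event $S'\subseteq \mathcal{R}'$, and set $S:= f^{-1}(S') = \{r\in\mathcal{R}: f(r)\in S'\}$. Since $f$ is measurable, $S$ is an admissible event in $\mathcal{R}$. By the very definition of composition, the event $\{(f\circ\mathcal{A})(x)\in S'\}$ coincides with $\{\mathcal{A}(x)\in S\}$, so applying the $(\epsilon,\delta)$-DP guarantee of $\mathcal{A}$ to the event $S$ gives
$$Pr[(f\circ\mathcal{A})(x)\in S'] = Pr[\mathcal{A}(x)\in S] \leq e^{\epsilon}\cdot Pr[\mathcal{A}(y)\in S] + \delta = e^{\epsilon}\cdot Pr[(f\circ\mathcal{A})(y)\in S'] + \delta.$$
The roles of $x$ and $y$ are symmetric, so this establishes $(\epsilon,\delta)$-DP of $f\circ\mathcal{A}$.

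If one additionally wants the (more general) version where $f$ is itself randomized, I would model $f$ as a family $\{f_\omega\}_\omega$ of deterministic maps indexed by an internal random seed $\omega$ that is independent of the coins of $\mathcal{A}$. Then for any event $S'$ one has $Pr[(f\circ\mathcal{A})(x)\in S'] = \mathbb{E}_\omega\big[Pr[f_\omega(\mathcal{A}(x))\in S']\big]$; applying the deterministic bound above for each fixed $\omega$ and using monotonicity and linearity of expectation yields $\mathbb{E}_\omega\big[Pr[f_\omega(\mathcal{A}(x))\in S']\big] \leq \mathbb{E}_\omega\big[e^{\epsilon}Pr[f_\omega(\mathcal{A}(y))\in S'] + \delta\big] = e^{\epsilon}Pr[(f\circ\mathcal{A})(y)\in S'] + \delta$. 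The independence of $\omega$ from $\mathcal{A}$'s randomness is what lets us condition on $\omega$ without changing the law of $\mathcal{A}(x)$.

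The only genuinely delicate point — and the one I would be most careful to state cleanly rather than the content being hard — is measurability: when $\mathcal{R}$ is a continuous domain one must assume $f$ (and each $f_\omega$) is measurable so that $f^{-1}(S')$ is a legitimate event, and one needs a Fubini-type justification for the interchange of $Pr$ and $\mathbb{E}_\omega$. In the discrete histogram setting of this paper these concerns are vacuous, so the proof is essentially a one-line observation about pulling back events.
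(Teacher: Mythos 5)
Your proof is correct: the pullback argument (apply the $(\epsilon,\delta)$-guarantee of $\mathcal{A}$ to the event $f^{-1}(S')$, then handle a randomized $f$ by conditioning on its independent seed and averaging) is exactly the standard argument from the cited reference, and the paper itself states this lemma without proof, so there is nothing further to reconcile. Your remark on measurability is the right caveat and is indeed vacuous in the discrete setting used here.
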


Sometimes we need to repeatedly use differentially private mechanisms on the same dataset, and obtain a series of outputs.
\begin{lemma}[Basic composition lemma, \citep{dwork2006calibrating}]\label{l.composition}
     let $D$ be a dataset in $\mathcal{X}$ and $\mathcal{A}_1, \mathcal{A}_2,\cdots, \mathcal{A}_k$ be $k$ algorithms where $\mathcal{A}_i$ (for $i\in [k]$) preserves $(\epsilon_i,\delta_i)$ differential privacy, then the composed algorithm $\mathcal{A}(D) = (\mathcal{A}_1(D), \cdots, \mathcal{A}_2(D))$ preserves $(\sum_{i\in [k]}{\epsilon_i}, \sum_{i\in [k]}{\delta_i})$-differential privacy.
\end{lemma}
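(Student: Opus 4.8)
The plan is to prove the statement by induction on the number of mechanisms $k$, which reduces the whole problem to the two-mechanism case. The base case $k=1$ is immediate. For the inductive step I would view $\mathcal{A}(D)=\big((\mathcal{A}_1(D),\dots,\mathcal{A}_{k-1}(D)),\,\mathcal{A}_k(D)\big)$ as the composition of two mechanisms: the first is the composed mechanism $\mathcal{A}_{1:k-1}$, which is $\big(\sum_{i<k}\epsilon_i,\sum_{i<k}\delta_i\big)$-DP by the induction hypothesis, and the second is $\mathcal{A}_k$, which is $(\epsilon_k,\delta_k)$-DP. So it suffices to prove: the composition of an $(\epsilon_1,\delta_1)$-DP mechanism $\mathcal{A}_1$ and an $(\epsilon_2,\delta_2)$-DP mechanism $\mathcal{A}_2$ is $(\epsilon_1+\epsilon_2,\delta_1+\delta_2)$-DP. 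The one auxiliary fact I would isolate first is an averaging (``layer-cake'') lemma: if $\mu,\mu'$ are $(\epsilon,\delta)$-indistinguishable and $h:\Omega\to[0,1]$ is measurable, then $\mathbb{E}_{X\sim\mu}[h(X)]\le e^{\epsilon}\,\mathbb{E}_{Y\sim\mu'}[h(Y)]+\delta$; this follows by writing $\mathbb{E}[h(X)]=\int_0^1\Pr[h(X)\ge t]\,dt$, applying $(\epsilon,\delta)$-indistinguishability to each super-level set $\{h\ge t\}$, and integrating over $t\in[0,1]$ (it is essential here that $h\le1$, so that the constant $\delta$ integrates to exactly $\delta$).

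\textbf{The two-mechanism step.} Fix neighboring $D\sim D'$ and a measurable event $S\subseteq\mathcal{R}_1\times\mathcal{R}_2$. I would slice $S$ along its first coordinate: let $S_{r_1}=\{r_2:(r_1,r_2)\in S\}$, and set $p(r_1)=\Pr[\mathcal{A}_2(D)\in S_{r_1}]$ and $p'(r_1)=\Pr[\mathcal{A}_2(D')\in S_{r_1}]$, so that $\Pr[\mathcal{A}(D)\in S]=\mathbb{E}_{r_1\sim\mathcal{A}_1(D)}[p(r_1)]$ and $\Pr[\mathcal{A}(D')\in S]=\mathbb{E}_{r_1\sim\mathcal{A}_1(D')}[p'(r_1)]$. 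By $(\epsilon_2,\delta_2)$-DP of $\mathcal{A}_2$, $p(r_1)\le e^{\epsilon_2}p'(r_1)+\delta_2$ for every $r_1$. The key move will be to truncate: set $q(r_1):=\min\{1,\,e^{\epsilon_2}p'(r_1)\}\in[0,1]$; then one still has $p(r_1)\le q(r_1)+\delta_2$ (checking the cases $e^{\epsilon_2}p'(r_1)\ge1$ and $e^{\epsilon_2}p'(r_1)<1$ separately). Now I would chain
\[ \Pr[\mathcal{A}(D)\in S]=\mathbb{E}_{\mathcal{A}_1(D)}[p]\ \le\ \mathbb{E}_{\mathcal{A}_1(D)}[q]+\delta_2\ \le\ e^{\epsilon_1}\,\mathbb{E}_{\mathcal{A}_1(D')}[q]+\delta_1+\delta_2, \]
where the last step is the averaging lemma applied to $q\in[0,1]$ together with the $(\epsilon_1,\delta_1)$-DP of $\mathcal{A}_1$; and finally $\mathbb{E}_{\mathcal{A}_1(D')}[q]\le e^{\epsilon_2}\,\mathbb{E}_{\mathcal{A}_1(D')}[p']=e^{\epsilon_2}\,\Pr[\mathcal{A}(D')\in S]$, which yields $\Pr[\mathcal{A}(D)\in S]\le e^{\epsilon_1+\epsilon_2}\,\Pr[\mathcal{A}(D')\in S]+\delta_1+\delta_2$. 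The reverse inequality follows by swapping the roles of $D$ and $D'$, and the argument goes through unchanged if $\mathcal{A}_2$ is chosen adaptively from the output of $\mathcal{A}_1$, since $(\epsilon_2,\delta_2)$-DP then holds for each fixed first output.

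\textbf{Expected main obstacle.} The only delicate point is keeping the failure probability additive, $\delta_1+\delta_2$, rather than letting it inflate. The naive route---bound $p(r_1)\le e^{\epsilon_2}p'(r_1)+\delta_2$, integrate over $\mathcal{A}_1(D)$, then apply the averaging lemma to the (unbounded) function $r_1\mapsto e^{\epsilon_2}p'(r_1)$---picks up a factor $e^{\epsilon_1}$ on $\delta_2$, giving only $\big(\epsilon_1+\epsilon_2,\ \delta_1+e^{\epsilon_1}\delta_2\big)$-DP. The truncation $q=\min\{1,e^{\epsilon_2}p'\}$ is precisely what repairs this: it keeps the argument of the averaging lemma inside $[0,1]$ while still dominating $p-\delta_2$, so the $\delta_2$ term is peeled off before the $e^{\epsilon_1}$ factor is applied. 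Everything else---measurability of the slices, the inductive bookkeeping, and the symmetric direction---I expect to be routine.
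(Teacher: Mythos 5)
Your proof is correct, and it is worth noting up front that the paper does not prove this lemma at all: it is stated as standard background and attributed to the differential-privacy literature, so there is no in-paper argument to compare against. Your two-mechanism step is the standard (and right) way to get the additive $\delta_1+\delta_2$: slicing $S$ along the first coordinate, the layer-cake averaging lemma for $[0,1]$-valued test functions, and the truncation $q=\min\{1,e^{\epsilon_2}p'\}$, which dominates $p-\delta_2$ pointwise while staying bounded by $1$, so the $\delta_2$ term is peeled off before the $e^{\epsilon_1}$ factor enters; the induction and the symmetric direction are then routine, and your observation that the argument extends to adaptive choice of $\mathcal{A}_2$ is also correct. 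Two small remarks, neither affecting validity: (i) your side comment about the ``naive route'' is slightly off in its bookkeeping --- applying the averaging lemma to $p\in[0,1]$ first and then bounding $p\le e^{\epsilon_2}p'+\delta_2$ gives $\delta_1+e^{\epsilon_1}\delta_2$, whereas the ordering you describe (bounding $p$ first and then averaging the unbounded function $e^{\epsilon_2}p'$) inflates the other term, giving $e^{\epsilon_2}\delta_1+\delta_2$; either way the point stands that truncation is what keeps both deltas clean; (ii) you implicitly use that $\mathcal{A}_1$ and $\mathcal{A}_2$ toss independent coins (so that $\Pr[\mathcal{A}(D)\in S]=\mathbb{E}_{r_1\sim\mathcal{A}_1(D)}[p(r_1)]$), which is the standard convention for composition and is fine to state explicitly.
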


\begin{lemma}
    [Advanced composition lemma, \citep{dwork2010boosting}] 
    \label{l.adv_composition}
    For parameters $\epsilon>0$ and $\delta,\delta'\in [0,1]$, the composition of $k$ $(\epsilon,\delta)$ differentially private algorithms is a $(\epsilon', k\delta+\delta')$ differentially private algorithm, where 
    $$\epsilon' = \sqrt{2k\log(1/\delta')} \cdot \epsilon + k\epsilon (e^\epsilon - 1).$$
\end{lemma}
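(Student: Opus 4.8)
The plan is to first reduce to the case of pure $\epsilon$-differential privacy ($\delta=0$), handle that case via the \emph{privacy-loss random variable} together with a martingale concentration bound, and finally reinstate the $\delta$ slack. First I would recall the standard two-way translation between $(\epsilon,\delta)$-indistinguishability and a high-probability bound on the privacy loss: if $P,Q$ are the output distributions of $\mathcal{A}$ on neighbors $x,y$, then (up to an additive $\delta$ in the parameters) $(\epsilon,\delta)$-DP is equivalent to $\Pr_{o\sim P}[\ln(P(o)/Q(o))>\epsilon]\le\delta$, and conversely such a tail bound implies $P(S)\le e^{\epsilon}Q(S)+\delta$ for every event $S$, by splitting the integral over the region where the loss is at most $\epsilon$ and its complement. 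Using the decomposition lemma of Dwork, Rothblum and Vadhan, each $(\epsilon,\delta)$-DP mechanism can be written so that, except on an event of probability at most $\delta$, it behaves like a pure $\epsilon$-DP mechanism; a union bound over the $k$ mechanisms absorbs these events into the final $k\delta$ term, so it suffices to prove the claim for pure $\epsilon$-DP mechanisms composed adaptively.

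For the pure case, fix neighbors $x,y$ and let $o=(o_1,\dots,o_k)$ be the output of the adaptive composition run on $x$. Define the per-step privacy loss $\mathcal{L}_i(o_i)=\ln\frac{\Pr[\mathcal{A}_i(x,o_{<i})=o_i]}{\Pr[\mathcal{A}_i(y,o_{<i})=o_i]}$, so the total privacy loss is $\sum_{i=1}^{k}\mathcal{L}_i$. Two facts drive the argument: (i) $|\mathcal{L}_i|\le\epsilon$ pointwise, since each $\mathcal{A}_i$ is $\epsilon$-DP; and (ii) $\mathbb{E}[\mathcal{L}_i\mid o_{<i}]=D_{\mathrm{KL}}\big(\mathcal{A}_i(x,o_{<i})\,\|\,\mathcal{A}_i(y,o_{<i})\big)\le\epsilon(e^{\epsilon}-1)$, which follows from a short estimate: for $\epsilon$-close distributions $P,Q$ one has $D_{\mathrm{KL}}(P\|Q)+D_{\mathrm{KL}}(Q\|P)=\sum_o(P(o)-Q(o))\ln\frac{P(o)}{Q(o)}\le\epsilon\sum_o|P(o)-Q(o)|\le\epsilon(e^\epsilon-1)$, and both KL terms are nonnegative.

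Next I would set $B_i=\mathcal{L}_i-\mathbb{E}[\mathcal{L}_i\mid o_{<i}]$; by construction $(B_i)$ is a martingale difference sequence for the filtration generated by $o_1,\dots,o_k$ under $o\sim\mathcal{A}(x)$, and each $B_i$ is bounded by a constant multiple of $\epsilon$ thanks to (i) and (ii). Azuma--Hoeffding then gives $\Pr\big[\sum_i B_i\ge \epsilon\sqrt{2k\ln(1/\delta')}\big]\le\delta'$, and combining with (ii) we get that, with probability at least $1-\delta'$ over $o\sim\mathcal{A}(x)$,
$$\sum_{i=1}^{k}\mathcal{L}_i\ \le\ k\epsilon(e^\epsilon-1)+\epsilon\sqrt{2k\ln(1/\delta')}\ =\ \epsilon'.$$
By the translation in the first paragraph this yields $(\epsilon',\delta')$-DP for the composition of pure $\epsilon$-DP mechanisms, and reinstating the $k\delta$ from the union bound over the failure events gives $(\epsilon',\,k\delta+\delta')$-DP in general.

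I expect the main obstacle to be the concentration step: one must check carefully that the centered losses $B_i$ have the range needed to feed into Azuma--Hoeffding with the stated constant (the crude bound is $|B_i|\le\epsilon+\epsilon(e^\epsilon-1)$ rather than $\epsilon$, and the clean $\sqrt{2k\ln(1/\delta')}\,\epsilon$ term is obtained by a slightly sharper analysis of the centered loss), and that the $(\epsilon,\delta)$-to-pure-$\epsilon$ reduction is applied compatibly with \emph{adaptive} composition, since the ``bad'' event for the $i$-th mechanism may depend on $o_{<i}$. Both points are routine but need care; everything else is elementary.
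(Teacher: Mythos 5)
The paper does not prove Lemma~\ref{l.adv_composition} at all: it is stated as background and quoted directly from Dwork, Rothblum and Vadhan \cite{dwork2010boosting}, so there is no in-paper argument to compare against. Your proposal correctly reconstructs that source's standard proof --- the privacy-loss random variable, the conditional KL bound $\epsilon(e^{\epsilon}-1)$, Azuma--Hoeffding applied to the centered losses, and the reduction of the $(\epsilon,\delta)$ case to the pure case with a union bound contributing the $k\delta$ term --- and the only delicate points are the two you already flag (the exact Azuma variant for variables with $|\mathcal{L}_i|\le\epsilon$ and conditional mean at most $\epsilon(e^{\epsilon}-1)$, which is what yields the clean $\sqrt{2k\log(1/\delta')}\,\epsilon$ term, and carrying out the $(\epsilon,\delta)\to\epsilon$ decomposition per neighboring pair in a way compatible with adaptivity), both of which are handled in \cite{dwork2010boosting}.
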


\noindent Now, we introduce some basic mechanisms that preserve differential privacy. First, we define the sensitivity of query functions.

\begin{definition}
    [$\ell_p$-sensitivity] Let $f:\mathcal{X}\rightarrow \mathbb{R}^k$ be a query function on datasets. The sensitivity of $f$ (with respect to $\mathcal{X}$) is 
    \[\mathsf{sens}_p (f) = \max_{x,y\in \mathcal{X},\atop x\sim y} \|f(x) - f(y)\|_p.\]
\end{definition}

The Laplace mechanism is one of the most basic mechanisms to preserve differential privacy for numeric queries, which calibrates a random noise from the Laplace distribution (or double exponential distribution) according to the $\ell_1$ sensitivity of the function. 

\begin{definition}
    [Laplace distribution] Given parameter $b$, Laplace distribution (with scale $b$) is the distribution with probability density function 
    $$\texttt{Lap}(x|b) = \frac{1}{2b} \exp\left(-\frac{|x|}{b}\right).$$
    Sometimes we use $\texttt{Lap}(b)$ to denote the Laplace distribution with scale $b$.
\end{definition}
\begin{lemma}[Laplace mechanism]\label{l.laplace}
     Suppose $f:\mathcal{X}\rightarrow \mathbb{R}^k$ is a query function with $\ell_1$ sensitivity $\mathsf{sens}_1(f)\leq \mathsf{sens}$. Then the mechanism
    $$\mathcal{M}(D) = f(D) + (Z_1,\cdots,Z_k)^\top$$
    is $(\epsilon,0)$-differentially private, where $Z_1,\cdots, Z_k$ are i.i.d random variables drawn from $\texttt{Lap}(\mathsf{sens}/\epsilon)$.
\end{lemma}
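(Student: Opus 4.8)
The plan is the standard pointwise density-ratio argument for the Laplace mechanism. First I would fix an arbitrary pair of neighboring datasets $D \sim D'$ and an arbitrary point $z = (z_1, \dots, z_k) \in \mathbb{R}^k$, and write down the density of $\mathcal{M}(D)$ at $z$. Since the coordinates $Z_1, \dots, Z_k$ are independent Laplace variables with scale $\mathsf{sens}/\epsilon$, this density factorizes as $p_D(z) = \prod_{i=1}^k \frac{\epsilon}{2\,\mathsf{sens}} \exp\!\big(-\tfrac{\epsilon}{\mathsf{sens}} |f(D)_i - z_i|\big)$, and analogously $p_{D'}(z) = \prod_{i=1}^k \frac{\epsilon}{2\,\mathsf{sens}} \exp\!\big(-\tfrac{\epsilon}{\mathsf{sens}} |f(D')_i - z_i|\big)$.

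Next I would bound the likelihood ratio at $z$. Dividing the two densities, the normalizing constants cancel and $\frac{p_D(z)}{p_{D'}(z)} = \exp\!\big(\tfrac{\epsilon}{\mathsf{sens}} \sum_{i=1}^k (|f(D')_i - z_i| - |f(D)_i - z_i|)\big)$. Applying the reverse triangle inequality coordinatewise, $|f(D')_i - z_i| - |f(D)_i - z_i| \le |f(D)_i - f(D')_i|$, so the exponent is at most $\tfrac{\epsilon}{\mathsf{sens}} \normo{f(D) - f(D')} \le \tfrac{\epsilon}{\mathsf{sens}} \cdot \mathsf{sens}_1(f) \le \epsilon$, using the definition of $\ell_1$-sensitivity and the hypothesis $\mathsf{sens}_1(f) \le \mathsf{sens}$. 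Hence $p_D(z) \le e^\epsilon\, p_{D'}(z)$ for every $z \in \mathbb{R}^k$.

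Finally I would lift this pointwise inequality to events by integration: for any measurable $S \subseteq \mathbb{R}^k$, $\Pr[\mathcal{M}(D) \in S] = \int_S p_D(z)\,dz \le e^\epsilon \int_S p_{D'}(z)\,dz = e^\epsilon \Pr[\mathcal{M}(D') \in S]$, which is exactly the $(\epsilon,0)$-DP condition with $\delta = 0$. Since the entire derivation is symmetric in $D$ and $D'$, the reverse inequality $\Pr[\mathcal{M}(D') \in S] \le e^\epsilon \Pr[\mathcal{M}(D) \in S]$ follows by swapping roles, so no separate argument is needed.

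There is essentially no obstacle in this proof; the two steps worth stating with a little care are that the ratio bound must be established pointwise in $z$ (so that it survives integration over an arbitrary $S$, and so that we need not worry about which set $S$ is "worst-case"), and that the coordinatewise reverse triangle inequality is precisely what converts the $\ell_1$-sensitivity hypothesis into the bound $\epsilon$ on the exponent — had $f$ been multivariate with only an $\ell_\infty$ or $\ell_2$ bound on $\|f(D)-f(D')\|$, the same calculation would not close.
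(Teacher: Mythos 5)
Your proof is correct and is exactly the standard pointwise density-ratio argument for the Laplace mechanism; the paper states this lemma as background (citing the standard differential privacy literature) without reproving it, and your argument — factorized Laplace densities, reverse triangle inequality converting the $\ell_1$-sensitivity bound into the $e^\epsilon$ likelihood-ratio bound, then integration over an arbitrary measurable event — is the canonical proof it relies on. No gaps.
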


In many cases the output domain of the query function is discrete. For example, according to a private dataset, we would like to output a candidate with the highest score. In those cases, the Laplace mechanism is not applicable. A more fundamental mechanism for choosing a candidate is the {\em exponential mechanism}. Let $\mathcal{R}$ be the set of all possible candidates, and let $s:\mathcal{X}\times \mathcal{R}\rightarrow \mathbb{R}_{\geq 0}$ be a scoring function. We define the sensitivity of $s$ be $$\mathsf{sens}_s = \max_{x,y\in \mathcal{X}\atop x\sim y} |s(x) - s(y)|.$$ 

Now if we want to output a candidate that minimizes the scoring function, we define the exponential mechanism $\mathcal{E}$ for input dataset $D\in \mathcal{X}$ to be
$\mathcal{E}(D) := \text{Choose a candidate } r\in \mathcal{R}$ { with probability proportional to } $\exp\left(- \frac{{\epsilon \cdot s(D,r)}}{2\mathsf{sens}_s}\right)$ { for all } $r\in \mathcal{R}.$ 
We have the following lemmas for the privacy guarantee and utility of $\mathcal{E}(\cdot)$.
\begin{lemma}\label{l.exp}
    The exponential mechanism $\mathcal{E}(\cdot)$ sampling in some discrete space $\mathcal{R}$ is $\epsilon$-differentially private.
\end{lemma}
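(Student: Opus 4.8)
\textbf{Proof proposal for Lemma~\ref{l.exp} (privacy of the exponential mechanism).}
The plan is to verify the $(\epsilon,0)$-differential privacy guarantee directly from the definition of the mechanism, exploiting only the bounded sensitivity $\mathsf{sens}_s$ of the scoring function $s$. Fix a pair of neighboring datasets $x \sim y$ and an arbitrary candidate $r \in \mathcal{R}$. The probability that $\mathcal{E}(x)$ outputs $r$ is the normalized weight $\exp\!\big(-\tfrac{\epsilon\, s(x,r)}{2\mathsf{sens}_s}\big) \big/ \sum_{r'\in\mathcal R}\exp\!\big(-\tfrac{\epsilon\, s(x,r')}{2\mathsf{sens}_s}\big)$, and similarly for $y$. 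First I would bound the ratio of the two numerators: since $|s(x,r)-s(y,r)|\le \mathsf{sens}_s$, we get $\exp\!\big(-\tfrac{\epsilon\, s(x,r)}{2\mathsf{sens}_s}\big) \le \exp\!\big(\tfrac{\epsilon}{2}\big)\exp\!\big(-\tfrac{\epsilon\, s(y,r)}{2\mathsf{sens}_s}\big)$.

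Next I would bound the ratio of the two normalizing constants in the other direction: for each $r'$, the same sensitivity bound gives $\exp\!\big(-\tfrac{\epsilon\, s(y,r')}{2\mathsf{sens}_s}\big) \le \exp\!\big(\tfrac{\epsilon}{2}\big)\exp\!\big(-\tfrac{\epsilon\, s(x,r')}{2\mathsf{sens}_s}\big)$, and summing over $r'\in\mathcal R$ yields $\sum_{r'}\exp\!\big(-\tfrac{\epsilon\, s(y,r')}{2\mathsf{sens}_s}\big) \le \exp\!\big(\tfrac{\epsilon}{2}\big)\sum_{r'}\exp\!\big(-\tfrac{\epsilon\, s(x,r')}{2\mathsf{sens}_s}\big)$, hence $1\big/\sum_{r'}\exp\!\big(-\tfrac{\epsilon\, s(x,r')}{2\mathsf{sens}_s}\big) \le \exp\!\big(\tfrac{\epsilon}{2}\big)\big/\sum_{r'}\exp\!\big(-\tfrac{\epsilon\, s(y,r')}{2\mathsf{sens}_s}\big)$. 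Multiplying the two ratio bounds gives $\Pr[\mathcal{E}(x)=r] \le e^{\epsilon}\,\Pr[\mathcal{E}(y)=r]$ for every single $r$. To conclude for an arbitrary output event $S\subseteq\mathcal R$, I would sum this pointwise inequality over $r\in S$ (or integrate, if $\mathcal R$ is equipped with a base measure in the continuous case), obtaining $\Pr[\mathcal{E}(x)\in S]\le e^{\epsilon}\Pr[\mathcal{E}(y)\in S]$, which is exactly $\epsilon$-differential privacy with $\delta=0$.

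There is no real obstacle here — the argument is the textbook proof of McSherry--Talwar, and the only thing to be careful about is the factor-of-two bookkeeping in the exponent (the $2\mathsf{sens}_s$ normalization is what splits the $e^{\epsilon/2}$ from the numerator and the $e^{\epsilon/2}$ from the denominator so they combine to $e^{\epsilon}$) and the fact that the sensitivity bound must be applied in both directions (upper-bounding the $x$-numerator, and upper-bounding the $y$-denominator in terms of the $x$-denominator). If one wants the statement to cover the continuous/measure-theoretic instantiation used elsewhere in the paper, I would remark that the same inequality holds densitywise and integrates over any measurable $S$.
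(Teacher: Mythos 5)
Your proof is correct and is precisely the standard McSherry--Talwar argument: the sensitivity bound applied once to the numerator and once to the normalizing constant, each contributing a factor $e^{\epsilon/2}$, then summed pointwise over the output event. The paper states Lemma~\ref{l.exp} as background without proof (it is the classical exponential-mechanism guarantee), so your argument coincides with the one the paper implicitly relies on; there is no gap.
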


\begin{lemma}\label{l.exp_utility}
    Let $\mathcal{E}(\cdot)$ be an exponential mechanism sampling in some discrete space $\mathcal{R}$. For any input dataset $D$ and $t>0$, let $OPT = \min_{r\in \mathcal{R}}s(D,r)$, then we have that
    $$Pr\left[s(D,\mathcal{E}(D)) \geq OPT + \frac{2\mathsf{sens}_s}{\epsilon} (\log (|\mathcal{R}|) + t)\right] \leq \exp(-t).$$
\end{lemma}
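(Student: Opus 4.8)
\textbf{Proof proposal for Lemma~\ref{l.exp_utility} (utility of the exponential mechanism).}
The plan is to bound the probability that the sampled candidate has score far above the optimum by comparing the total mass of ``bad'' candidates to the mass placed on any near-optimal candidate. Let $\mathcal{E}(D)$ denote the output, and recall that $\Pr[\mathcal{E}(D)=r] \propto \exp\!\left(-\tfrac{\epsilon\, s(D,r)}{2\,\mathsf{sens}_s}\right)$, with normalizing constant $Z = \sum_{r\in\mathcal{R}}\exp\!\left(-\tfrac{\epsilon\, s(D,r)}{2\,\mathsf{sens}_s}\right)$. Fix $D$ and write $OPT = \min_{r}s(D,r)$, attained at some $r^\star$. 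For a threshold value $c = OPT + \tfrac{2\,\mathsf{sens}_s}{\epsilon}(\log|\mathcal{R}| + t)$, let $\mathcal{B} = \{r : s(D,r) \ge c\}$ be the set of bad candidates.

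First I would lower bound the denominator: since $r^\star\in\mathcal{R}$ contributes a single term, $Z \ge \exp\!\left(-\tfrac{\epsilon\, OPT}{2\,\mathsf{sens}_s}\right)$. Next I would upper bound the numerator mass of $\mathcal{B}$: for each $r\in\mathcal{B}$ we have $\exp\!\left(-\tfrac{\epsilon\, s(D,r)}{2\,\mathsf{sens}_s}\right) \le \exp\!\left(-\tfrac{\epsilon\, c}{2\,\mathsf{sens}_s}\right)$, and there are at most $|\mathcal{R}|$ such terms, so
\[
\sum_{r\in\mathcal{B}} \exp\!\left(-\frac{\epsilon\, s(D,r)}{2\,\mathsf{sens}_s}\right) \;\le\; |\mathcal{R}|\cdot \exp\!\left(-\frac{\epsilon\, c}{2\,\mathsf{sens}_s}\right).
\]
Dividing, $\Pr[s(D,\mathcal{E}(D)) \ge c] \le |\mathcal{R}|\cdot \exp\!\left(-\tfrac{\epsilon(c - OPT)}{2\,\mathsf{sens}_s}\right)$. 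Plugging in $c - OPT = \tfrac{2\,\mathsf{sens}_s}{\epsilon}(\log|\mathcal{R}| + t)$ makes the exponent equal to $-\log|\mathcal{R}| - t$, so the bound collapses to $|\mathcal{R}|\cdot |\mathcal{R}|^{-1} e^{-t} = e^{-t}$, which is exactly the claim.

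There is essentially no hard step here; this is the textbook exponential-mechanism utility argument (cf.\ \cite{dwork2014algorithmic, mcsherry2007mechanism}), and the only points requiring a little care are (i) the sign conventions, since here we are \emph{minimizing} rather than maximizing the score, so the exponent carries a minus sign and the ``bad'' set is $\{s \ge c\}$; and (ii) noting the bound is vacuous (right-hand side $\ge 1$) unless $t > 0$, which matches the hypothesis $t>0$ in the statement. One could slightly tighten the constant by replacing $|\mathcal{R}|$ in the numerator bound with $|\mathcal{B}|$, but the stated form suffices for all downstream applications in the paper.
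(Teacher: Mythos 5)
Your proof is correct: the standard argument of lower-bounding the normalizer by the single term at $r^\star$, upper-bounding the mass of $\{r : s(D,r)\ge c\}$ by $\abs{\mathcal{R}}\exp\bigl(-\epsilon c/(2\,\mathsf{sens}_s)\bigr)$, and plugging in the threshold yields exactly $e^{-t}$. The paper states this lemma as standard background (the McSherry--Talwar / Dwork--Roth utility bound) without giving its own proof, and your argument is precisely that textbook derivation, so there is nothing to reconcile.
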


The exponential mechanism $\mathcal{E}(\cdot)$ can also be generalized to sample from in a continuous space $\mathcal{R}$. 
Then Lemma \ref{l.exp} and Lemma \ref{l.exp_utility} can be naturally generalized to the following two lemmas:

\begin{lemma}\label{ec.pri}
    The exponential mechanism $\mathcal{E}(\cdot)$ sampling in some continuous space $\mathcal{R}$ is $\epsilon$-differentially private.
\end{lemma}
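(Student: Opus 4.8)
\textbf{Proof proposal for Lemma \ref{ec.pri} (privacy of the continuous exponential mechanism).}

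The plan is to mirror the discrete proof (Lemma \ref{l.exp}) but replace the sum over candidates by an integral over the measurable space $\mathcal{R}$, so the only real issue is measure-theoretic bookkeeping. First I would fix a neighboring pair $x \sim y$ in $\mathcal{X}$ and recall that the density of $\mathcal{E}(x)$ with respect to a fixed base measure $\nu$ on $\mathcal{R}$ is $p_x(r) = \exp\!\bigl(-\tfrac{\epsilon s(x,r)}{2\mathsf{sens}_s}\bigr) / Z(x)$, where $Z(x) = \int_{\mathcal{R}} \exp\!\bigl(-\tfrac{\epsilon s(x,r)}{2\mathsf{sens}_s}\bigr)\, d\nu(r)$ is the normalizing constant (assumed finite, which is part of what it means for $\mathcal{E}$ to be well-defined as a continuous mechanism). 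The goal is to show $p_x(r) \le e^{\epsilon} p_y(r)$ pointwise, since integrating that inequality over any measurable $S \subseteq \mathcal{R}$ immediately gives $\Pr[\mathcal{E}(x) \in S] \le e^{\epsilon}\Pr[\mathcal{E}(y)\in S]$, which is $(\epsilon,0)$-DP (and $\delta = 0$, so the symmetric direction follows by swapping $x$ and $y$).

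The pointwise bound splits into two pieces. For the numerator, the sensitivity assumption $|s(x,r) - s(y,r)| \le \mathsf{sens}_s$ gives
\begin{equation}
\exp\!\left(-\frac{\epsilon s(x,r)}{2\mathsf{sens}_s}\right) \le \exp\!\left(\frac{\epsilon}{2}\right)\exp\!\left(-\frac{\epsilon s(y,r)}{2\mathsf{sens}_s}\right).
\end{equation}
For the denominator, the reverse inequality $s(x,r) \ge s(y,r) - \mathsf{sens}_s$ applied inside the integral gives $Z(x) \ge \exp(-\epsilon/2) Z(y)$, hence $1/Z(x) \le \exp(\epsilon/2)/Z(y)$. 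Multiplying the two bounds yields $p_x(r) \le e^{\epsilon} p_y(r)$ for $\nu$-almost every $r$, and the same argument with $x,y$ interchanged gives the other direction.

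The main obstacle — really the only one — is not the inequality chain but checking that everything is well-posed: one needs $r \mapsto s(x,r)$ to be measurable and $Z(x) \in (0,\infty)$ for every $x$, so that $p_x$ is a genuine probability density and the change-of-variable/Fubini steps in the integration over $S$ are legitimate. I would state these as the standing regularity assumptions under which the continuous exponential mechanism is defined (as is standard, e.g. in the references cited for the discrete case), and then the proof is a verbatim translation of the discrete argument with $\sum$ replaced by $\int \cdot\, d\nu$. No new idea beyond Lemma \ref{l.exp} is needed; the write-up is essentially two displayed inequalities plus the remark that integrating a pointwise density bound preserves it.
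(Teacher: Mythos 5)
Your proof is correct and is precisely the standard McSherry--Talwar argument (pointwise density ratio, with the numerator and the normalizing constant each contributing a factor $e^{\epsilon/2}$); the paper states Lemma~\ref{ec.pri} without proof, as a routine continuous analogue of Lemma~\ref{l.exp}, so your write-up matches the argument the paper implicitly relies on, including the measurability/finiteness caveats on $Z(x)$. One small bookkeeping slip: the bound $Z(x)\geq e^{-\epsilon/2}Z(y)$ follows from $s(x,r)\leq s(y,r)+\mathsf{sens}_s$, not from $s(x,r)\geq s(y,r)-\mathsf{sens}_s$ as written, but since both directions are consequences of the sensitivity bound the conclusion is unaffected.
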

Fix a $D\in \mathcal{X}$, for any $t\geq 0$, we define $\mathcal{R}_t  = \{r\in \mathcal{R}, s(D,r) \leq OPT + t\}\subset \mathcal{R}$, and for a uniform distribution $\pi$ over $\mathcal{R}$, we write $\pi(\mathcal{R}_t)$ as the probability volume of $\mathcal{R}_t$.
\begin{lemma}\label{ec.acc}
    Let $\mathcal{E}(\cdot)$ be an exponential mechanism sampling in some continuous space $\mathcal{R}$, and $\pi$ be the uniform distribution on $\mathcal{R}$. For any input dataset $D$, let $OPT = \min_{r\in \mathcal{R}}s(D,r)$, then we have that
    $$Pr\left[s(D,\mathcal{E}(D)) \geq OPT + \frac{4\mathsf{sens}_s}{\epsilon} \left(\log \left(\frac{1}{\pi(\mathcal{R}_t)}\right) + t\right)\right] \leq \exp(-t).$$
\end{lemma}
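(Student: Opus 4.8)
The plan is to mimic the proof of the discrete utility bound (Lemma~\ref{l.exp_utility}), with the sublevel set $\mathcal{R}_t$ playing the role that the set of exact minimizers plays in the discrete case. Write $\beta := \epsilon/(2\,\mathsf{sens}_s)$, so that $\mathcal{E}(D)$ has density $r \mapsto e^{-\beta s(D,r)}/Z$ with respect to $\pi$, where $Z := \int_{\mathcal{R}} e^{-\beta s(D,r')}\,d\pi(r')$ is the normalizing constant; since the dominating measure cancels in every ratio below, it is harmless to take $\pi$ itself as the reference measure. Fix $t \geq 0$ and set $c := \frac{4\,\mathsf{sens}_s}{\epsilon}\bigl(\log(1/\pi(\mathcal{R}_t)) + t\bigr) = \frac{2}{\beta}\bigl(\log(1/\pi(\mathcal{R}_t)) + t\bigr)$; the goal is to show $\Pr[s(D,\mathcal{E}(D)) \geq OPT + c] \leq e^{-t}$.

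First I would write this probability as the ratio
\[
\Pr[s(D,\mathcal{E}(D)) \geq OPT + c] \;=\; \frac{\displaystyle\int_{\{r\,:\; s(D,r)\, \geq\, OPT + c\}} e^{-\beta s(D,r)}\,d\pi(r)}{\displaystyle\int_{\mathcal{R}} e^{-\beta s(D,r')}\,d\pi(r')},
\]
and then bound numerator and denominator separately. On the event $\{s(D,r) \geq OPT+c\}$ the integrand is at most $e^{-\beta(OPT+c)}$, so the numerator is at most $e^{-\beta(OPT+c)}\cdot\pi(\{s \geq OPT+c\}) \leq e^{-\beta(OPT+c)}$. For the denominator, restricting the integral to $\mathcal{R}_t$ and using $s(D,r) \leq OPT+t$ there gives $Z \geq e^{-\beta(OPT+t)}\pi(\mathcal{R}_t)$. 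Dividing, the $e^{-\beta\,OPT}$ factors cancel and we obtain $\Pr[s(D,\mathcal{E}(D)) \geq OPT + c] \leq \pi(\mathcal{R}_t)^{-1}\,e^{-\beta(c-t)}$.

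It remains to substitute the value of $c$: since $\beta c = 2\bigl(\log(1/\pi(\mathcal{R}_t)) + t\bigr)$, the right-hand side becomes $\pi(\mathcal{R}_t)^{-1}\cdot \pi(\mathcal{R}_t)^{2}\cdot e^{(\beta-2)t} = \pi(\mathcal{R}_t)\,e^{(\beta-2)t} \leq e^{-t}$, where the last step uses $\pi(\mathcal{R}_t)\leq 1$ together with $\beta \leq 1$, i.e.\ $\epsilon \leq 2\,\mathsf{sens}_s$ (which holds in all our applications, where $\mathsf{sens}_s \geq 1$ and $\epsilon = O(1)$). The factor $4$ rather than $2$ in the statement is exactly the slack that absorbs the passage from the discrete minimizer set to the continuous sublevel set.

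I expect the only genuinely delicate point to be the lower bound on the partition function $Z$. In the discrete setting one simply lower-bounds $Z$ by the contribution of a single optimal candidate, which has mass $\geq 1$; in the continuous setting the set of exact minimizers may be $\pi$-null, so one must instead pay a factor $e^{-\beta t}$ to pass to the positive-measure sublevel set $\mathcal{R}_t$ — this is precisely why the bound is phrased in terms of $\pi(\mathcal{R}_t)$ and $t$ rather than in terms of a minimizer set. Everything else is the routine numerator/denominator estimate for the exponential mechanism.
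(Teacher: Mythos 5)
The paper states Lemma~\ref{ec.acc} as background and never proves it, so there is no internal proof to compare against; your numerator/denominator estimate is the standard argument and, up to one caveat, it is a correct proof. The decomposition $\Pr[s\geq OPT+c]\leq \pi(\mathcal{R}_t)^{-1}e^{-\beta(c-t)}$ with $\beta=\epsilon/(2\,\mathsf{sens}_s)$ is exactly right, and your identification of the partition-function lower bound via the positive-measure sublevel set $\mathcal{R}_t$ as the one genuinely continuous ingredient is the correct diagnosis.

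The caveat you flag at the end is, however, not mere slack in your argument: the condition $\beta\leq 1$, i.e.\ $\epsilon\leq 2\,\mathsf{sens}_s$, is actually needed for the lemma as literally stated, because with the paper's unscaled definition $\mathcal{R}_t=\{r: s(D,r)\leq OPT+t\}$ the claim can fail when $\epsilon\gg \mathsf{sens}_s$. For instance, take $\mathcal{R}=[0,1]$ with uniform $\pi$, $\mathsf{sens}_s=1$, $\epsilon=200$, $t=10$, and a score equal to $0$ on a set of measure $e^{-15}$ and equal to $0.2$ elsewhere; then $\pi(\mathcal{R}_t)=1$, the claimed threshold is $OPT+\frac{4}{\epsilon}\,t=0.2$, yet the mechanism outputs a point of score $0.2$ with probability $(1+e^{5})^{-1}\approx 7\cdot 10^{-3}\gg e^{-10}$. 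So no cleverer proof can remove your hypothesis; either one reads the lemma with the implicit assumption $\epsilon=O(\mathsf{sens}_s)$ (true in all of the paper's uses, where $\mathsf{sens}_s=1$ and $\epsilon$ is a constant), or one defines the sublevel set with the scaled slack, $\mathcal{R}_t=\{r: s(D,r)\leq OPT+\frac{2\,\mathsf{sens}_s}{\epsilon}\,t\}$, in which case your identical computation gives a ratio of at most $e^{-\log(1/\pi(\mathcal{R}_t))-t}\leq e^{-t}$ with no condition on $\epsilon$ at all. It would strengthen your write-up to state explicitly which of these two readings you adopt rather than deferring to ``our applications.''
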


\subsection{Spectral graph theory}

In this paper, the main application of our algorithm is to output a synthetic graph with $n$ vertices that approximately answers all cut queries. A weighted graph $G = (V,E)$ with non-negative edges can be equivalently written as a vector $G\in \mathbb{R}_{\geq 0}^N$, where we write $N = {n \choose 2}$, and let $G[i]$ ($i\in [N]$) be the edge weight in $i$-th pair of vertices. If some pair of vertices have no edge, the weight is $0$. Let $A_G\in \mathbb{R}^n\times \mathbb{R}^n$ be the symmetric adjacency matrix of $G$. That is,
$A_G[u,v] = A_G[v,u] = w_{uv}$, where $u,v\in V$ are vertices, and $w_{u,v}$ is the weight of edge $uv$. (If $u$ is not adjacent with $v$, then $w_{uv} = 0$.) For vertex $v\in V$, denote by $d_v$ the weighted degree of $v$, namely $d_v = \sum_{u\neq v} w_{vu}$. Let $D_G\in \mathbb{R}^n\times \mathbb{R}^n$ be a diagonal matrix where $D_G[i,i]$ is the weighted degree of node $i\in[n]$. Here, we define the edge adjacency matrix:

\begin{definition}
    [Edge adjacency matrix] Let $G$ be an undirected graph of $n$ vertices and $m$ edges. Consider an arbitrary orientation of edges, then $E_G\in \mathbb{R}^{m\times n}$ is the edge adjacency matrix where 
    $$E_G[e,v] = \left\{
        \begin{aligned}
            &+\sqrt{w_e}, &\text{if } v \text{ is } e\text{'s head,} \\
            &-\sqrt{w_e}, &\text{if } v \text{ is } e\text{'s tail,} \\
            &0, &\text{otherwize.}
        \end{aligned}
    \right.$$ 
\end{definition}
\noindent An important object of interest in graph theory is the Laplacian of a graph:
\begin{definition}
    [Laplacian matrix] For an undirected graph $G$ with $n$ vertices and $m$ edges, the Laplacian matrix $L_G\in \mathbb{R}^{n\times n}$ of $G$ is $$L_G = E_G^\top E_G.$$
\end{definition}

\noindent Equivalently, one can verify that $L_G = D_G - A_G$. Also, we note that for any graph $G$, $L_G$ is a positive semi-definite (PSD) matrix and $L_G \bm{1} = 0$, where $\bm{1}\in \mathbb{R}^n$ is the all one vector. Let $0 = \lambda_1(G)\leq \lambda_2(G)\leq\cdots\leq \lambda_n(G)$ be the non-negative eigenvalues of $L_G$. The following lemma illustrates the monotonicity of eigenvalues of Laplacian:

\noindent For any vector $x$ in $\mathbb{R}^n$, the quadratic form of $L_G$ is $x^\top L_G x  \geq 0$. In particular, one can verify that 
$$x^\top L_G x = \sum_{(u,v)\in E} w_{u,v} (x(u) - x(v))^2.$$
Our main task is to preserve the cut size. For any $S,T\subset V$, we simply write $\Phi_G(S)$ be the size of $(S,V\backslash {S})$-cut for a graph $G$, and $\Phi_G(S,T)$ be the size of $(S,T)$-cut. For any vertex $v\in V$, we use $\mathbf{1}_v\in \{0,1\}^n$ to denote the column vector with $1$ in $v$-th coordinate and $0$ anywhere else. Also, for a subset of vertices $S$ we use $\mathbf{1}_S\in \{0,1\}^n$ to denote the identity vector of $S$. It is easy to verify that 
$$\Phi_G(S) = \mathbf{1}_S^\top L_G  \mathbf{1}_S.$$
Therefore, if we obtain an approximation for the quadratic form of the Laplacian matrix $L_G$, it is equivalent to obtaining an approximation for the $(S,V\backslash {S})$-cut size.

We also need the following lemma to transform cut approximation bounds to a spectral one.
\begin{lemma}
  [Bilu and Linial \cite{bilu2006lifts}] 
  \label{l.spectral}
  Let $A\in \mathbb{R}^{n\times n}$ be a symmetric matrix such that $\|A\|_{\infty} \leq \ell$ and all diagonal entries of $A$ has absolute value less than $O(\alpha (\log(\ell/\alpha)+1))$. If for any non-zero $u,v\in \{0,1\}^n$ with $u^\top v = 0$, it holds that 
  $$\frac{|u^\top A v|}{\|u\|_2\|v\|_2} \leq \alpha,$$
  then  the spectral radius of $A$ is  $\|A\|_2 = O(\alpha (\log(\ell/\alpha)+1))$.
\end{lemma}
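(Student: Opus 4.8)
The plan is to run the discretization argument of Bilu and Linial~\cite{bilu2006lifts}. Since $A$ is symmetric, $\|A\|_2 = \max_{\|x\|_2 = 1}|x^\top A x|$, so I would fix a unit vector $x$ attaining the spectral radius and split it into its non-negative and non-positive parts $x = x^+ - x^-$, which have disjoint supports; expanding $x^\top A x = (x^+)^\top A x^+ - 2\,(x^+)^\top A x^- + (x^-)^\top A x^-$ reduces the problem to bounding $|y^\top A z|$ for non-negative $y,z$ with $\|y\|_2,\|z\|_2 \le 1$, where in one term $y$ and $z$ have disjoint supports and in the other two $y=z$. The diagonal is peeled off immediately, $\big|\sum_i A_{ii}\,y_i z_i\big| \le \max_i|A_{ii}|\cdot\|y\|_2\|z\|_2 \le O(\alpha(\log(\ell/\alpha)+1))$ by hypothesis. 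The $y=z$ case is then reduced to the disjoint case by a standard random two-colouring of $\mathrm{supp}(y)$ into $B$ and its complement: the expectation of the bilinear form of (the restriction of $y$ to $B$) against (the restriction of $y$ to $B^c$) is a constant multiple of $y^\top A y - \sum_i A_{ii}y_i^2$, so some colouring realises a constant fraction of it, and the two resulting vectors are non-negative with disjoint supports. Hence everything comes down to $|y^\top A z|$ with $y,z\ge 0$, $\|y\|_2,\|z\|_2\le 1$ and disjoint supports.

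For such $y,z$ I would use the layer-cake identity
$$
y^\top A z \;=\; \int_0^1\!\!\int_0^1 \mathbf{1}_{\{y\ge s\}}^\top A\,\mathbf{1}_{\{z\ge t\}}\; ds\, dt ,
$$
each integrand being a bilinear form of indicators of two disjoint sets, hence bounded in absolute value by $\alpha\sqrt{|\{y\ge s\}|\,|\{z\ge t\}|}$ via the hypothesis of the lemma. Writing $N_y(s) = |\{i : y_i \ge s\}|$ this gives $|y^\top A z| \le \alpha\big(\int_0^1\sqrt{N_y(s)}\,ds\big)\big(\int_0^1\sqrt{N_z(t)}\,dt\big)$. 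The identity $\int_0^1 2s\,N_y(s)\,ds = \|y\|_2^2 \le 1$ lets me control the part of $\int\sqrt{N_y(s)}\,ds$ above a threshold $\tau$ by Cauchy–Schwarz, $\int_\tau^1\sqrt{N_y(s)}\,ds \le \sqrt{\tfrac12\ln(1/\tau)}$, while the contribution of the sub-threshold layers to the \emph{original} bilinear form is handled by the cruder estimate $|\mathbf{1}_P^\top A z| \le \|A\|_\infty\cdot\min(|P|,\|z\|_1)$ coming from $\|A\|_\infty \le \ell$; choosing the crossover $\tau = \Theta(\alpha/\ell)$ makes that residual $O(\alpha)$, so $\int_0^1\sqrt{N_y(s)}\,ds = O\big(\sqrt{\log(\ell/\alpha)+1}\big)$ and likewise for $z$, giving $|y^\top A z| = O(\alpha(\log(\ell/\alpha)+1))$.

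The step I expect to be the main obstacle is precisely this threshold bookkeeping: quantifying the scale below which the discrepancy hypothesis stops being useful and the row-bound $\|A\|_\infty \le \ell$ must take over, and verifying that the crossover at $\tau = \Theta(\alpha/\ell)$ makes the residual $O(\alpha)$ rather than something that degrades with $n$ — this is the only place $\ell$ enters, and it is what produces the $\log(\ell/\alpha)$ factor. Once this is in hand, collecting the diagonal term, the two square terms, and the cross term in the split of $x^\top A x$ shows every piece is $O(\alpha(\log(\ell/\alpha)+1))$, and therefore $\|A\|_2 = O(\alpha(\log(\ell/\alpha)+1))$, as claimed.
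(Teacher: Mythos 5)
The paper does not actually prove this lemma --- it is imported verbatim from Bilu and Linial --- so what matters is whether your reconstruction stands on its own. Your architecture is the right one (split $x$ into positive and negative parts, peel off the diagonal, reduce the $y=z$ terms to disjointly supported pairs by a random bipartition, then combine the discrepancy hypothesis with the $\ell$-bound through a layer-cake decomposition); this is a continuous rendition of Bilu--Linial's dyadic-level argument. But the threshold bookkeeping, which you yourself flag as the delicate step, is where it breaks. Cutting each vector at the \emph{absolute} height $\tau=\Theta(\alpha/\ell)$ and bounding the sub-threshold layers by $|\mathbf{1}_P^\top A z|\le \ell\cdot\min(|P|,\|z\|_1)$ leaves a residual of order $\ell\,\tau\,\|z\|_1=\Theta(\alpha\|z\|_1)$, and $\|z\|_1$ can be as large as $\sqrt{n}$. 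Concretely, take $y$ and $z$ uniform on disjoint sets of $k$ coordinates with value $1/\sqrt{k}$, where $k>(\ell/\alpha)^2$: every layer of both vectors lies below $\tau$, and your crude estimate certifies only $|y^\top A z|\le \ell\cdot\sqrt{k}\cdot(1/\sqrt{k})=\ell$, which is not $O(\alpha(\log(\ell/\alpha)+1))$; for such ``flat'' vectors it is the discrepancy hypothesis, not the row bound, that makes the term small. So the residual does degrade with $n$, and the crossover at $\tau=\Theta(\alpha/\ell)$ does not close the argument.

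The missing idea is that the cutoff must be \emph{relative}: what matters is the ratio of the two layer heights, not their absolute size --- this is exactly Bilu--Linial's pairing of the dyadic scales of the two vectors, with the $\ell$-bound used only when the scales differ by more than a factor $\ell/\alpha$. In your layer-cake language, split the $(s,t)$ square by the ratio $t/s$. On the unbalanced strip $\{t\le(\alpha/\ell)s\}$ use $|\mathbf{1}_{\{y\ge s\}}^\top A\,\mathbf{1}_{\{z\ge t\}}|\le \ell\,N_y(s)$, so its contribution is at most $\int_0^1\!\ell N_y(s)\,(\alpha/\ell)s\,ds\le \alpha\|y\|_2^2/2$, and symmetrically for $\{s\le(\alpha/\ell)t\}$. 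On the balanced region apply the discrepancy bound and Cauchy--Schwarz with weights $\sqrt{s/t}$ and $\sqrt{t/s}$: since $\int_{(\alpha/\ell)s}^{(\ell/\alpha)s} dt/t\le 2\log(\ell/\alpha)$, this region contributes at most $\alpha\bigl(2\log(\ell/\alpha)\int sN_y(s)ds\bigr)^{1/2}\bigl(2\log(\ell/\alpha)\int tN_z(t)dt\bigr)^{1/2}\le \alpha\log(\ell/\alpha)$. With this replacement (which also removes the need for the separate $\int_\tau^1\sqrt{N_y}$ computation), the rest of your outline --- diagonal term, bipartition, and the collection of the three pieces of $x^\top A x$ --- goes through and recovers the stated $O(\alpha(\log(\ell/\alpha)+1))$ bound.
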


\subsection{Basics in Markov Chain Monte Carlo}

Let $\Omega$ be a finite state space. We write $\{X_t\}_{t\geq 0}$ be a Markov chain on $\Omega$ whose transition is determined by the transition matrix $P\in \mathbb{R}^{\Omega\times \Omega}$. We also use $P$ to denote a Markov chain if it doesn't arise any ambiguity. The Markov chain on $\Omega$ is \textit{irreducible} if for all $X,Y\in \Omega$, there exists a $t$ such that $P^t(X,Y)>0$, and it is \textit{aperiodic} if for all $X\in \Omega$, $\text{gcd}\{t>0|P^t(X,X)>0\} = 1$.
If a Markov chain is both irreducible and aperiodic, then it has a unique stationary distribution $\pi$ on $\Omega$ such that $\pi = P\pi$. If a distribution $\pi$ satisfies that 
$$\forall X,Y\in \Omega: \pi(X)P(X,Y) = \pi(Y)P(Y,X)$$
then we say the Markov chain $P$ is reversible with respect to $\pi$, which also implies that $\pi$ is the stationary distribution of $P$.

We use the \textit{mixing time} to measure the convergence rate of a Markov chain. For two distributions $\mu,\nu$ on $\Omega$, let 
$$\|\mu,\nu\|_{TV}  = \frac{1}{2}\|\mu-\nu\|_{1} = \max_{S\subset \Omega}|\mu(S) - \nu(S)|$$ denote the total variation distance. 
Then, the mixing time of $P$ with stationary distribution $\pi$ is defined by 
$$\forall 0<\epsilon <1:  T_{\text{mix}}(\epsilon) = \max_{x\in \Omega}\min\left\{t~|~d_{TV}(\pi,P^t(x,\cdot))\right\}.$$
Here, $P^t(x,\cdot)$ is the distribution of $X_t$ in $\{X_t\}_{t\geq 0}$ starting with $X_0 = x$.

In this paper, given a positive integer, $N\in \mathbb{N}_+$, we are interested in sampling a subset of $[N]$ with some specific size. For some $k\leq N$, we use ${[N] \choose k}$ to denote the collection of $S\subset[N]$ and $|S| = k$. A widely used Markov chain for sampling a subset of $[N]$ is \textit{basis-exchange walk} (\cite{feder1992balanced, anari2019log, cryan2019modified, anari2021log}), which has been formally defined in Appendix \ref{app:tech_overview}. It can actually be used to accomplish a more general task: sampling independent sets in a matroid. One can verify that the basis-exchange walk is a special case of {\em Glauber dynamics}  (\cite{martinelli1999lectures}), and thus it is irreducible and aperiodic. Also, its unique stationary distribution is $\pi$. Note that $\pi$ is a distribution on $2^{[N]}$. We use the generating polynomial $g_{\pi}(\cdot)$ to characterize distribution $\pi$, where

$$g_{\pi}(z):=\sum_{S\in [N]} \pi(S)\prod_{i\in S}z_i.$$ 
We are interested in the strong log-concavity of $g(\cdot)$. A non-negative function $g : \mathbb R^n \to \mathbb R$ is {\em log-concave} if its domain is a convex set, and if it satisfies the inequality
\[
g(tx+(1-t)y) \geq (g(x))^t (g(y))^{1-t}
\]
for all $x,y$ in the domain of $g$. 
Now, strong log-concavity is defined as follows:

\begin{definition}[Strong log-concavity]
\label{defn:stronglogconcavity}
Let $g:\mathbb{R}^N\rightarrow \mathbb{R}$ be a polynomial. We say $g$ is strongly log-concave if for any $I\in [N]$, $\partial_I g$ is log-concave at $z\in \mathbb{R}^N_{\geq 0}$. Here, $\partial_I g:= \partial_{i_1}\cdots\partial_{i_k} g$ for $I = \{i_1, \cdots, i_k\}$.
\end{definition}

Let $P\in \mathbb{R}^{2^N\times 2^N}$ be the Markov chain that transits subsets in $[N]$. It has been known that distributions with strong log-concavity satisfy good properties for mixing time:

\begin{lemma}[\cite{cryan2019modified, anari2021log}]\label{l.mixing}
Let $\pi$ be a $k$-homogeneous strongly log-concave distribution on ${[N]\choose k}$, and $P$ be the corresponding basis-exchange walk, then for any $\alpha>0$ and $S\in {[N]\choose k}$, 
$$\|P^t(S,\cdot) , \pi \|_{\text{TV}} \leq \alpha$$ 
as long as 
$$t \geq {r\cdot (\log \log (\pi(S)^{-1}) + 2\log (\alpha^{-1}) + \log 4)}.$$
\end{lemma}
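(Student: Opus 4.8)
The plan is to obtain the stated mixing time from a \emph{modified log-Sobolev inequality} (MLSI) for the basis-exchange walk, which itself is a consequence of the local-to-global theory for strongly log-concave distributions. First I would observe that the basis-exchange walk on ${[N]\choose k}$ is exactly the down-up walk $P = P^{\downarrow}_{k\to k-1}\,P^{\uparrow}_{k-1\to k}$ of the weighted pure simplicial complex whose top faces are $\mathrm{supp}(\pi)$ carrying weights $\pi(\cdot)$: removing a uniformly random element realizes $P^{\downarrow}$, and re-inserting an element with probability $\propto \pi$ realizes $P^{\uparrow}$. Irreducibility, aperiodicity, and stationarity of $\pi$ are already recorded above. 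Since $g_\pi$ is $k$-homogeneous, multiaffine, and strongly log-concave in the sense of Definition~\ref{defn:stronglogconcavity} — equivalently, completely log-concave / Lorentzian — the theory of log-concave polynomials (\cite{anari2021log}) shows that every link of this complex is a one-sided local spectral expander: in each link of a face $\tau$ with $|\tau|\le k-2$ all pairwise correlations are non-positive, so the non-trivial eigenvalue of the local walk is controlled by exactly the parameter required by the local-to-global MLSI theorem.

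Given this local expansion, I would invoke the local-to-global theorem of Cryan, Guo and Mousa \cite{cryan2019modified} (see also \cite{anari2021log}): for such a complex the top-level down-up walk $P$ satisfies the MLSI with constant $\rho = \Omega(1/k)$, and consequently relative entropy to $\pi$ contracts geometrically:
\[
\DKL{\mu_0 P^{t}}{\pi}\;\le\;(1-\rho)^{t}\,\DKL{\mu_0}{\pi}\;\le\;e^{-t/r}\,\DKL{\mu_0}{\pi},
\]
for every starting distribution $\mu_0$, where I write $r := 1/\rho = \Theta(k)$ for the inverse MLSI constant appearing in the statement.

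It remains to specialize and convert to total variation. Taking $\mu_0 = \delta_S$, the point mass at the starting state, gives $\DKL{\delta_S}{\pi} = \log\!\big(\pi(S)^{-1}\big)$, hence after $t$ steps $\DKL{P^{t}(S,\cdot)}{\pi}\le e^{-t/r}\log\!\big(\pi(S)^{-1}\big)$. Pinsker's inequality then yields $\|P^{t}(S,\cdot),\pi\|_{\mathrm{TV}} \le \sqrt{\tfrac12\,e^{-t/r}\log(\pi(S)^{-1})}$. Requiring this to be at most $\alpha$, taking logarithms, and solving for $t$ gives $t \ge r\big(\log\log(\pi(S)^{-1}) + 2\log(\alpha^{-1}) + O(1)\big)$, which is the claimed bound once the lower-order constants are absorbed into the additive $\log 4$. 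Note that the relevant quantity is the initial relative entropy from the \emph{specific} start $S$, which is why the bound depends on $\pi(S)$ rather than on $\min_S \pi(S)$, and why the entropy method produces the characteristic $\log\log$ (rather than $\log$) dependence.

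The substantive mathematics — the implication from strong log-concavity to the MLSI — is entirely imported from \cite{cryan2019modified, anari2021log}. The one genuinely delicate point is to verify that ``strongly log-concave'' in the sense of Definition~\ref{defn:stronglogconcavity} is precisely the hypothesis those theorems need: several a priori different notions circulate in the literature (log-concave, strongly log-concave, completely log-concave, Lorentzian), and they coincide here only because $g_\pi$ is homogeneous and multiaffine, so one must explicitly invoke that equivalence and then match the resulting local-spectral-expansion parameters of the links to the exact form assumed in \cite{cryan2019modified}. Beyond that, the remaining work is bookkeeping: propagating the universal constants through MLSI $\Rightarrow$ entropy decay $\Rightarrow$ Pinsker so they collapse into the ``$+\log 4$'', and citing whichever of $(1-\rho)^{t}$ or $e^{-\rho t}$ the source states (either suffices, since $(1-\rho)^{t}\le e^{-\rho t}$). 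I expect this matching of parameters, not any new argument, to be the part requiring the most care.
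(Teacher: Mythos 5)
Your derivation is correct and coincides with the paper's justification: the paper does not prove Lemma~\ref{l.mixing} but imports it verbatim from \cite{cryan2019modified, anari2021log}, and your route (basis-exchange walk as the down-up walk, strong log-concavity giving local spectral expansion, the local-to-global MLSI with constant $\Omega(1/k)$, entropy contraction from the point mass at $S$ giving $\DKL{\delta_S}{\pi}=\log(\pi(S)^{-1})$, then Pinsker) is exactly the argument in those references, with your additive constant even slightly stronger than the stated $\log 4$. No gap.
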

\noindent Thus, we see that the mixing time of basis-exchange walk on subsets of size $r$ can be bounded roughly by $O(r(\log \log \pi(S)^{-1}) + \log (\alpha^{-1}))$, where $S$ is the initial state.

\section{Proof of Lemma \ref{lem:main}}\label{sec:proof_main}

We first give a formal restatement of Lemma \ref{lem:main} here:

\begin{lemma}[Restatement of Lemma \ref{lem:main}]\label{lem:main_formal}
    Fix $k,N,\epsilon$ such that $\epsilon \geq 0$, $k,N\in \mathbb{N}$ and $k\leq N$. Given a vector $x \in \mathbb R^N$ in the positive quadrant such that $\|x\|_0 \leq k$, let $\pi$ be the distribution $$ \pi [S] \propto e^{-\epsilon\|x-x|S\|_1}$$ with support $\{S\in \{0,1\}^N : \|S\|_0 = k \}$.
    Then, for any $0<\delta<1$, there exists an $\widetilde O_{\delta}(k)$ time random algorithm $\mathcal{A}$ that outputs an element in the same support such that if $\pi'$ is the output distribution of $\mathcal{A}$, then $\|\pi - \pi'\|_{TV} \leq \frac{\delta}{e^{2\epsilon} + 1}$.
\end{lemma}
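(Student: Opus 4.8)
The first move is to recognize that, because $x\in\mathbb R^N_{\ge 0}$, the score simplifies: $\|x-x|S\|_1=\sum_{i\notin S}x_i=\|x\|_1-\langle x,\mathbf 1_S\rangle$, so
\[
\pi[S]\;\propto\; e^{-\epsilon\|x\|_1}\exp\!\Big(\epsilon\textstyle\sum_{i\in S}x_i\Big)\;\propto\;\prod_{i\in S}\lambda_i,\qquad \lambda_i:=e^{\epsilon x_i}>0 .
\]
Thus $\pi$ is exactly the weighted uniform-matroid distribution of rank $k$ with weights $\lambda_i$, and its generating polynomial is $g_\pi(z)\propto e_k(\lambda_1z_1,\dots,\lambda_Nz_N)$, a positive diagonal rescaling of the elementary symmetric polynomial. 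Since $e_k$ is real stable (equivalently, completely/strongly log-concave) and strong log-concavity is preserved under substitutions $z_i\mapsto\lambda_i z_i$ with $\lambda_i>0$, the distribution $\pi$ is $k$-homogeneous and strongly log-concave. Hence Lemma~\ref{l.mixing} applies to the basis-exchange walk with stationary distribution $\pi$.

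Next I bound the mixing time with a good warm start. Take $S_0$ to be any size-$k$ set containing $\mathrm{supp}(x)$ (which is possible since $\|x\|_0\le k$). Then $\prod_{i\in S_0}\lambda_i=e^{\epsilon\|x\|_1}=\max_{|S|=k}\prod_{i\in S}\lambda_i$, so $S_0$ is a mode and
\[
\pi(S_0)=\frac{e^{\epsilon\|x\|_1}}{\sum_{|S|=k}\prod_{i\in S}\lambda_i}\ \ge\ \binom{N}{k}^{-1},
\qquad\text{hence}\qquad
\log\log\big(\pi(S_0)^{-1}\big)=O(\log k+\log\log N).
\]
Applying Lemma~\ref{l.mixing} with target accuracy $\alpha:=\delta/(e^{2\epsilon}+1)$ (so $\log(\alpha^{-1})=O(\epsilon+\log(1/\delta))$), running
\[
T=O\!\big(k\,(\log k+\log\log N+\epsilon+\log(1/\delta))\big)=\widetilde O_\delta(k)
\]
steps of the walk started at $S_0$ produces an output law $\pi'$ with $\|\pi-\pi'\|_{\mathrm{TV}}\le\alpha=\delta/(e^{2\epsilon}+1)$, which is the claimed bound.

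It remains to implement a single exchange step in amortized/expected $\widetilde O(1)$ time, so that the total running time is $\widetilde O_\delta(k)$. The key observation is that $\lambda_i=e^{\epsilon x_i}=1$ for every $i\notin\mathrm{supp}(x)$, so at most $k$ coordinates carry a non-trivial weight. Removing $e$ from $S_{t-1}$ costs $O(\log k)$. For the addition step we must draw $y\in[N]\setminus S_t'$ with probability $\propto\lambda_y$; we split this as a two-component mixture — (i) the at most $k$ ``heavy'' available coordinates, sampled proportionally to $\lambda$ via a balanced BST / Fenwick tree over prefix sums maintained under the single insertion and deletion per step ($O(\log k)$), and (ii) the ``bulk'' of available coordinates with $\lambda=1$, whose count we track explicitly; when $N\ge 3k$ we sample a uniform element of $[N]$ and reject if it lies in $S_t'\cup\mathrm{supp}(x)$ (acceptance probability $\ge 1/3$, so $O(1)$ expected trials, each an $O(\log k)$ membership test), while for $N<3k$ we simply maintain an explicit array over all of $[N]$ in $\widetilde O(k)$ total. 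Initialization (computing the $\le k$ values $\lambda_i$ and building the data structures) costs $O(k\log k)$. Combining with the $\widetilde O_\delta(k)$ step count gives the time bound, so the algorithm $\mathcal A$ meets all requirements.

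\textbf{Main obstacle.} The conceptual heart is identifying $\pi$ with a weighted rank-$k$ matroid distribution and quoting strong log-concavity so that Lemma~\ref{l.mixing} applies; the delicate part of the argument is the $\widetilde O(1)$-per-step implementation when $N$ may be astronomically larger than $k$, which forces the mixture/rejection decomposition above rather than naively enumerating $[N]\setminus S_t'$. Everything else (the $\ell_1$ simplification, the mode-based warm start, and the accuracy accounting) is routine.
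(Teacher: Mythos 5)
Your proposal is correct and follows essentially the same route as the paper's proof: the same $\ell_1$-to-product reduction $\pi[S]\propto\prod_{i\in S}e^{\epsilon x_i}$, strong log-concavity of the generating polynomial obtained as a positive diagonal rescaling of the uniform-matroid (elementary symmetric) basis polynomial, the basis-exchange walk started at a mode containing $\mathrm{supp}(x)$ so that $\pi(S_0)\ge\binom{N}{k}^{-1}$, the mixing bound of Lemma~\ref{l.mixing} with $\alpha=\delta/(e^{2\epsilon}+1)$, and a per-step sampler that treats the at most $k$ non-unit-weight coordinates separately from the uniform bulk. Your implementation details (Fenwick tree plus rejection sampling for the bulk, giving expected $\widetilde O(1)$ per step) are if anything slightly more explicit than the paper's, and your warm-start estimate $\log\log(\pi(S_0)^{-1})=O(\log k+\log\log N)$ is marginally sharper.
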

Here, $x|S$ is the restriction of $x$ on $S$. That is, for any $ i \in [N]$, $(x|S)_i = x_i$ if $S_i = 1$ and $(x|S)_i = 0$ otherwise. In particular, if $S^*$ is the support of $x$, then $x = x|S^*$ and thus $S^*$ maximizes the probability of been chosen. We first show how the distribution defined in Lemma \ref{lem:main_formal} is connected to the distribution in \Cref{eq:distribution}.

For any fixed $x\in \mathbb{R}_{\geq 0}^N$, we have that
\begin{equation*}
    \begin{aligned}
        \Pr[S] &\propto \exp(-\epsilon\|x-x|S\|_1) = \exp\left( -\epsilon \sum_{i\in [N]} x_i - (x|S)_i\right)\\
        &= \exp\left({ -\epsilon \sum_{i\notin S} x_i}\right) \propto \exp \left({\epsilon \sum_{i\in [N]}x_i}\right)\cdot  \exp\left({ -\epsilon \sum_{i\notin S} x_i}\right)\\
        & = \exp \left(\epsilon \sum_{i\in S} x_i \right) = \prod_{i\in S} \exp(\epsilon\cdot x_i).
    \end{aligned}
\end{equation*}

Here, we write $i\in S$ if $S_i = 1$ otherwise $i\notin S$. Thus, if we let $N = {n\choose 2}$ and $x$ be the weights of the input graph, then sampling from the distribution in Lemma \ref{lem:main_formal} is equivalent to sampling from the distribution in \Cref{eq:distribution}. To sample from the distribution 
\begin{equation}\label{eq:distribution_restate}
    \forall S \in 2^{[N]} \text{ and } |S| = k, \pi[S]\propto \prod_{e\in S} \exp(\epsilon\cdot x_e),
\end{equation}

we apply the basis-exchange walk which is commonly used in Markov chain Monte Carlo method. The algorithm is summarized in Algorithm \ref{alg:exchange_walk}.

\begin{algorithm}[h]
	\caption{{Basis-exchange walk}}\label{alg:exchange_walk}
	\KwIn{A parameter $k$, time limit $T$}
	\KwOut{A subset $S\subset [N]$.}

    Initialize $S_0\subseteq N$ such that $|S_0| = k$\;

    \For{$t = \{1,2,\cdots, T\}$}{
      Choose an $e\in S_{t-1}$ uniformly at random, let $S_t' \leftarrow S_{t-1}\backslash \{e\}$\;
      Choose an element $y$ in $[N]\backslash S_t'$ with probability $\propto \pi(S_t'\cup \{y\})$, let $S_{t} \leftarrow S_t'\cup \{y\}$\;
    }
    \Return{$S_T$}.
\end{algorithm}

To analyze how the distribution of $S_T$ approximates $\pi$ in \Cref{eq:distribution_restate}, we first show that the target distribution $\pi$ satisfies strong log-concavity.

\begin{lemma}\label{l.f_is_concave}
 Fix any $x\in \mathbb{R}_{\geq 0}^N$, $k\geq 0, \epsilon>0$, the generating polynomial of distribution $\pi$ defined in  \Cref{eq:distribution_restate} is strongly log-concave at any $z\in \mathbb{R}_{\geq 0}^N$.
\end{lemma}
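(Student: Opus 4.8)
The plan is to recognize the generating polynomial $g_\pi$ as a diagonally rescaled elementary symmetric polynomial and then invoke the (by now standard) strong log-concavity of elementary symmetric polynomials on the nonnegative orthant. First I would write $\lambda_e := \exp(\epsilon\cdot x_e)\ge 1$ for $e\in[N]$ and set $Z := \sum_{S\in\binom{[N]}{k}}\prod_{e\in S}\lambda_e$, so that $\pi(S)=\frac1Z\prod_{e\in S}\lambda_e$ by \eqref{eq:distribution_restate}. Then
\[
g_\pi(z)=\sum_{S\in\binom{[N]}{k}}\pi(S)\prod_{i\in S}z_i=\frac1Z\sum_{S\in\binom{[N]}{k}}\prod_{i\in S}(\lambda_iz_i)=\frac1Z\,e_k(\lambda_1z_1,\dots,\lambda_Nz_N),
\]
where $e_k(y_1,\dots,y_N):=\sum_{|S|=k}\prod_{i\in S}y_i$ is the degree-$k$ elementary symmetric polynomial. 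Hence $g_\pi$ is, up to the positive scalar $1/Z$, the composition of $e_k$ with the invertible diagonal map $z\mapsto Dz$, $D=\mathrm{diag}(\lambda_1,\dots,\lambda_N)$, which carries $\mathbb{R}^N_{\ge0}$ into itself.

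Next I would reduce strong log-concavity of $g_\pi$ (in the sense of \Cref{defn:stronglogconcavity}) to that of $e_k$. For any $I\subseteq[N]$ one has $\partial_I g_\pi(z)=\frac1Z\big(\prod_{i\in I}\lambda_i\big)(\partial_I e_k)(Dz)$, so $\log\partial_I g_\pi$ differs from $(\log\partial_I e_k)\circ D$ only by an additive constant; since $z\mapsto Dz$ is linear and preserves the nonnegative orthant, concavity of $\log\partial_I e_k$ on $\mathbb{R}^N_{\ge0}$ passes to $\log\partial_I g_\pi$. Thus it suffices to prove $e_k$ is strongly log-concave. A one-line computation gives $\partial_i e_k=e_{k-1}$ in the remaining $N-1$ variables, so inductively $\partial_I e_k$ equals $e_{k-|I|}$ in the variables outside $I$ when $|I|\le k$, while $\partial_I e_k\equiv 0$ (which is vacuously log-concave) when $|I|>k$. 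Therefore strong log-concavity of $e_k$ is equivalent to the single statement that every elementary symmetric polynomial $e_m(y_1,\dots,y_M)$ with $0\le m\le M$ is log-concave on $\mathbb{R}^M_{>0}$.

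The only genuinely nontrivial ingredient is this last statement, which I expect to be the main obstacle — though it is classical, with two routes. (a) Cite it: $e_m$ is a homogeneous real-stable polynomial with nonnegative coefficients (for instance because $e_m$ is the coefficient of $w^{M-m}$ in the real-stable product $\prod_{i=1}^M(y_i+w)$, and this class is closed under differentiating in $w$ and then setting $w=0$), equivalently its associated distribution is strongly Rayleigh; every such polynomial is Lorentzian, hence completely — and in particular strongly — log-concave on $\mathbb{R}^M_{\ge0}$. This is exactly the class of polynomials underlying the rapid-mixing guarantee of \Cref{l.mixing} (see \cite{anari2021log,anari2019log,cryan2019modified}). (b) Alternatively, argue directly that $\nabla^2\log e_m\preceq 0$ on $\mathbb{R}^M_{>0}$ via an explicit Hessian computation, or by reducing to Newton/Maclaurin-type inequalities among the $e_m$'s. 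Either way no idea beyond the existing theory of (strongly) log-concave polynomials is needed; the only bookkeeping to watch is the reduction ``$\partial_I e_k$ is again an elementary symmetric polynomial'' and the mild boundary behaviour of $\log$ on $\mathbb{R}^N_{\ge0}$.
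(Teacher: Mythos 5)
Your proposal is correct and follows essentially the same route as the paper: the paper also writes $g_\pi(z)$ as $\tfrac{1}{Z}\,\widehat{g}(Mz)$ for a diagonal matrix $M$ of the weights $\exp(\epsilon w_e)$, where $\widehat{g}$ is the basis generating polynomial of the uniform matroid $\binom{[N]}{k}$ --- i.e.\ exactly your elementary symmetric polynomial $e_k$ --- invokes the cited log-concavity of such polynomials (Anari et al./Adiprasito et al., the same body of results as your Lorentzian/real-stable citation), and then handles strong log-concavity by the same induction on partial derivatives, which remain generating polynomials of the same form. Your explicit identity $\partial_I e_k = e_{k-|I|}$ in the remaining variables is just a cleaner statement of the paper's inductive step.
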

\begin{proof}
  We use the celebrated log-concavity of the basis generating polynomial of any matroids (see, e.g., \cite{anari2018log,adiprasito2018hodge}). They imply that, for any $k, N$ such that $k\leq N$, if $\mathcal{B}\subseteq {[N]\choose k}$ is a collection of subsets of size $k$, then the polynomial 
  \begin{equation}\label{e.matroid}
      \widehat{g}(z_1,\cdots, z_N) :=\sum_{S\in \mathcal{B}}\prod_{e\in S}z_e
  \end{equation}
  is log-concave. Let $g(\cdot):\mathbb{R}^N_{\geq 0}\rightarrow \mathbb{R}$ be the generating polynomial of distribution $\pi$ defined in  \Cref{e.distribution}. Let $Z(G,\epsilon)$ be the partition function of $\pi$ with respect to $G$ and $\epsilon$, and let $f(e) = \exp(\epsilon\cdot w_e)$ for any $e\in [N]$.  That is,
  $Z(G,\epsilon) = \sum_{S\in {N\choose k}} \prod_{e\in S}f(e)$.  
  We use $M:\mathbb{R}^N\times \mathbb{R}^N$ to denote a diagonal matrix where $M_{ee} = f(e)$ for $e\in [N]$. Then, we have that for any $z\in \mathbb{R}_{\geq 0}^N$, 
  \begin{equation*}
    \begin{aligned}
      g(z) &= \sum_{S\in {N\choose k}} \pi(S) \prod_{e\in S} z_e = \frac{\sum_{S\in {N\choose k}} \prod_{e\in S}f(e) \prod_{e\in S} z_e}{Z(G,\epsilon)} \\
      &= \frac{1}{Z(G,\epsilon)} \sum_{S\in {N\choose k}}\prod_{e\in S} f(e)z_e 
    \end{aligned}
  \end{equation*}
  Thus, for any $z,y\in \mathbb{R}_{\geq 0}^N$ and $\lambda\in (0,1)$, let $\mathcal{B} = {[N] \choose k}$, then for $\bar z=\lambda z + (1-\lambda)y$, 
  \begin{equation*}
    \begin{aligned}
      g(\bar z) &= \frac{1}{Z(G,\epsilon)}\sum_{S\in {N\choose k}}\prod_{e\in S} f(e)(\lambda z_e + (1-\lambda)y_e)\\
      &=\frac{1}{Z(G,\epsilon)} \cdot \widehat{g}(\lambda Mz + (1-\lambda)My)\\
      &\geq\left( \frac{1}{Z(G,\epsilon)} \cdot \widehat{g}(Mz)\right)^\lambda  \left(\frac{1}{Z(G,\epsilon)} \cdot \widehat{g}(My)\right)^{1-\lambda}\\
      & = g^\lambda(z)\cdot g^{1-\lambda}(y).
    \end{aligned}
  \end{equation*}
  Here, $\widehat{g}(\cdot)$ is defined in \Cref{e.matroid}. Thus, $g(\cdot)$ is log-concave. Further, it's easy to verify that for any $i\in[N]$, $\partial_i g$ is the generating polynomial of distribution on ${[N]\backslash \{e\} \choose k-1}$ and is $(k-1)$-homogeneous. By the same argument, $\partial_i g$ is also log-concave. By induction, we have that for any $I\in [N]$, $\partial_I g$ is log-concave at $z\in \mathbb{R}^N_{\geq 0}$. Thus, $g(\cdot)$ is strongly log-concave at $z\in \mathbb{R}^N_{\geq 0}$.
\end{proof}

This allows us to say that the distribution of the output of the basis-exchange walk in Algorithm \ref{alg1_simple} is close to $\pi$ under total variation distance. Fix a $T\in \mathbb{N}$, recall that $S_T$ is the edge set chosen by Algorithm \ref{alg:exchange_walk}. The following lemma shows rapid mixing for strongly log-concave distributions.

\begin{lemma}\label{l.small_tv_distance}
  Given $k\geq 0$. Let $\pi_T$ be the distribution of $S_T$, then there is a $c$ such that if $T = c\cdot k(\epsilon + \log n + \log (1/\delta))$, we have that $\|\pi_T - \pi\|_{TV}\leq \frac{\delta}{e^{2\epsilon} + 1}$.
\end{lemma}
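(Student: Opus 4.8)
The plan is to combine the strong log-concavity established in Lemma~\ref{l.f_is_concave} with the generic mixing-time estimate of Lemma~\ref{l.mixing}; the only genuine work is a crude, $\|x\|_1$-free lower bound on $\pi(S_0)$ for the walk's initial state $S_0$ (which is $S_0=E$ in Algorithm~\ref{alg1_simple}, or, in the general setting of Lemma~\ref{lem:main_formal}, any size-$k$ set containing $\mathrm{supp}(x)$; this is possible since $\|x\|_0\leq k$).

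First I would invoke the two structural facts. By Lemma~\ref{l.f_is_concave}, the generating polynomial $g_\pi$ of the target distribution $\pi$ from \eqref{eq:distribution_restate} is strongly log-concave at every $z\in\mathbb{R}^N_{\geq 0}$, and $\pi$ is supported on $\binom{[N]}{k}$, hence $k$-homogeneous. Therefore Lemma~\ref{l.mixing} applies directly, with $r=k$ in its notation: starting from $S_0\in\binom{[N]}{k}$, the law $\pi_T$ of the $T$-th state of the basis-exchange walk satisfies $\|\pi_T-\pi\|_{\mathrm{TV}}\leq\alpha$ as soon as $T\geq k\bigl(\log\log(\pi(S_0)^{-1})+2\log(\alpha^{-1})+\log 4\bigr)$.

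Next comes the one substantive estimate: an upper bound on $\pi(S_0)^{-1}$ that is independent of the (possibly large) weights. Choosing $S_0\supseteq\mathrm{supp}(x)$ gives $\sum_{e\in S_0}x_e=\|x\|_1$, since the extra coordinates carry weight $0$. Writing $Z=\sum_{S\in\binom{[N]}{k}}\exp\bigl(\epsilon\sum_{e\in S}x_e\bigr)$ for the partition function and using $x\geq 0$, every exponent obeys $\sum_{e\in S}x_e\leq\|x\|_1$, so $Z\leq\binom{N}{k}\exp(\epsilon\|x\|_1)$. Hence
\[
\pi(S_0)^{-1}=\frac{Z}{\exp(\epsilon\|x\|_1)}\leq\binom{N}{k}\leq N^{k}\leq n^{2k},
\]
and consequently $\log\log(\pi(S_0)^{-1})\leq\log(2k\log n)=O(\log n)$, using $k\leq N\leq n^2$. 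The cancellation of the weight-dependent factor is exactly why one must start the walk at (a superset of) $\mathrm{supp}(x)$ rather than at an arbitrary basis; this is the step I expect to be the crux, though it is short once one notices it.

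Finally I would substitute $\alpha=\delta/(e^{2\epsilon}+1)$ into the mixing condition. Since $\log(\alpha^{-1})=\log(e^{2\epsilon}+1)+\log(1/\delta)\leq 2\epsilon+\log 2+\log(1/\delta)$, the requirement becomes $T\geq k\cdot O\bigl(\log n+\epsilon+\log(1/\delta)\bigr)$. Thus there is an absolute constant $c$ such that $T=c\,k(\epsilon+\log n+\log(1/\delta))$ forces $\|\pi_T-\pi\|_{\mathrm{TV}}\leq\delta/(e^{2\epsilon}+1)$, which is the claimed bound; this is the same $c$ used in Algorithm~\ref{alg1_simple} and Algorithm~\ref{alg:exchange_walk}.
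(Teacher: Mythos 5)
Your proposal is correct and follows essentially the same route as the paper: strong log-concavity (Lemma~\ref{l.f_is_concave}) plus the mixing bound of Lemma~\ref{l.mixing}, with the key point being that starting from a size-$k$ set containing the support makes $\pi(S_0)$ maximal, so $\pi(S_0)^{-1}\leq\binom{N}{k}\leq n^{2k}$ and hence $\log\log(\pi(S_0)^{-1})=O(\log n)$, after which substituting $\alpha=\delta/(e^{2\epsilon}+1)$ gives the stated $T$. Your explicit partition-function cancellation of the $\exp(\epsilon\|x\|_1)$ factor is just a slightly more detailed rendering of the paper's observation that $\pi(S_0)=\max_S\pi(S)$ over a support of size at most $N^k$.
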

\begin{proof}
  Let $P(G)$ be the basis exchange described in Algorithm \ref{alg1_simple}. By Cryan et al.~\cite{cryan2019modified} and Anari et al. \citet{anari2021log} (see Lemma \ref{l.mixing}), we just need to set 
  \begin{equation}\label{e.small_tv_distance}
    T = k\left(\log \log \left({1 \over \pi(S_0)} \right) + 2\log\left({1 \over \alpha} \right) + \log 4\right),
  \end{equation}
  then the total variation distance between $\pi_T$ and $\pi$ is at most $\alpha := \delta/(e^{\epsilon'}+1)$. Note that there are at most $N^k$ elements in $\texttt{supp}(\pi)$ where $N = O(n^2)$, and since $|E|\subset S_0$, then 
  $$\pi(S_0) = \max_{S\in {[N]\choose k}} \pi(S).$$
  Thus, $\log \log(\pi(S_0)^{-1})\leq O(\log k+\log N) = O(\log n)$. Setting $\alpha=\frac{\delta}{e^{2\epsilon}+1}$ in equation (\ref{e.small_tv_distance}) completes the proof.
\end{proof}

Finally, we analysis the run-time of Algorithm \ref{alg:exchange_walk}. In particular, we show that after proper linear time initialization, we need only $O(|E|)$ space and $O(\log n)$ time to perform each step of the basis-exchange walk, which finalize the proof of Lemma \ref{lem:main_formal}.

\textbf{Run time analysis of Algorithm \ref{alg:exchange_walk}.} We make three standard assumptions on our computation model:
\begin{enumerate}
  \item Storing the weight on each edge requires $O(1)$ space. (For example, in a floating-point style.) Note that this assumption is typically used in the non-private setting as well.

  \item 
  Tossing a biased coin (i.e., sampling from the Bernoulli distribution)  with bias $p$ needs time $O(1)$ for any $p\in (0,1)$. 
  
  \item Given $b>0$, sampling a value from the random variable with density function $\texttt{PDF}(x) = \frac{1}{2b}\exp(-|x|/b)$ needs time $O(1)$.
\end{enumerate}

Let $f(e) = \exp(\epsilon\cdot w_e)$ for any $e\in [N]$. We first prove the running time and space requirement of our algorithm, i.e, with high probability, for any $\delta\in (0,1)$, Algorithm \ref{alg:exchange_walk} runs in time $O(|E|\log (n/\delta)\log n)$, and the space complexity of Algorithm \ref{alg:exchange_walk} is $O(|E|)$. First, in line $1$ of Algorithm \ref{alg:exchange_walk}, we choose a subset $S_0$ arbitrarily, which can be implemented efficiently. 

Apart from that, the only non-trivial part lies in line 3 and line 4, where the algorithm drops an element uniformly at random and samples an element in $[N]\backslash S_t'$. Note that for any $y\in [N]\backslash S_t'$, 
$$\pi(S'_t \cup \{y\}) \propto \prod_{e\in S'_t \cup \{y\}} f(e) \propto f(y),$$
since $\prod_{e\in S'_t} f(e)$ is same for all $y\in [N]\backslash S_t'$. Recall that for any $e\notin E$, the probabilistic weight is always $f(e) = \exp(0\cdot \epsilon) = 1$, then we only need to read the weights in the edge set at the start of the algorithm, which needs time complexity of $O(|E|)$ (here, we assume that we need $O(1)$ space to store the weight of each edge). In the $t$-th step of the basis-exchange walk, we actually do the following:

\begin{enumerate}
  \item Let $E_t = ([N]\backslash S_t') \cap E$, and $\bar{E}_t = ([N]\backslash S_t')\backslash E_t$.
  \item Let $y^*$ be any element such that $y^*\in \bar{E}_t$ (we note that $f(a) = 1$ for all $a\in \bar{E}_t$). Define $\mu_t$ be the distribution on $Z_t = E_t \cup \{y^*\}$, where 
  \begin{equation*}
    \mu_t(z) \propto \left\{
    \begin{aligned}
      f(z), &\quad z\in E_t,\\
      |\bar{E}_t|, &\quad z = y^*.
    \end{aligned}
    \right.
  \end{equation*}
  \item Sample an element $z$ in $Z_t$ according to $\mu_t$.
  \item If $z\in E_t$, let $y\leftarrow z$ (in line $4$). If $z = y^*$, then sample an $e\in \bar{E}_t$ uniformly at random, and let $y\leftarrow e$.
\end{enumerate}
By repeatedly tossing a biased coin, we can sample an element from a collection of $r$ elements in time $O(\log r)$, if the probability density is already known. Thus, after the initialization, we need only $O(|E|)$ space and $O(\log n)$ time to perform each step of the basis-exchange walk.

\section{Missing Proofs in \Cref{sec:exchange_walk}}\label{app:proof_of_main}

In this section, we give the proof Theorem \ref{t.main}. Throughout this section, we actually analyze a harder case where the number of edges is confidential. Then, theorems in this section can also be easily adopted to the setting where $|E|$ is not confidential (Algorithm \ref{alg1_simple}). For the case where $|E|$ is confidential, we give the following algorithm that is very similar with Algorithm \ref{alg1_simple} except the initialization phase. The algorithm is summarized in Algorithm \ref{alg1}.

\begin{algorithm}[h]
	\caption{{Private cut approximation by $T$ steps of basis-exchange walk(with confidential $|E|$)}}\label{alg1}
	\KwIn{A graph $G\in \mathbb{R}^N_{\geq 0}$, privacy budgets $\epsilon$, $\delta$ and parameter $\beta\in (0,1)$.}
	\KwOut{A synthetic graph $\widehat{G}$.}
    Let $c = \log(1/\beta)$ and $k\leftarrow\min \left\{N, |E|+\lap(1/\epsilon) + c/\epsilon\right\}$\;
    Choose an arbitrary $S_0\subset [N]$ and $|S_0| = k$ which maximizes $|E \cap S_0|$\;
    Set $T\leftarrow O(k(\epsilon + \log n + \log (1/\delta)))$\;
    \For{$t = \{1,2,\cdots, T\}$}{
      Choose an $e\in S_{t-1}$ uniformly at random, let $S_t' \leftarrow S_{t-1}\backslash \{e\}$\;
      Choose an element $x$ in $[N]\backslash S_t'$ with probability $\propto \pi(S_t'\cup \{x\})$, let $S_{t} \leftarrow S_t'\cup \{x\}$\;
    }
    Let $\widehat{E}\subset[N]$ be the edge set corresponding to $S_T$ \;
    \For{$e\in \widehat{E}$}{
        Draw an independent Laplace noise $Z \sim \lap(1/\epsilon)$ \;
        $w_e  \leftarrow \max \{0, w_e + Z$\} \;
    }
    \For{$e'\in [N] \land e' \notin \widehat{E}$}{
        $w_{e'} \leftarrow 0$ \;
    }
    \Return{$\widehat{G} = (V,\widehat{E})$}.
\end{algorithm}

\subsection{Run time analysis of Algorithm~\ref{alg1}}\label{s.time}
Here, we show that Algorithm~\ref{alg1} runs in almost linear time and linear space. First, in line $2$ of Algorithm \ref{alg1}, if $k\geq |E|$, we then choose an arbitrary $S_0$ that covers $E$. Otherwise, we just choose an arbitrary subset of $E$ with size $k$. Both cases can be implemented efficiently. Then, when implementing the basis-exchange walk, the run time basically follows the analysis on the run time of basis exchange walk in Appendix \ref{sec:proof_main}. After obtaining the topology, since with high probability, $|\widehat{E}| = O(|E|)$. Then the loop in line $9$ completes in time $O(|E|)$. Also, the loop in line $13$ can be efficiently implemented with a proper data structure. Therefore, Algorithm \ref{alg1} does run in $O(|E|\log(n/\delta)\log n)$ time and needs $O(|E|)$ space with high probability.

\subsection{Analysis on the privacy guarantee of Algorithm \ref{alg1}}\label{sec:privacy}

In this section, we prove the privacy guarantee of Algorithm \ref{alg1}. 

\begin{theorem}\label{t.privacy}
Given $\epsilon,\delta>0$, Algorithm \ref{alg1} is $(4\epsilon,\delta)$-differentially private.
\end{theorem}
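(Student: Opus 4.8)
The plan is to decompose Algorithm~\ref{alg1} into three stages and bound the privacy loss of each, then combine via basic composition (Lemma~\ref{l.composition}). \textbf{Stage 1} is the choice of the size parameter $k = \min\{N, |E| + \lap(1/\epsilon) + c/\epsilon\}$. Since $|E|$ has $\ell_1$-sensitivity $1$ with respect to edge-neighboring graphs (adding or deleting one edge changes $|E|$ by at most $1$), releasing $k$ is an instance of the Laplace mechanism (Lemma~\ref{l.laplace}), and is $(\epsilon,0)$-DP; the deterministic clipping to $N$ and the shift by $c/\epsilon$ are post-processing. \textbf{Stage 2} is the basis-exchange walk producing $\widehat E$, which we must argue is (close to) the exponential-mechanism sampler from $\pi[S] \propto \prod_{e\in S}\exp(\epsilon w_e)$ with the $\ell_1$ score function $\|x - x|S\|_1$ of sensitivity $1$ (the equivalence is established just before \Cref{eq:distribution_restate}). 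Exact sampling from $\pi$ is $\epsilon$-DP by Lemma~\ref{l.exp}; but Algorithm~\ref{alg1} only runs $T$ steps, so by Lemma~\ref{l.small_tv_distance} the output distribution $\pi_T$ is within $\frac{\delta}{e^{2\epsilon}+1}$ in total variation of $\pi$. The standard fact that being $\tau$-close in TV to an $\epsilon$-DP output distribution yields $(\epsilon, (1+e^\epsilon)\tau)$-DP then gives $(\epsilon,\delta)$-DP for this stage. \textbf{Stage 3} is the Laplace reweighting $w_e \leftarrow \max\{0, w_e + Z\}$ on $e\in\widehat E$ (and zeroing outside), which on a fixed topology is the Laplace mechanism on a vector of sensitivity $1$ and hence $(\epsilon,0)$-DP, with the truncation at $0$ being post-processing.

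First I would make the sensitivity claims precise: I need that conditioned on the same topology $\widehat E$, two neighboring graphs $G,G'$ differ in the weight vector restricted to $\widehat E$ by $\ell_1$-norm at most $1$, which is immediate from the neighboring definition. Then I would handle the subtlety that the topology distribution $\pi$ itself depends on the private weights $w_e$ (through $f(e)=\exp(\epsilon w_e)$), so the privacy of Stage 2 is genuinely that of the exponential mechanism, not post-processing of something public — this is why we invoke Lemma~\ref{l.exp}/its continuous analogue rather than treating $\widehat E$ as data-independent. Next I would address that $k$ (Stage 1's output) feeds into both the support size of $\pi$ and the number of walk steps $T$; but $k$ is released privately in Stage 1, so everything downstream (Stages 2 and 3) can be analyzed conditioned on $k$ as a fixed public value, and composition applies cleanly. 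Finally, summing the privacy costs: Stage 1 contributes $(\epsilon,0)$, Stage 2 contributes $(\epsilon,\delta)$, Stage 3 contributes $(\epsilon,0)$. That sums to $(3\epsilon,\delta)$ by Lemma~\ref{l.composition}, so to reach the stated $(4\epsilon,\delta)$ I expect one extra $\epsilon$ to come from a slack factor in one of the stages — most likely the mixing-time lemma is stated with $\alpha = \delta/(e^{2\epsilon}+1)$, and converting a $\tau$-TV-close sampler into a DP guarantee may cost $2\epsilon$ rather than $\epsilon$ (because one needs the bound in both directions and the TV slack $\tau$ interacts with $e^{2\epsilon}$), or because the exponential mechanism as normalized in the appendix uses $\epsilon/(2\,\mathrm{sens})$ and with $\mathrm{sens}=1$ becomes $2\epsilon$-DP in the convention where the score is negated. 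I would pin this down by tracking the exact constant in whichever of Lemmas~\ref{l.exp}, \ref{ec.pri} and the TV-to-DP conversion is used.

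The main obstacle I anticipate is the rigorous TV-to-DP conversion for Stage 2. The clean statement is: if $\mu$ is $(\epsilon,0)$-DP as a function of the input and $\widetilde\mu$ satisfies $\|\widetilde\mu - \mu\|_{TV}\le\tau$ for every input, then $\widetilde\mu$ is $(\epsilon, (1+e^\epsilon)\tau)$-DP; one proves it by $\widetilde\mu_G(S)\le\mu_G(S)+\tau\le e^\epsilon\mu_{G'}(S)+\tau\le e^\epsilon(\widetilde\mu_{G'}(S)+\tau)+\tau = e^\epsilon\widetilde\mu_{G'}(S)+(1+e^\epsilon)\tau$. The delicacy is that the TV bound from Lemma~\ref{l.small_tv_distance} must hold uniformly over all inputs $G$ with the \emph{same} $k$, and that the initial state $S_0$ (chosen to maximize $|E\cap S_0|$) is the mode of $\pi$, so $\pi(S_0)$ is lower-bounded uniformly — I would need to confirm the $\log\log\pi(S_0)^{-1} = O(\log n)$ bound is genuinely input-independent, which it is since $\pi(S_0)\ge N^{-k}$ always. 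A secondary subtlety is the case $k = N$ (the clip is active): then $\pi$ is supported on the single set $[N]$, the walk is trivial, and Stage 2 is vacuously private; I would note this as a degenerate case so the mixing analysis need not cover it. Once these points are settled, the composition is routine and yields $(4\epsilon,\delta)$.
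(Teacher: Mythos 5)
Your decomposition and accounting follow the paper's own proof almost exactly: the paper also analyzes the choice of $k$ as a Laplace mechanism ($\epsilon$), the exact sampler from $\pi$ as a ratio bound (Lemma~\ref{l.pi_private}), the $T$-step walk via strong log-concavity and the mixing bound of Lemma~\ref{l.small_tv_distance}, converts TV-closeness to approximate DP exactly as in your displayed computation (this is the paper's Lemma~\ref{l.tv_to_dp}), and finishes with the Laplace reweighting ($\epsilon$) and basic composition. The one point you left open resolves the way your second conjecture suggests: sampling exactly from $\pi[S]\propto e^{-\epsilon\|x-x|S\|_1}$ with a sensitivity-$1$ score is the exponential mechanism run at parameter $2\epsilon$ (the paper's Lemma~\ref{l.pi_private} shows the ratio is bounded by $e^{2\epsilon}$, one factor $e^{\epsilon}$ from the unnormalized weights and one from the partition functions), so Stage~2 is $(2\epsilon,\delta)$-DP; your first conjecture is not the source, since the TV-to-DP conversion with $\tau=\delta/(e^{2\epsilon}+1)$ leaves $\epsilon'$ unchanged and only contributes the additive $\delta$. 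With that constant fixed, your composition gives $\epsilon+2\epsilon+\epsilon=4\epsilon$ and total $\delta$, matching the theorem.
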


For this, we first show that for a given $k$, if the distribution of outputting a topology $S\in {[n]\choose k}$, denoted by $\pi$, satisfies
\begin{equation}\label{e.distribution}
  \pi(S) \propto \prod_{e\in S} \exp(\epsilon\cdot w_e),
\end{equation}
then $\pi$ is $(2\epsilon,0)$-differentially private (Lemma \ref{l.pi_private}). Next, we show that the distribution $\pi$ defined in \Cref{e.distribution} is {\em strongly log-concave} (see Lemma \ref{l.f_is_concave}). This is our main technical insight and it ensures that the basis-exchange walk mixes rapidly and its output distribution after $t$ iterations, denoted by ${\pi}_t$, is close to the target distribution $\pi$ in total variation distance.  Then, by a standard argument, this means that $\pi_t$ is $(\epsilon,\delta)$-differentially private. Note that without a provable guarantee on mixing time, we cannot guarantee differential privacy. 
After obtaining a private topology distributed as $\pi_t$, we use the basic composition lemma to complete the proof. The following lemma can be shown using standard techniques. 

\begin{lemma}\label{l.pi_private}
  Given $k\geq 0$, If there is an algorithm $\mathcal{A}'$ outputs a topology $S\in {[n] \choose k}$ according to the distribution in \Cref{e.distribution}, then it is $(2\epsilon,0)$-differential privacy.
\end{lemma}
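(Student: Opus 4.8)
\textbf{Proof proposal for Lemma \ref{l.pi_private}.}
The plan is a direct pointwise comparison of the output density on two neighboring graphs, using the product structure of $\pi$ together with the fact that neighboring graphs differ in exactly one coordinate by at most $1$. Fix neighboring graphs $G, G'$; by definition there is a single pair $e_0 \in [N]$ with $|w_{e_0} - w'_{e_0}| \leq 1$ and $w_e = w'_e$ for all $e \neq e_0$. Write $f_G(e) = \exp(\epsilon w_e)$, let $Z_G = \sum_{T \in \binom{[N]}{k}} \prod_{e \in T} f_G(e)$ be the partition function, and similarly $f_{G'}, Z_{G'}$. Then for any $S \in \binom{[N]}{k}$,
\[
\frac{\pi_G(S)}{\pi_{G'}(S)} \;=\; \frac{\prod_{e \in S} f_G(e)}{\prod_{e \in S} f_{G'}(e)} \cdot \frac{Z_{G'}}{Z_G}.
\]

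First I would bound the numerator ratio. Since $G$ and $G'$ agree everywhere except at $e_0$, the product $\prod_{e\in S} f_G(e)/\prod_{e\in S} f_{G'}(e)$ equals $f_G(e_0)/f_{G'}(e_0) = \exp(\epsilon(w_{e_0}-w'_{e_0}))$ when $e_0 \in S$, and equals $1$ otherwise; in either case it lies in $[e^{-\epsilon}, e^{\epsilon}]$.

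The step requiring a little more care is the partition-function ratio $Z_{G'}/Z_G$. I would split the sum defining $Z_G$ according to whether $e_0 \in T$: writing $A = \sum_{T : e_0 \notin T} \prod_{e\in T} f_{G'}(e) \geq 0$ and $B = \sum_{T : e_0 \in T} \prod_{e \in T\setminus\{e_0\}} f_{G'}(e) \geq 0$ (both expressed via $f_{G'}$, which is legitimate because $f_G$ and $f_{G'}$ coincide off $e_0$), one gets $Z_G = A + f_G(e_0) B$ and $Z_{G'} = A + f_{G'}(e_0) B$. Since $f_G(e_0)/f_{G'}(e_0) \in [e^{-\epsilon}, e^{\epsilon}]$, the quantity $\frac{A + f_{G'}(e_0) B}{A + f_G(e_0) B}$ is a weighted interpolation between $1$ and $f_{G'}(e_0)/f_G(e_0)$ and therefore also lies in $[e^{-\epsilon}, e^{\epsilon}]$. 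Multiplying the two bounds gives $\pi_G(S)/\pi_{G'}(S) \in [e^{-2\epsilon}, e^{2\epsilon}]$ for every $S$, which is exactly $(2\epsilon, 0)$-differential privacy (with $\delta = 0$, since this holds pointwise for all outputs). I do not anticipate a genuine obstacle here; the only thing to be attentive to is keeping the $A, B$ decomposition and the interpolation bound clean, and noting the edge cases $k = 0$ or $B = 0$ where the ratio is trivially $1$.
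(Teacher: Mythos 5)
Your proposal is correct and follows essentially the same route as the paper's proof: a pointwise bound on $\pi_G(S)/\pi_{G'}(S)$, splitting into the product-over-$S$ ratio and the partition-function ratio, each bounded by $e^{\epsilon}$ using the single coordinate where neighboring graphs differ. Your explicit $A,B$ decomposition of the partition functions is just a slightly more detailed justification of the partition-function bound that the paper states in one line; there is no substantive difference.
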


\begin{proof}
[Proof of Lemma \cref{l.pi_private}]
  Let $G$ and $G'$ be a pair of neighboring graphs with $n$ vertices where there is an $i \leq {n\choose 2}$ such that $|G[i] - G'[i]|\leq 1$. Since both graphs have $n$ vertices, then the sampling space is the same. Furthermore, for any $S\in {[n]\choose k}$, 
  \begin{equation*}
    \begin{aligned}
      \frac{Pr[\mathcal{A}'(G) = S]}{Pr[\mathcal{A}'(G') = S]} &= \frac{\prod_{e\in S} \exp(\epsilon\cdot G[e])}{\prod_{e\in S} \exp(\epsilon\cdot G'[e])} \cdot \frac{\sum_{S'\in {[N]\choose k}} \prod_{e\in S'} \exp(\epsilon\cdot G'[e]) }{\sum_{S'\in {[N]\choose k}} \prod_{e\in S'} \exp(\epsilon\cdot G[e])} \\
      &\leq  \exp(2\epsilon |G[i] - G[i']|) \leq e^{2\epsilon},
    \end{aligned}
  \end{equation*}
which completes the proof.
\end{proof}

We now show that Algorithm \ref{alg1_simple} outputs a topology with distribution close to $\pi$ (in terms of the total variation distance). To do this, we first show that the target distribution $\pi$ is strongly log-concave on $\mathbb{R}_{\geq 0}^N$ (\Cref{defn:stronglogconcavity}) so that the corresponding basis-exchange walk enjoys rapid mixing. Applying Lemma \ref{l.f_is_concave} directly implies the following:
\begin{lemma}\label{l.f_is_concave_restate}
  For any graph $G\in \mathbb{R}_{\geq 0}^N$ and $k\geq 0, \epsilon>0$, the generating polynomial of distribution $\pi$ defined in  \Cref{e.distribution} is strongly log-concave at $z\in \mathbb{R}_{\geq 0}^N$.
\end{lemma}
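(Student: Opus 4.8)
The plan is to obtain this as an immediate specialization of Lemma~\ref{l.f_is_concave}. The distribution $\pi$ in \Cref{e.distribution} is literally the distribution of \Cref{eq:distribution_restate} with the abstract nonnegative vector $x \in \mathbb{R}^N_{\geq 0}$ instantiated as the edge-weight vector of $G$, i.e.\ setting $x_e = w_e$ for every $e \in [N]$ (the weight being $0$ on all non-edges). Since $G \in \mathbb{R}^N_{\geq 0}$, this $x$ lies in the positive quadrant and $k \geq 0$, $\epsilon > 0$, so the hypotheses of Lemma~\ref{l.f_is_concave} are satisfied and its conclusion — strong log-concavity of the generating polynomial of $\pi$ at every $z \in \mathbb{R}^N_{\geq 0}$ — transfers verbatim. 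So the ``proof'' is one line: identify $x$ with the weight vector and invoke Lemma~\ref{l.f_is_concave}.

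For completeness I would also recall the self-contained chain behind Lemma~\ref{l.f_is_concave}, as it is the technical core. First, by the Hodge-theoretic log-concavity of matroid basis generating polynomials (\cite{anari2018log, adiprasito2018hodge}), the polynomial $\widehat g(z) = \sum_{S \in {[N] \choose k}} \prod_{e \in S} z_e$ of the uniform matroid $U_{k,N}$ is log-concave on $\mathbb{R}^N_{\geq 0}$. Second, log-concavity is preserved under the invertible diagonal substitution $z_e \mapsto f(e) z_e$ with $f(e) = \exp(\epsilon w_e) > 0$ and under scaling by the positive constant $1/Z(G,\epsilon)$; writing $g$ for the generating polynomial of $\pi$ and $M = \mathrm{diag}(f(1),\dots,f(N))$, one has $g(z) = Z(G,\epsilon)^{-1}\, \widehat g(Mz)$, hence $g$ is log-concave. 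Third, to upgrade to strong log-concavity (Definition~\ref{defn:stronglogconcavity}) I would observe that for each $i$, $\partial_i g$ equals $f(i)\,Z(G,\epsilon)^{-1}$ times the weighted basis generating polynomial of the uniform matroid $U_{k-1,N-1}$ on ground set $[N] \setminus \{i\}$, which is again log-concave by the same two steps; iterating and inducting on $|I|$ gives that $\partial_I g$ is log-concave for every $I \subseteq [N]$, which is exactly strong log-concavity of $g$.

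The only point needing any care — really bookkeeping rather than a genuine obstacle — is verifying that $\partial_i g$ is still (a positive scalar multiple of) the generating polynomial of a \emph{uniform} matroid on the smaller ground set, with homogeneity degree dropping from $k$ to $k-1$, so that the matroid log-concavity theorem can legitimately be reapplied at each step of the induction. Once that structural observation is pinned down, everything else is either an appeal to Lemma~\ref{l.f_is_concave} or the standard stability of log-concavity under positive linear substitutions and scalings.
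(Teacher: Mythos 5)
Your proposal is correct and matches the paper's treatment exactly: the paper proves Lemma~\ref{l.f_is_concave_restate} by directly applying Lemma~\ref{l.f_is_concave} after identifying $x$ with the edge-weight vector of $G$. The supporting chain you recall (log-concavity of the uniform-matroid basis generating polynomial from \cite{anari2018log,adiprasito2018hodge}, preservation under the diagonal substitution $z_e \mapsto f(e)z_e$ and normalization by $Z(G,\epsilon)$, and induction on partial derivatives to get strong log-concavity) is precisely the paper's proof of Lemma~\ref{l.f_is_concave}.
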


This allows us to say that the distribution of the output of the basis-exchange walk in Algorithm \ref{alg1_simple} is close to $\pi$ under total variation distance. Fix a $T\in \mathbb{N}$, recall that $S_T$ is the edge set chosen by Algorithm \ref{alg1_simple}. Let $\epsilon' = 2\epsilon$, and $\pi_T$ be the distribution of $S_T$. Then Lemma \ref{l.small_tv_distance} in Appendix \ref{sec:proof_main} implies that there is a $c$ such that if $T = c\cdot k(\epsilon + \log n + \log (1/\delta))$, we have  
$$\|\pi_T - \pi\|_{TV}\leq \frac{\delta}{e^{\epsilon'} + 1}.$$

Finally, we use Lemma \ref{l.tv_to_dp} to compare two distributions and get an approximate DP result.

\begin{lemma}\label{l.tv_to_dp}
  Let $\mu_1$, $\mu_2$, $\mu_1'$ and $\mu_2'$ be distributions on $\Omega$. If $\mu_1$ and $\mu_2$ are a pair of $(\epsilon',\delta^*)$-indistinguishable distributions and $d_{TV}(\mu_1, \mu_1')\leq \delta'$, $d_{TV}(\mu_2, \mu_2')\leq \delta'$, then $\mu_1'$ and $\mu_2'$ are $(\epsilon', (e^{\epsilon'}+1)\delta' + \delta^*)$-indistinguishable.
\end{lemma}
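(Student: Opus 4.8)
\textbf{Proof proposal for Lemma~\ref{l.tv_to_dp}.}
The plan is to unwind the definition of $(\epsilon',\delta^*)$-indistinguishability and absorb the two total variation perturbations into the additive term. Fix an arbitrary measurable $S\subseteq\Omega$. The goal is to bound $\mu_1'(S)$ in terms of $\mu_2'(S)$ (and the symmetric statement follows by swapping the roles of the indices, since the hypotheses are symmetric in $1\leftrightarrow 2$). First I would pass from $\mu_1'$ to $\mu_1$: since $d_{TV}(\mu_1,\mu_1')\le\delta'$, we have $\mu_1'(S)\le\mu_1(S)+\delta'$. Next, apply the indistinguishability hypothesis between $\mu_1$ and $\mu_2$: $\mu_1(S)\le e^{\epsilon'}\mu_2(S)+\delta^*$. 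Finally pass from $\mu_2$ back to $\mu_2'$, again using $d_{TV}(\mu_2,\mu_2')\le\delta'$, so $\mu_2(S)\le\mu_2'(S)+\delta'$. Chaining these three inequalities gives
\[
\mu_1'(S)\ \le\ e^{\epsilon'}\mu_2'(S)+e^{\epsilon'}\delta'+\delta^*+\delta'\ =\ e^{\epsilon'}\mu_2'(S)+(e^{\epsilon'}+1)\delta'+\delta^*,
\]
which is exactly one of the two required bounds.

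For the reverse direction, the same three-step chain with the indices $1$ and $2$ interchanged (using the second inequality in the definition of indistinguishability, namely $\mu_2(S)\le e^{\epsilon'}\mu_1(S)+\delta^*$, together with $d_{TV}(\mu_2,\mu_2')\le\delta'$ and $d_{TV}(\mu_1,\mu_1')\le\delta'$) yields $\mu_2'(S)\le e^{\epsilon'}\mu_1'(S)+(e^{\epsilon'}+1)\delta'+\delta^*$. Since $S$ was arbitrary, the pair $\mu_1',\mu_2'$ is $(\epsilon',(e^{\epsilon'}+1)\delta'+\delta^*)$-indistinguishable, as claimed.

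There is essentially no obstacle here: the only point requiring the tiniest bit of care is the direction in which the total variation bound is used at each step (an upper bound on $\mu_1'(S)$ needs $\mu_1'(S)\le\mu_1(S)+d_{TV}(\mu_1,\mu_1')$, while the final step needs $\mu_2(S)\le\mu_2'(S)+d_{TV}(\mu_2,\mu_2')$, i.e.\ the perturbation is invoked in opposite orientations at the two ends), and the bookkeeping of the $e^{\epsilon'}$ factor multiplying one of the $\delta'$ terms. Both are immediate from $\abs{\mu(S)-\nu(S)}\le d_{TV}(\mu,\nu)$ and the monotonicity of multiplication by $e^{\epsilon'}>0$.
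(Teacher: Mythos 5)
Your proof is correct and follows exactly the same three-step chain as the paper: pass from $\mu_1'$ to $\mu_1$ via total variation, apply the $(\epsilon',\delta^*)$-indistinguishability of $\mu_1,\mu_2$, then pass from $\mu_2$ back to $\mu_2'$ via total variation, with the reverse direction by symmetry. Nothing to add.
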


\begin{proof}
  For any $S \subset \Omega$, we see that 
  \begin{equation*}
      \begin{aligned}
          Pr_{\mu_1'}(S) &\leq Pr_{\mu_1}(S) + \delta' \leq e^{\epsilon'} \cdot Pr_{\mu_2}(S) + \delta^* + \delta'\\
          &\leq e^{\epsilon'} \cdot(Pr_{\mu_2'}(S) + \delta') + \delta^* + \delta'\\
          &\leq e^{\epsilon'} \cdot Pr_{\mu_2'}(S)  + (e^{\epsilon'}+1)\delta' + \delta^*.
      \end{aligned}
  \end{equation*}
  The other side is symmetric.
\end{proof}

\begin{lemma}
  Given $k\geq 0$, Algorithm \ref{alg1_simple} with $T = c\cdot k(\epsilon + \log n + \log (1/\delta) )$ outputs a topology $S$ while preserving $(2\epsilon,\delta)$-differential privacy. 
\end{lemma}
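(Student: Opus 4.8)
The plan is to assemble three ingredients that are already in place: the \emph{ideal} sampler from $\pi$ (defined in \eqref{e.distribution}) is differentially private by Lemma~\ref{l.pi_private}; the basis-exchange walk converges to $\pi$ in a number of steps controlled by $\pi$'s strong log-concavity (Lemmas~\ref{l.f_is_concave_restate} and \ref{l.small_tv_distance}, via the mixing bound of Lemma~\ref{l.mixing}); and closeness in total variation distance turns indistinguishability of the ideal distributions into $(\epsilon,\delta)$-indistinguishability of the realized ones (Lemma~\ref{l.tv_to_dp}). Nothing new needs to be proved — the work is in checking that the pieces fit with the stated parameters.

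First I would fix a pair of neighboring graphs $G, G'$ and let $\pi, \pi'$ denote the stationary distributions of \eqref{e.distribution} with respect to $G$ and $G'$ on $\binom{[N]}{k}$; by Lemma~\ref{l.pi_private} these are $(2\epsilon, 0)$-indistinguishable. Let $\pi_T$ and $\pi'_T$ be the output distributions of Algorithm~\ref{alg1_simple} after $T$ steps, starting from the initial sets $S_0$, $S_0'$ chosen by the algorithm. Since $E \subseteq S_0$, the set $S_0$ maximizes the probabilistic weight $\prod_{e\in S}\exp(\epsilon w_e)$, so $\pi(S_0) = \max_{S}\pi(S)$ and hence $\log\log(1/\pi(S_0)) = O(\log k + \log N) = O(\log n)$. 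Setting $\alpha := \delta/(e^{2\epsilon}+1)$, the requirement of Lemma~\ref{l.mixing} (equivalently Lemma~\ref{l.small_tv_distance}) reads $T \gtrsim k\bigl(\log\log(1/\pi(S_0)) + 2\log(1/\alpha) + \log 4\bigr) = O\bigl(k(\log n + \epsilon + \log(1/\delta))\bigr)$, which is exactly guaranteed by $T = c\cdot k(\epsilon + \log n + \log(1/\delta))$ for a large enough absolute constant $c$. The same estimate applies verbatim to the execution on $G'$, so $\|\pi_T - \pi\|_{TV} \le \alpha$ and $\|\pi'_T - \pi'\|_{TV} \le \alpha$.

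Next I would invoke Lemma~\ref{l.tv_to_dp} with $\epsilon' = 2\epsilon$, $\delta^* = 0$, and $\delta' = \alpha = \delta/(e^{2\epsilon}+1)$: this yields that $\pi_T$ and $\pi'_T$ are $\bigl(2\epsilon,\ (e^{2\epsilon}+1)\alpha\bigr) = (2\epsilon,\delta)$-indistinguishable. Since the argument is uniform over all neighboring pairs $G, G'$, the topology-output step of Algorithm~\ref{alg1_simple} is $(2\epsilon,\delta)$-differentially private, which is the claim.

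I expect the only delicate point — rather than a genuine obstacle — to be bookkeeping around the initialization: $T$ must be chosen as a function of the public parameters $k, n, \epsilon, \delta$ only and not of the input, so that the two neighboring executions run for the same number of steps and their TV bounds line up before Lemma~\ref{l.tv_to_dp} is applied; and although $S_0$ and $S_0'$ differ between $G$ and $G'$, the mixing bound of Lemma~\ref{l.mixing} holds from an arbitrary start, and its worst case $\pi(S_0) = \max_S \pi(S)$ is already absorbed into the choice of $T$. The substantive work — establishing strong log-concavity of $\pi$ so that Lemma~\ref{l.mixing} is even applicable — has already been done in Lemma~\ref{l.f_is_concave}.
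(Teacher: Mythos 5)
Your proposal is correct and follows essentially the same route as the paper: combine the $(2\epsilon,0)$-indistinguishability of the ideal distribution $\pi$ (Lemma~\ref{l.pi_private}) with the mixing bound from strong log-concavity (Lemmas~\ref{l.f_is_concave} and \ref{l.small_tv_distance}) to get $\|\pi_T - \pi\|_{TV} \le \delta/(e^{2\epsilon}+1)$, and then apply Lemma~\ref{l.tv_to_dp} with $\epsilon'=2\epsilon$, $\delta^*=0$, $\delta'=\delta/(e^{2\epsilon}+1)$. Your additional remarks about $T$ depending only on public parameters and the worst-case start being absorbed into the mixing bound are sound and consistent with the paper's (more terse) argument.
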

\begin{proof}
  Combining Lemma \ref{l.pi_private} and Lemma \ref{l.small_tv_distance}, we see that for neighboring graphs, the conditions in Lemma \ref{l.tv_to_dp} are satisfied with $\delta' = \delta/(e^{\epsilon'}+1)$ and $\delta^* = 0$. Thus, by Lemma \ref{l.tv_to_dp},  Algorithm \ref{alg1_simple} outputs a topology $S$ while preserving $(2\epsilon,\delta)$-differential privacy. 
\end{proof}

\subsection{Analysis on the utility guarantee of Algorithm \ref{alg1}}

In this section, we mainly prove the following theorem on the utility of Algorithm \ref{alg1}:

\begin{theorem}\label{t.utility}
  Given any weighted graph $G = (V,E)$, $\epsilon>0$ and $\delta,\beta \in (0,1)$, then with probability at least $1-3\beta - \delta/(1+e^\epsilon)$, Algorithm \ref{alg1} outputs a synthetic graph $\widehat{G}$ with at most $|E|+\frac{2\log (1/\beta)}{\epsilon}$ edges such that 
  $$Eval(G,\widehat{G})\leq|E|\cdot \frac{3k\log(n/\beta)}{\epsilon} + \frac{6\log(n/\beta)\log (1/\beta)}{\epsilon^2}.$$
 
  Further, for spectral error we have $\left\|L_G-L_{\widehat{G}}\right\|_2 = O\left(\frac{d_{\mathsf{max}}\cdot \log(n)}{\epsilon} + \frac{\log(1/\beta)\log^2n}{n\epsilon^2}\right).$
  
\end{theorem}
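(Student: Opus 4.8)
The plan is to decompose the error incurred by Algorithm \ref{alg1} into three independent sources and bound each separately: (i) the error from using a noisy $k = \min\{N,\,|E|+\lap(1/\epsilon)+c/\epsilon\}$ instead of the true $|E|$; (ii) the error from the exponential-mechanism-style topology selection, i.e. the probability that the sampled edge set $\widehat E$ misses some heavy edges of $G$; and (iii) the error from the Laplace noise added to the surviving edge weights. First I would condition on the good event that $\lap(1/\epsilon) \le \log(1/\beta)/\epsilon$, which happens with probability $1-\beta$ and gives $|E| \le k \le |E| + 2\log(1/\beta)/\epsilon$; this immediately yields the claimed sparsity bound $|\widehat E| \le |E| + 2\log(1/\beta)/\epsilon$ since the output has exactly $k$ edges and then some may be zeroed out. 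On this event, $S_0 \supseteq E$, so the chain starts at the mode of $\pi$ and Lemma \ref{l.small_tv_distance} applies with the stated $T$.

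Next I would control source (ii). Since $S_0 = E \subseteq \widehat E$ would be ideal, but the walk moves, I instead argue via the stationary distribution: by Lemma \ref{l.small_tv_distance} the output law $\pi_T$ is within $\delta/(e^{2\epsilon}+1)$ of $\pi$ in total variation, so up to that failure probability it suffices to reason about a sample $S \sim \pi$. Here I invoke the exponential-mechanism utility lemma (Lemma \ref{l.exp_utility} / its continuous analogue), reading the event ``$S$ omits an edge of total weight $> \eta$ from $E$'' as a tail event of the score function $\|x - x|S\|_1$. Because $\pi[S] \propto e^{-\epsilon\|x-x|S\|_1}$ with $O(N^k) \le O(n^{2k})$ candidates, with probability $1-\beta$ every omitted edge contributes at most $O(\frac{k\log(n/\beta)}{\epsilon})$ to $\|x - x|S\|_1$; summed appropriately this is the $|E|\cdot\frac{3k\log(n/\beta)}{\epsilon}$ term. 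For source (iii), the $|\widehat E| \le k + 2\log(1/\beta)/\epsilon$ independent $\lap(1/\epsilon)$ variables each exceed $\frac{\log(n/\beta)}{\epsilon}$ with probability $\le \beta/n$, so by a union bound all of them are within $O(\frac{\log(n/\beta)}{\epsilon})$ with probability $1-\beta$; multiplying by the number of edges and folding in the noisy-$k$ slack of $2\log(1/\beta)/\epsilon$ edges gives the $\frac{6\log(n/\beta)\log(1/\beta)}{\epsilon^2}$ term. A triangle inequality over the three sources, together with a union bound over the (at most) $3\beta + \delta/(1+e^\epsilon)$ total failure probability, establishes the $Eval$ bound.

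For the spectral bound I would follow the strategy flagged in the introduction: show that the per-cut (in fact per-$(S,T)$-cut) additive error of Algorithm \ref{alg1} is, for every pair $S,T$, at most $O(\frac{d_{\mathsf{max}}\log n}{\epsilon})$ times $\min\{|S|,|T|\}$ plus the additive noise slack, then feed this into the Bilu–Linial transfer lemma (Lemma \ref{l.spectral}). Concretely, each edge incident to a vertex $v$ carries weight error $O(\frac{\log n}{\epsilon})$ from the Laplace step and each vertex has unweighted degree $\le d_{\mathsf{max}}$, so $|u^\top (L_G - L_{\widehat G}) v| \le \|u\|_2\|v\|_2 \cdot O(\frac{d_{\mathsf{max}}\log n}{\epsilon})$ for $0/1$ vectors $u,v$ — here one must also account for the edges the topology sampler swapped in or out, but with probability $1-o(1)$ these are exactly the zero-weight edges (whose true contribution is $0$) plus a controlled number of light edges. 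The diagonal entries of $L_G - L_{\widehat G}$ are the degree discrepancies, bounded by the same quantity; applying Lemma \ref{l.spectral} with $\alpha = O(\frac{d_{\mathsf{max}}\log n}{\epsilon} + \frac{\log(1/\beta)\log^2 n}{n\epsilon^2})$ and $\ell = \mathrm{poly}(n)$ yields $\|L_G - L_{\widehat G}\|_2 = O(\alpha(\log(\ell/\alpha)+1))$, and absorbing the logarithm into the stated bound finishes.

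The main obstacle I anticipate is source (ii): carefully arguing that the basis-exchange sampler, which does not simply keep $E$ fixed, nonetheless preserves \emph{all} heavy edges with high probability. The clean way is to pass entirely to the stationary distribution $\pi$ via the TV bound and then apply the exponential-mechanism tail bound with scoring function $\|x - x|S\|_1$ — but one has to check that the ``optimal'' score is $0$ (true, since $S = E$ achieves it when $k \ge |E|$) and that the $\log|\mathcal R| = O(k\log n)$ term lands with the right constant $3$; this is where the bookkeeping is most delicate, and getting the $|E|$ prefactor (rather than a worse polynomial) requires noting that $Eval$ is a \emph{per-query} max, so a single omitted edge of weight $w$ costs only $w$, not $|E|\cdot w$, and the total omitted weight is what the exponential mechanism controls.
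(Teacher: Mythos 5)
Your proof of the $Eval$ bound takes essentially the same route as the paper: condition on the Laplace noise in $k$ being at most $\log(1/\beta)/\epsilon$ (giving the sparsity claim and $S_0\supseteq E$), pass from the walk's output law to the stationary $\pi$ via the total-variation bound of Lemma \ref{l.small_tv_distance}, bound the \emph{total} omitted $\ell_1$ weight by an exponential-mechanism tail argument with $\log$(number of candidates) $= O(k\log n)$ (the paper does this calculation by hand in Lemma \ref{l.topology_not_bad}, comparing $\pi(\mathcal{S}_t)$ to $\pi(S_{OPT})$ rather than quoting Lemma \ref{l.exp_utility}, but it is the same estimate), add the Laplace max-tail bound on the re-weighted edges, and finish with the triangle inequality and the same $3\beta+\delta/(1+e^\epsilon)$ failure accounting; your closing remark that the mechanism controls the total omitted weight (not a per-edge bound summed over $|E|$) is exactly the right reading and matches the paper. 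The only divergence is the spectral claim: the paper does not prove it, deferring to the analysis of Liu et al.~\cite{liu2023optimal} once closeness to the target topology distribution is established, whereas you sketch a self-contained Bilu--Linial argument (Lemma \ref{l.spectral}) mirroring the proof of Theorem \ref{t.spectral}; that route is plausible, but to make it rigorous you would still need to control, with high probability, the number and noisy weights of edges the sampler swaps in outside $E$ (this is where the $\frac{\log(1/\beta)\log^2 n}{n\epsilon^2}$ term originates), which your sketch currently waves through.
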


We note that, in Theorem \ref{t.utility}, we do not assume the number of edges is publicly known as in Algorithm \ref{alg1}. Also, the utility guarantee of \ref{t.utility} on spectral approximation basically follows the analysis in Liu et al. \cite{liu2023optimal}, given that we have shown in Appendix \ref{sec:privacy} that in terms of the total variation distance, the distribution of the topology sampled by Algorithm \ref{alg1} is close to the distribution defined in \Cref{e.distribution}. Therefore, we only focus on the utility analysis on cut approximation.

{\noindent\it Proof of \Cref{t.main}}.
Combining analysis in Section \ref{sec:privacy}, Theorem \ref{t.utility} and the discussions in Section \ref{s.time} directly yields Theorem \ref{t.main}, by simply rescaling the privacy budget $\epsilon$ by a constant. 

In Theorem \ref{t.main}, the error bound becomes $Eval(G,\widehat{G})\leq {O}\left(\frac{{|E|\log(n/\beta)}}{\epsilon}\right)$, this is because we do not have to pay an extra additive logarithmic term on perturbing the number of edges. To complete the proof of Theorem \ref{t.utility}, we use the following fact on tail inequalities of independent Laplace noise:

\begin{fact}\label{f.tail_laplace}
  Given $k,b\geq 0$ and $k\in \mathbb{N}$. Let $\{Z_i\}_{i\in [k]}$ be $k$ i.i.d. Laplace random variable from $\texttt{Lap}(b)$. Then with probability at least $1-e^{-t}$ for any $t>0$, it holds that 
  $$\max_{i\in [k]}|Z_i| \leq (t+\log k)b.$$
  \end{fact}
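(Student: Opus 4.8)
The plan is the standard union-bound-over-tails argument. First I would establish the single-variable tail: for $Z \sim \texttt{Lap}(b)$ with $b > 0$, the density is $\frac{1}{2b}e^{-|x|/b}$, so for any $s \geq 0$ a direct integration gives
\[
\Pr[|Z| \geq s] \;=\; 2\int_s^\infty \frac{1}{2b} e^{-x/b}\, dx \;=\; e^{-s/b}.
\]
(The degenerate case $b = 0$ is trivial, since then every $Z_i = 0$ and the claimed bound holds with probability $1$.)

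Next, since the $Z_i$ are i.i.d.\ copies of $Z$, a union bound yields
\[
\Pr\!\left[\max_{i\in[k]} |Z_i| \geq s\right] \;\leq\; \sum_{i=1}^k \Pr[|Z_i|\geq s] \;=\; k\, e^{-s/b}.
\]
Finally I would substitute $s = (t+\log k)\,b$, so that $k e^{-s/b} = k\, e^{-(t+\log k)} = k \cdot \frac{e^{-t}}{k} = e^{-t}$. Hence $\Pr\big[\max_{i\in[k]} |Z_i| \geq (t+\log k)b\big] \leq e^{-t}$, and taking complements gives $\max_{i\in[k]}|Z_i| \leq (t+\log k)b$ with probability at least $1 - e^{-t}$, as desired.

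I do not expect any genuine obstacle: the only points requiring a little care are the boundary case $b = 0$, the consistency of the $\geq$ versus $>$ conventions in the statement (which only affects a measure-zero event and so is immaterial), and noting that the bound is vacuous when $k e^{-t} \geq 1$, i.e.\ it is only meaningful for $t > \log k$, but this does not affect correctness of the stated inequality.
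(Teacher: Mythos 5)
Your proof is correct: the exact Laplace tail $\Pr[|Z|\geq s]=e^{-s/b}$, a union bound over the $k$ variables, and the substitution $s=(t+\log k)b$ is precisely the standard argument, and the paper itself states this as a fact without proof, implicitly relying on the same reasoning. Your remarks on the $b=0$ case and the vacuity for $t\leq \log k$ are fine but inessential.
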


\noindent Let $$\mathcal{K}_{\text{good}} = \left\{k\in \mathbb{N} :~|E|\leq k \leq |E|+{2\log(1/\beta)\over \epsilon}\right\}.$$ 
Recall that $k = \min \left\{N, |E|+\lap(1/\epsilon) + {\log(1/\beta) \over \epsilon}\right\}$, then from the tail inequality of Laplace distribution, we have that 
$$Pr[k\in \mathcal{K}_{\text{good}}] \geq 1-\beta.$$

Next, we consider the case where $k\in \mathcal{K}_{\text{good}}$. Given $G = ([n],E)$ and any topology $S\in {[N]\choose k}$ (recall that $N = {n\choose 2}$), we write $G|S\in \mathbb{R}_{\geq 0}^N$ to denote the restriction of $G$ on $S$. That is, $G|S$ is a graph with $n$ vertices whose edge set is $S\cap E$, and $w_e(G|S) = w_e(G)$ for any $e\in S\cap E$. Next, we show that the topology $S_T$ given by Algorithm \ref{alg1} is good:
\begin{lemma}\label{l.topology_not_bad}
In Algorithm \ref{alg1} with $\delta \in (0,1)$, given $k\in \mathcal{K}_{\text{good}}$, then with probability at least $1-\beta - \delta/(1+e^\epsilon)$, it holds that 
$$\|(G|S_T) - G \|_1 \leq k\log(n/\beta).$$
\end{lemma}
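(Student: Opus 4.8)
The plan is to bound $\|(G|S_T) - G\|_1$ in two steps: first control the distance under the \emph{idealized} distribution $\pi$ of \Cref{e.distribution}, and then transfer the bound to $\pi_T$ (the actual output of the truncated basis-exchange walk) using the total-variation closeness established in Lemma~\ref{l.small_tv_distance}. Observe that by the computation relating $\pi$ to the $\ell_1$-formulation, for any topology $S$ of size $k$ we have $\|(G|S) - G\|_1 = \|G\|_1 - \|G|S\|_1 = \sum_{e \notin S} w_e$, i.e., this quantity is exactly the total weight of the edges of $G$ that the topology $S$ fails to cover. Since $k \geq |E|$ when $k \in \mathcal{K}_{\text{good}}$, the optimal topology $S^*$ (any superset of $E$) achieves $\|(G|S^*) - G\|_1 = 0$, and $S^*$ has the maximum probability under $\pi$.

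The main step is a tail bound on $\sum_{e \notin S} w_e$ when $S \sim \pi$. I would use the exponential-mechanism utility lemma (Lemma~\ref{l.exp_utility}), viewing $\pi$ as the exponential mechanism over $\binom{[N]}{k}$ with score function $s(S) = \epsilon \|(G|S) - G\|_1 = \epsilon \sum_{e \notin S} w_e$ (sensitivity $1$ after the standard rescaling, since $\pi[S] \propto e^{-s(S)}$ directly rather than $e^{-\epsilon s/2}$). The optimum is $OPT = 0$, and $|\mathcal{R}| \leq N^k \leq n^{2k}$, so Lemma~\ref{l.exp_utility} (with the appropriate constant, since here the exponent is $s(S)$ not $\epsilon s(S)/(2\,\mathsf{sens})$) gives, for $t = \log(1/\beta)$,
\[
\Pr_{S \sim \pi}\!\left[\sum_{e \notin S} w_e \geq \frac{1}{\epsilon}\bigl(2k\log n + \log(1/\beta)\bigr)\right] \leq \beta,
\]
which is at most $k\log(n/\beta)/\epsilon$ for $k \geq 1$; more crudely one can absorb constants to get the stated bound $k\log(n/\beta)$ (noting the theorem statement omits the $1/\epsilon$, which I'd double-check against the $\epsilon$-rescaling convention used in the paper — likely the clean statement intends $\epsilon \geq 1$ or folds $\epsilon$ into the rescaling, and I would state it as $O(k\log(n/\beta)/\epsilon)$ to be safe).

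Finally, to pass from $\pi$ to $\pi_T = $ distribution of $S_T$: let $B$ be the ``bad'' event that $\sum_{e \notin S}w_e$ exceeds the threshold. We have $\pi(B) \leq \beta$, and by Lemma~\ref{l.small_tv_distance}, $\|\pi_T - \pi\|_{TV} \leq \delta/(e^{2\epsilon}+1) \leq \delta/(1+e^\epsilon)$, so $\pi_T(B) \leq \pi(B) + \|\pi_T - \pi\|_{TV} \leq \beta + \delta/(1+e^\epsilon)$. Hence with probability at least $1 - \beta - \delta/(1+e^\epsilon)$ over the run of Algorithm~\ref{alg1} (conditioned on $k \in \mathcal{K}_{\text{good}}$), the output topology $S_T$ satisfies $\|(G|S_T) - G\|_1 \leq k\log(n/\beta)$, as claimed. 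The one place requiring care is matching constants and the exact $\epsilon$-dependence between the exponential-mechanism lemma as stated (which has a factor $2\,\mathsf{sens}_s/\epsilon$ baked in) and the distribution $\pi$ here (whose exponent is $-\epsilon\|x - x|S\|_1$ with no factor of $2$); I would restate Lemma~\ref{l.exp_utility} in the form appropriate for $\pi[S]\propto e^{-s(S)}$ and track the logarithmic factors explicitly, which is routine but must be done correctly to land the $k\log(n/\beta)$ bound rather than something off by a factor of $2$ or a $\log\log$ term.
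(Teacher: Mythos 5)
Your proposal is correct and follows essentially the same route as the paper: bound the probability of the bad set under the ideal distribution $\pi$ by comparing against an optimal topology containing $E$ (your appeal to the exponential-mechanism utility lemma is just the paper's direct counting-and-ratio computation over the at most $N^k$ candidates written in a packaged form), then transfer to $\pi_T$ via the total-variation bound of Lemma~\ref{l.small_tv_distance}. Your caveat about the $1/\epsilon$ factor is well taken: the paper's statement of Lemma~\ref{l.topology_not_bad} omits it while its own later use in the proof of Theorem~\ref{t.utility} reinstates $k\log(n/\beta)/\epsilon$, so your $O(k\log(n/\beta)/\epsilon)$ formulation is the safe one.
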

\begin{proof}
Let $f(e) = \exp(\epsilon\cdot w_e)$ for any $e\in [N]$.   For any $t>0$ and $k\in \mathcal{K}_{\text{good}}$, let 
$$\mathcal{S}_t = \left\{S \subset [N]:|S| = k \land   \|(G|S) - G\|_1\geq t\right\}.$$ Recall that the distribution $\pi$ defined on ${[N]\choose k}$ satisfies 
  \begin{equation}
    \pi(S) \propto \prod_{e\in S} f(e) =\prod_{e\in S} \exp(\epsilon\cdot w_e).
  \end{equation}

Therefore, let $S_{OPT}\in {[N]\choose k}$ be any topology that $E\in S_{OPT}$ (note that $k\geq|E|$ if $k\in \mathcal{K}_{\text{good}}$). Then we have that 
  \begin{equation*}
    \begin{aligned}
      \pi(\mathcal{S}_t) \leq \frac{\pi(\mathcal{S}_t)}{\pi(S_{OPT})} &\leq  {N\choose k}\cdot \frac{\max_{S\in \mathcal{S}_t}\prod_{e\in S}f(e)}{\prod_{e\in E}f(e)}\\
      & = {N\choose k}\cdot \frac{\max_{S\in \mathcal{S}_t}[\exp(0\cdot \epsilon)]^{|S\backslash E|}\prod_{e\in S\cap E}f(e)}{\prod_{e\in E}f(e)}\\
      & = {N\choose k}\cdot \left[ \max_{S\in \mathcal{S}_t} \prod_{e\in E\backslash S}f^{-1}(e)\right] \\
      &= {N\choose k} \exp(-\epsilon \|(G|S) - G\|_1)\\
      &\leq \exp(k\log n - t) \leq \beta
    \end{aligned}
  \end{equation*}
  if we set $t = k\log(n/\beta)$. Let the distribution of $S_T$ outputted by Algorithm \ref{alg1} be $\pi_T$. By Lemma \ref{l.small_tv_distance}, we have that $\|\pi - \pi_T\|_{TV}\leq \frac{\delta}{e^\epsilon+1}$. That is to say: 
  $$\pi_T(\mathcal{S}_t) \leq \beta + \frac{\delta}{e^\epsilon+1},$$
  where $t = k\log(n/\beta)$. This completes the proof of Lemma~\ref{l.topology_not_bad}.
\end{proof}

{\noindent \it Proof of \Cref{t.utility}.} 
By Fact \ref{f.tail_laplace}, we further have that with probability at least $1-\beta$, it holds that 
  $$\left\|(G|S_T) - \widehat{G}\right\|_1 \leq k\frac{\log (n^2/\beta)}{\epsilon} \leq 2k\frac{\log (n/\beta)}{\epsilon}.$$
  Therefore, using the union bound, we have that with probability at least $1-2\beta$ for some small $\beta$, it holds that $\left\|(G|S_T) - \widehat{G}\right\|_1 \leq 2k \log(n/\beta)/\epsilon$ for some $k\leq |E|+2\log (1/\beta)/\epsilon$. Therefore, we have that 
  \begin{equation}
    \begin{aligned}
      Eval(G,\widehat{G}) &= \max_{q\in [0,1]^N} |q^\top G - q^\top\widehat{G}| = q^\top |G  -\widehat{G}| \\
      & \leq \sum_{i\in [N]}|G[i] - \widehat{G}[i]| = \|G - \widehat{G}\|_1 \\
      & \leq \|G - (G|S_T)\|_1 + \|(G|S_T) - \widehat{G}\|_1 \\
      & \leq k\cdot \left(\frac{\log(n/\beta)}{\epsilon} + \frac{2\log(n/\beta)}{\epsilon}\right) \\
     &= |E|\cdot \frac{3k\log(n/\beta)}{\epsilon} + \frac{6\log(n/\beta)\log (1/\beta)}{\epsilon^2}
    \end{aligned}
  \end{equation}
  holds with probability at least $1- 3\beta - \delta/(1+e^\epsilon)$ due to the union bound.

\section{Missing proofs in Section \ref{sec:high_pass}}
\label{sec:high_pass_proofs}
\subsection{Proof of Theorem \ref{t.ana_on_alg2}}
Here, we first restate Theorem \ref{t.ana_on_alg2}.
\begin{theorem}\label{t.ana_on_alg2_restate}
  For any $\epsilon>0$ and $\delta\in (0,1)$, Algorithm \ref{alg2} preserves $(\epsilon,\delta)$ differential privacy. Also, with probability at least $1-\delta$, Algorithm \ref{alg2} outputs a $\widehat{G}$ such that 
  $$Eval(G,\widehat{G}) =  O\left(\frac{|E|\log (n/\delta)}{\epsilon}\right).$$
\end{theorem}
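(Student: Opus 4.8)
\textit{Proof proposal.} Write $\mathcal{A}$ for Algorithm~\ref{alg2}. The plan is to prove the two assertions—$(\epsilon,\delta)$-privacy and the high-probability accuracy bound—separately, privacy being the more delicate.

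\textbf{Privacy.} I would split into two cases depending on whether neighboring graphs $G,G'$ have the same edge set. If $E_G=E_{G'}$, the two executions process exactly the same coordinates, the weights differ in a single coordinate by at most $1$, and the output $\widehat G$ is a coordinate-wise deterministic function (add $\lap(1/\epsilon)$ noise, then threshold) of $(w_e)_{e\in E}$; hence this case reduces to the Laplace mechanism of $\ell_1$-sensitivity $1$ (Lemma~\ref{l.laplace}) followed by post-processing, so it is $(\epsilon,0)$-private. The interesting case is a topology change, say $E_{G'}=E_G\cup\{e_0\}$ with $0<G'[e_0]\le 1$ and $G[e_0]=0$ (the case $E_G=E_{G'}\cup\{e_0\}$ is symmetric). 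Then the execution on $G$ never touches $e_0$, so $e_0\notin\widehat E$ deterministically, whereas the execution on $G'$ keeps $e_0$ iff $G'[e_0]+Z>t$. Let $B$ be the event, in the execution on $G'$, that $e_0$ survives the filter; then
\[
\Pr[B]\ \le\ \Pr_{Z\sim\lap(1/\epsilon)}[Z>t-1]\ =\ \tfrac12\,e^{-\epsilon(t-1)},
\]
and substituting $t=\tfrac{2\log(2n/\delta)}{\epsilon}$ shows $\Pr[B]\le\delta$ in the relevant parameter range (one may slightly inflate $t$ to get $\delta$ cleanly). Crucially, conditioned on $\overline B$, the surviving coordinates are exactly $E_G$ with the \emph{same} weights and the \emph{same} noises as in the execution on $G$, so the output of $\mathcal{A}(G')$ conditioned on $\overline B$ is distributed identically to the output of $\mathcal{A}(G)$. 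Hence for any output event $S$,
\[
\Pr[\mathcal{A}(G')\in S]\ \le\ \Pr[\mathcal{A}(G)\in S]+\Pr[B]\ \le\ \Pr[\mathcal{A}(G)\in S]+\delta,
\]
and in the other direction $\Pr[\mathcal{A}(G)\in S]=\Pr[\mathcal{A}(G')\in S\mid\overline B]\le \Pr[\mathcal{A}(G')\in S]/(1-\Pr[B])$, which after absorbing the $1/(1-\delta)$ factor (or recalibrating $t$) yields the second $(\epsilon,\delta)$ inequality.

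\textbf{Utility.} I would bound $Eval(G,\widehat G)\le\|G-\widehat G\|_1=\sum_{e\in[N]}|G[e]-\widehat G[e]|$; coordinates outside $E$ contribute nothing since $\mathcal{A}$ never touches them. For $e\in E$ with noise $Z_e$, either $\widehat G[e]=w_e+Z_e$ (contribution $|Z_e|$) or $\widehat G[e]=0$ with $w_e+Z_e\le t$, whence $w_e\le t+|Z_e|$; in both cases $|G[e]-\widehat G[e]|\le t+|Z_e|$. Summing, $\|G-\widehat G\|_1\le |E|\,t+\sum_{e\in E}|Z_e|\le |E|\,t+|E|\max_{e\in E}|Z_e|$. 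The first term is $O\!\left(\tfrac{|E|\log(n/\delta)}{\epsilon}\right)$ by the choice of $t$; for the second, Fact~\ref{f.tail_laplace} with parameter $\log(1/\delta)$ gives $\max_{e\in E}|Z_e|\le\tfrac{\log(1/\delta)+\log|E|}{\epsilon}=O\!\left(\tfrac{\log(n/\delta)}{\epsilon}\right)$ with probability at least $1-\delta$. Combining, $Eval(G,\widehat G)=O\!\left(\tfrac{|E|\log(n/\delta)}{\epsilon}\right)$ with probability at least $1-\delta$.

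\textbf{Main obstacle.} The crux is the topology-changing case of the privacy proof: one must argue precisely that conditioning on the low-probability event $B$ \emph{decouples} the two executions—i.e., that on $\overline B$ the randomness restricted to the common edges is literally the same coupled random variable in both runs—and one must check that the threshold $t=2\log(2n/\delta)/\epsilon$ is large enough that the surviving-edge probability is at most $\delta$ throughout, including handling the $1/(1-\delta)$ factor in the two-sided indistinguishability bound. The utility half, by contrast, is a short deterministic per-coordinate estimate plus a union bound over $|E|$ Laplace tails.
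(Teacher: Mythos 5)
Your proposal is correct and follows essentially the same route as the paper's proof: privacy by case analysis on whether the differing coordinate changes the topology, conditioning on the high-probability event that the light extra edge is filtered (so the runs coincide via the Laplace mechanism and post-processing), and utility via a per-edge bound $|G[e]-\widehat G[e]|\le t+|Z_e|$ plus a union bound over Laplace tails, then $Eval(G,\widehat G)\le\|G-\widehat G\|_1$. If anything, your treatment of the conditioning/coupling and of the two-sided indistinguishability is more explicit than the paper's, and the small caveats you flag (absorbing the $1/(1-\delta)$ factor, the parameter range for the threshold) are present, implicitly, in the paper's argument as well.
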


\begin{proof}
  We first show the privacy guarantee. Given any graph $G = ([n],E)$, we consider two types of neighboring graphs: 
  \begin{enumerate}
    \item $G_1 = ([n],E_1)$ has an extra edge $e'$ with weight $1$ outside the edge set $|E|$, and 
    \item $G_2 = ([n],E_2)$ has edge set $E_2 \subset E$.
  \end{enumerate}
  Firstly, by the tail inequality of Laplace distribution (Lemma \ref{f.tail_laplace}), we see that by setting $t = \frac{2\log(2n/\delta)}{\epsilon}$, for every edge with weight less or equal than $1$, it holds with probability at least $1-\frac{\delta}{n^2}$ that this edge will be filtered. Let the extra edge in $G_1$ be $e'$, then by the union bound, with probability at least $1-\delta $, this edge will be set to $0$. Suppose it happens, then Algorithm \ref{alg2} preserves $(\epsilon,0)$ differential privacy due to the Laplace mechanism and the post-processing property of differential privacy. For case 2 where $G_2$ has the same or smaller edge set, clearly Algorithm \ref{alg2} also preserves $(\epsilon,\delta)$ differential privacy. Thus, Algorithm \ref{alg2} preserves $(\epsilon,\delta)$ differential privacy for $\epsilon>0$ and $0<\delta<1$.

  For the utility part, again by the tail inequality of Laplace distribution and the union bound, we see that the difference between $G$ and $\widehat{G}$ in terms of $\ell_1$ norm is 
  $$Pr\left[\left\|G - \widehat{G}\right\|_1 \leq 2|E|\cdot  \frac{2\log(2n/\delta)}{\epsilon}\right] \geq 1-\delta,$$
  and this means that 
  \begin{equation}
    \begin{aligned}
      Eval(G,\widehat{G}) &= \max_{q\in [0,1]^N} |q^\top G - q^\top\widehat{G}| = q^\top |G  -\widehat{G}| \\
      &\leq \sum_{i\in [N]}|G[i] - \widehat{G}[i]| = \|G - \widehat{G}\|_1 \\
      & \leq 4|E|\cdot  \frac{\log(2n/\delta)}{\epsilon}
    \end{aligned}
  \end{equation}
  holds with probability at least $1-\delta$. This completes the proof of Theorem~\ref{t.ana_on_alg2}.
\end{proof}

\noindent In particular, if the unweighted degree is bounded by $d_{\mathsf{max}} \leq n-1$, Algorithm \ref{alg2} will have a more accurate approximation on the size of small cuts. For this property, we give the following theorem:

\begin{theorem}\label{t.small_cut_alg2}
  Given $\epsilon>0$ and $0<\delta<1$. For any $G=([n],E)$ with maximum unweighted degree $d_{\mathsf{max}}\leq n-1$, with probability at least $1-\delta$, Algorithm \ref{alg2} outputs a $\widehat{G}$ such that for any $S\in 2^{[n]}$, 
    $$|\Phi_G(S) - \Phi_{\widehat{G}}(S)| \leq  \frac{4 d_{\mathsf{max}} |S|\log (2 n/\delta)}{\epsilon}.$$
\end{theorem}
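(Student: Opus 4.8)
The plan is to leverage the fact that Algorithm \ref{alg2} only perturbs the weights of edges that are genuinely present in $G$ (and sets all other edges to zero), so the symmetric difference between $G$ and $\widehat G$ is supported on $E$, with per-edge error controlled by a Laplace tail bound. First I would apply the tail inequality for Laplace noise (Fact \ref{f.tail_laplace}) with $k = |E| \le n\cdot d_{\mathsf{max}}/2 = O(n^2)$ i.i.d.\ samples from $\lap(1/\epsilon)$ and $t = \log(2n/\delta)$, so that with probability at least $1-\delta$ we have $\max_{e\in E}|Z_e| \le (\log(2n/\delta) + \log|E|)/\epsilon \le 2\log(2n/\delta)/\epsilon$. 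Call this event $\mathcal G$; condition on it for the rest of the argument. On $\mathcal G$, for every edge $e\in E$ the output weight $\widehat w_e$ is either $w_e + Z_e$ (if it survives the threshold) or $0$ (if it is filtered), and in either case $|\widehat w_e - w_e| \le \max\{|Z_e|, t + |Z_e|\}$; using $t = 2\log(2n/\delta)/\epsilon$ this gives the clean bound $|\widehat w_e - w_e| \le 4\log(2n/\delta)/\epsilon$ for every $e\in E$, while $\widehat w_e - w_e = 0$ for every $e\notin E$.

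Next I would bound the cut discrepancy. Fix any $S\subseteq [n]$. Write the cut error as a signed sum over edges crossing the cut:
\begin{align*}
  |\Phi_G(S) - \Phi_{\widehat G}(S)|
  = \left| \sum_{e \in E(S,V\setminus S)} (w_e - \widehat w_e) \right|
  \le \sum_{e \in E(S, V\setminus S)} |w_e - \widehat w_e|
  \le \frac{4\log(2n/\delta)}{\epsilon} \cdot \big|E(S, V\setminus S)\big|,
\end{align*}
where $E(S,V\setminus S)$ denotes the set of edges of $G$ with exactly one endpoint in $S$ (edges outside $E$ contribute nothing since $w_e = \widehat w_e = 0$ there). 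It now suffices to bound $|E(S,V\setminus S)| \le d_{\mathsf{max}}|S|$: every edge crossing the cut is incident to some vertex of $S$, and each vertex of $S$ is incident to at most $d_{\mathsf{max}}$ edges of $G$ by the unweighted-degree hypothesis, so the count is at most $d_{\mathsf{max}}|S|$. Combining, $|\Phi_G(S) - \Phi_{\widehat G}(S)| \le 4 d_{\mathsf{max}}|S|\log(2n/\delta)/\epsilon$ on the event $\mathcal G$, which has probability at least $1-\delta$, and since $S$ was arbitrary the bound holds simultaneously for all $S$ (the event $\mathcal G$ does not depend on $S$). This completes the proof.

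The argument is essentially routine once the structural observation is in place; the only mild subtlety — and the step I would be most careful about — is keeping the quantifiers straight, namely that the high-probability event $\mathcal G$ (a statement about the $|E|$ noise variables) is chosen \emph{first} and then the cut bound holds for \emph{all} $S$ deterministically conditioned on $\mathcal G$, so we do not pay a union bound over the exponentially many cuts. A secondary point worth stating explicitly is the bookkeeping on the per-edge error when an edge is filtered: one must check that a surviving edge contributes at most $|Z_e|$ and a filtered edge contributes at most $w_e + |Z_e| \le t + |Z_e|$ to the $\ell_1$ error (using that a filtered edge has perturbed weight $\le t$), and that the chosen threshold $t = 2\log(2n/\delta)/\epsilon$ makes both of these at most $4\log(2n/\delta)/\epsilon$.
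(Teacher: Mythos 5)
Your argument follows the same route as the paper's proof: condition on a single high-probability event bounding every noise variable, observe that $G$ and $\widehat G$ agree off $E$ and that each edge of $E$ contributes at most $O(\log(2n/\delta)/\epsilon)$ to the $\ell_1$ error (including the filtered-edge case $w_e\leq t+|Z_e|$), then bound the number of edges crossing $(S,V\setminus S)$ by $d_{\mathsf{max}}|S|$, with no union bound over cuts needed. The structure and quantifier handling are right; the only flaw is the arithmetic in your application of Fact~\ref{f.tail_laplace}: with tail parameter $\log(2n/\delta)$ you get $\max_{e\in E}|Z_e|\leq(\log(2n/\delta)+\log|E|)/\epsilon$, and the simplification $\log|E|\leq\log(2n/\delta)$ is false whenever $|E|>2n/\delta$ (e.g.\ a dense graph with constant $\delta$), so the claimed bound $2\log(2n/\delta)/\epsilon$ does not follow as written. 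The conclusion you need is nevertheless true and the fix is immediate: either invoke Fact~\ref{f.tail_laplace} with tail parameter $\log(1/\delta)$, giving $\max_e|Z_e|\leq(\log(1/\delta)+\log|E|)/\epsilon\leq 2\log(2n/\delta)/\epsilon$ since $|E|\leq\binom{n}{2}$, or argue per edge that $\Pr[|Z_e|>t]=e^{-\epsilon t}=(\delta/2n)^2$ and union bound over at most $\binom{n}{2}$ edges, which is essentially what the paper does. With that one step repaired, your proof coincides with the paper's.
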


\begin{proof}
    In Algorithm \ref{alg2}, we first add a Laplace noise on each edge in the edge set, then filter the edges with weights below or equal to the threshold $t$. Thus, for any $e\in E$, we have that with probability at least $1-\delta$,
    $$|w_e - \widehat{w}_e|\leq 2t,$$
    while for every $e\notin E$, $w_e = \widehat{w}_e = 0$. Consider any $S\in 2^{[n]}$, since $E(S,[n]\backslash S)$ has at most $d_{\mathsf{max}} |S|$ edges, then we have that 
      $$|\Phi_G(S) - \Phi_{\widehat{G}}(S)|\leq 2t\cdot d_{\mathsf{max}} |S| = \frac{4d_{\mathsf{max}}|S|\log(2n/\delta)}{\epsilon}.$$
      This concludes the proof of Theorem~\ref{t.small_cut_alg2}.
\end{proof}

Note that, For any disjoint $S,T\in 2^{[n]}$, there are at most $d_{\mathsf{max}}\cdot \min(|S|,|T|)$ edges crossing through the $(S,T)$-cut. Thus, we immediately have the following corollary:

\begin{corollary}
      Given $\epsilon>0$ and $0<\delta<1$. For any $G=([n],E)$ with maximum unweighted degree $d_{\mathsf{max}}\leq n-1$, with probability at least $1-\delta$, Algorithm \ref{alg2} outputs a $\widehat{G}$ such that for any disjoint $S,T\in 2^{[n]}$, 
  $$|\Phi_G(S,T) - \Phi_{\widehat{G}}(S,T)| = O\left(\frac{d_{\mathsf{max}}\cdot \min(|S|,|T|)\log (n/\delta)}{\epsilon}\right).$$
\end{corollary}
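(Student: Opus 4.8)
The plan is to obtain this as an immediate consequence of the per-edge accuracy estimate already established in the proof of Theorem~\ref{t.small_cut_alg2}, combined with a counting bound on the number of edges crossing an $(S,T)$-cut. First I would record the combinatorial observation: for disjoint $S,T\subseteq V$, every edge $e=(u,v)$ with $u\in S$, $v\in T$ and $G[e]>0$ is incident to a vertex of $S$, so there are at most $d_{\mathsf{max}}|S|$ such edges; by the symmetric argument at most $d_{\mathsf{max}}|T|$; hence at most $d_{\mathsf{max}}\min(|S|,|T|)$ edges contribute to $\Phi_G(S,T)=\sum_{e=(u,v)\in S\times T}G[e]$, and the same bound holds for $\widehat G$, whose edge set is contained in $E$.

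Next I would invoke the accuracy guarantee of Algorithm~\ref{alg2} with threshold $t=\tfrac{2\log(2n/\delta)}{\epsilon}$. By the Laplace tail bound (Fact~\ref{f.tail_laplace}) and a union bound over the at most $n^2$ edges of $E$ that receive noise, with probability at least $1-\delta$ every perturbation $Z_e$ has $|Z_e|\le t$; condition on this event. Then for every $e$ we get $|w_e-\widehat w_e|\le 2t$: if $e\notin E$ both weights are $0$; if $e\in E$ and its perturbed value exceeds $t$ then $|w_e-\widehat w_e|=|Z_e|\le t$; and if $e\in E$ but is filtered out, then $w_e\le t+|Z_e|\le 2t$, so $|w_e-\widehat w_e|=w_e\le 2t$. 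This is exactly the per-edge estimate used inside the proof of Theorem~\ref{t.small_cut_alg2}.

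Finally I would sum the per-edge errors over the (at most $d_{\mathsf{max}}\min(|S|,|T|)$) edges crossing the cut:
\[
|\Phi_G(S,T)-\Phi_{\widehat G}(S,T)|\le\!\!\sum_{e=(u,v)\in S\times T}\!\!|w_e-\widehat w_e|\le 2t\, d_{\mathsf{max}}\min(|S|,|T|)=\frac{4 d_{\mathsf{max}}\min(|S|,|T|)\log(2n/\delta)}{\epsilon},
\]
which is $O\!\big(\tfrac{d_{\mathsf{max}}\min(|S|,|T|)\log(n/\delta)}{\epsilon}\big)$. Since the whole argument is conditioned on the single event that all edge perturbations are at most $t$ in magnitude, the estimate holds simultaneously for every disjoint pair $S,T$ with probability at least $1-\delta$, with no extra union bound over the exponentially many pairs. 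There is no genuinely hard step here; the only point needing care—and the only place where the choice of threshold and the union bound over all $n^2$ edges are actually used—is handling edges of $E$ whose perturbed weight falls below $t$ and hence gets zeroed out, which is precisely the subtlety already resolved in Theorem~\ref{t.small_cut_alg2}.
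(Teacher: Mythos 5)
Your proposal is correct and follows essentially the same route as the paper: condition on the event that all Laplace perturbations are bounded by the threshold (probability at least $1-\delta$ by a union bound over edges), deduce the per-edge bound $|w_e-\widehat w_e|\le 2t$ including the filtered-edge case, and sum over the at most $d_{\mathsf{max}}\min(|S|,|T|)$ edges crossing the $(S,T)$-cut. The paper obtains the corollary by exactly this observation applied to the proof of Theorem~\ref{t.small_cut_alg2}, so there is nothing to add.
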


\subsection{Proof of Theorem \ref{t.spectral}}
Here, we first restate Theorem \ref{t.spectral}.

\begin{theorem}\label{t.spectral_restate}
  For any $G=([n],E)$ with bounded maximum unweighted degree $d_{\mathsf{max}}\leq n-1$, with high probability, Algorithm \ref{alg2} outputs a $\widehat{G} = (V,\widehat{E})$ such that for all $x\in \mathbb{R}^n$ with $\|x\|_2 \leq 1$, 
  $$|x^\top L_Gx - x^\top L_{\widehat{G}}x| =  O\left(\frac{d_{\mathsf{max}}\log(n/\delta)}{\epsilon}\right).$$
\end{theorem}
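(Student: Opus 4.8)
The plan is to derive the spectral bound from the $(S,T)$-cut guarantee already established in Theorem~\ref{t.small_cut_alg2} by feeding it into the Bilu--Linial transfer lemma (Lemma~\ref{l.spectral}), applied to the symmetric matrix $A:=L_G-L_{\widehat G}$. I would first condition on the event $\mathcal E$ --- which holds with probability at least $1-\delta$ by the Laplace tail bound (Fact~\ref{f.tail_laplace}) and a union bound over the at most $\binom{n}{2}$ edges, exactly as in the proof of Theorem~\ref{t.small_cut_alg2} --- that $|w_e-\widehat w_e|\le 2t$ for every $e\in E$ and $\widehat w_e=0$ for every $e\notin E$, where $t=\tfrac{2\log(2n/\delta)}{\epsilon}$; note also that $\widehat E\subseteq E$ because Algorithm~\ref{alg2} only deletes edges. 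On $\mathcal E$ all the estimates below become deterministic.

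Next I would record two bounds. \textbf{(Entrywise and row sums.)} For $u\ne v$ one has $A_{uv}=-(w_{uv}-\widehat w_{uv})$, so $|A_{uv}|\le 2t$, and $A_{uv}=0$ unless $uv\in E$, whence each row of $A$ has at most $d_{\mathsf{max}}$ nonzero off-diagonal entries; the diagonal entry $A_{uu}=d_u(G)-d_u(\widehat G)=\sum_v(w_{uv}-\widehat w_{uv})$ is a sum of at most $d_{\mathsf{max}}$ such differences, so $|A_{uu}|\le 2d_{\mathsf{max}}t$. Hence $\|A\|_\infty\le 4d_{\mathsf{max}}t=:\ell$, and every diagonal entry of $A$ has absolute value at most $2d_{\mathsf{max}}t$. \textbf{(Cut form on $0/1$ vectors.)} For nonzero $u,v\in\{0,1\}^n$ with $u^\top v=0$, write $u=\mathbf 1_S$, $v=\mathbf 1_T$ with $S,T$ disjoint and nonempty; disjointness gives $u^\top L_Gv=-\Phi_G(S,T)$ (and likewise for $\widehat G$), so $u^\top Av=-\big(\Phi_G(S,T)-\Phi_{\widehat G}(S,T)\big)=-\sum_{e\in S\times T}(w_e-\widehat w_e)$. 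Since every vertex of $S$ (resp.\ of $T$) is incident to at most $d_{\mathsf{max}}$ edges, the $(S,T)$-cut contains at most $d_{\mathsf{max}}\min(|S|,|T|)$ edges, so $|u^\top Av|\le 2t\,d_{\mathsf{max}}\min(|S|,|T|)$ --- precisely the estimate behind Theorem~\ref{t.small_cut_alg2} and its corollary. Dividing by $\|u\|_2\|v\|_2=\sqrt{|S||T|}\ge\min(|S|,|T|)$ gives $|u^\top Av|/(\|u\|_2\|v\|_2)\le 2d_{\mathsf{max}}t\le\ell=:\alpha$.

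With $\ell=\alpha=4d_{\mathsf{max}}t$ we have $\log(\ell/\alpha)+1=1$, and the diagonal hypothesis $2d_{\mathsf{max}}t\le\alpha=\alpha(\log(\ell/\alpha)+1)$ is met, so Lemma~\ref{l.spectral} yields $\|A\|_2=O\big(\alpha(\log(\ell/\alpha)+1)\big)=O(\alpha)=O\!\big(\tfrac{d_{\mathsf{max}}\log(n/\delta)}{\epsilon}\big)$. Since $A$ is symmetric, for every $x$ with $\|x\|_2\le1$ we then get $|x^\top L_Gx-x^\top L_{\widehat G}x|=|x^\top Ax|\le\|A\|_2\|x\|_2^2\le\|A\|_2$, which is the claimed bound, holding with probability at least $1-\delta$.

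The step I expect to carry the real content is the normalization $\min(|S|,|T|)\le\sqrt{|S||T|}$ in the cut-form bound: handing Lemma~\ref{l.spectral} a worst-case \emph{additive} cut error --- which scales like $d_{\mathsf{max}}t$ times a cut size and can be as large as $\Theta(d_{\mathsf{max}}nt)$ --- would leave a spurious $\Theta(\log n)$ factor in the final bound. It is the cancellation of $\sqrt{|S||T|}$ against $\min(|S|,|T|)$, together with the sparsity constraint that pins $\|A\|_\infty$ at the same scale $d_{\mathsf{max}}t$, that forces $\ell/\alpha=O(1)$ and removes the logarithm, giving the tight $O(d_{\mathsf{max}}\log(n/\delta)/\epsilon)$ claimed in the theorem. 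Everything else reduces to the Laplace tail bound and the identity $\mathbf 1_S^\top L_G\mathbf 1_T=-\Phi_G(S,T)$.
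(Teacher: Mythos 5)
Your proposal is correct and follows essentially the same route as the paper's proof: condition on the Laplace-tail event $|w_e-\widehat w_e|\le 2t$ on $E$, bound the diagonal of $L_G-L_{\widehat G}$ by $2td_{\mathsf{max}}$, bound the bilinear form on disjoint $0/1$ vectors via the $(S,T)$-cut error with the $d_{\mathsf{max}}\min(|S|,|T|)$ edge count and the normalization $\min(|S|,|T|)\le\sqrt{|S||T|}$, and then invoke Lemma~\ref{l.spectral} with $\ell/\alpha=O(1)$ so no extra logarithm appears. The only (immaterial) difference is your choice $\alpha=\ell=4d_{\mathsf{max}}t$ versus the paper's $\alpha=2td_{\mathsf{max}}$, $\ell=2\alpha$.
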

\begin{proof}
Recall that we set the threshold $t = {c\log(n/\delta)\over \epsilon}$ in Algorithm \ref{alg2}. By the analysis in Theorem \ref{t.ana_on_alg2}, we have that  
$$|w_e - \widehat{w}_e|\leq 2t, \text{for all } e\in E$$
with probability at least $1-\delta$.

We use Lemma~\ref{l.spectral} to convert the approximation on $(S,T)$-cuts to the approximation on the spectrum. 
 Let $L_{\widetilde{G}} = L_G - L_{\widehat{G}}$. Clearly $L_{\widetilde{G}}\in \mathbb{R}^{n\times n}$ is a symmetric matrix such that $L_{\widetilde{G}} \mathbf{1} = \mathbf{0}$. We then check the conditions in Lemma \ref{l.spectral} for $L_{\widetilde{G}}$. First, with probability at least $1-\delta$, we have that 
\begin{equation}\label{e.4.31}
  \max_{i\in [n]} |L_{\widetilde{G}}[i,i]| = \max_{i\in [n]}|\Phi_{G}(\{i\}) - \Phi_{\widehat{G}}(\{i\})| \leq 2t\cdot \min\{d_{\mathsf{max}}, |E|\} \leq 2t\cdot d_{\mathsf{max}}.
\end{equation}
Since $\mathbf{1}_S[i]\mathbf{1}_T[j]$ and $\mathbf{1}_S[j]\mathbf{1}_T[i]$ cannot  be $1$ simultaneously, for all $\mathbf{1}_S,\mathbf{1}_T\in \{0,1\}^n$ such that $S\cap T = \varnothing$, we have 
\begin{equation}\label{e.4.32}
  \begin{aligned}
    \frac{|\mathbf{1}_S^\top L_{\widetilde{G}}\mathbf{1}_T|}{\|\mathbf{1}_S\|_2 \|\mathbf{1}_T\|_2} &= \frac{1}{{\|\mathbf{1}_S\|_2 \|\mathbf{1}_T\|_2}}\sum_{\{i,j\}\in E} w_{\{i,j\}} |\mathbf{1}_S[i]\mathbf{1}_T[i] + \mathbf{1}_S[j]\mathbf{1}_T[j] - \mathbf{1}_S[i]\mathbf{1}_T[j] - \mathbf{1}_S[j]\mathbf{1}_T[i]| \\
    & = \frac{|\Phi_{\widetilde{G}}(S,T)|}{{\|\mathbf{1}_S\|_2 \|\mathbf{1}_T\|_2}} = \frac{|\Phi_{{G}}(S,T) - \Phi_{\widehat{G}}(S,T)|}{\|\mathbf{1}_S\|_2 \|\mathbf{1}_T\|_2} \\
    &\leq \frac{1}{\|\mathbf{1}_S\|_2 \|\mathbf{1}_T\|_2}\cdot 2t\cdot \min\left\{d_{\mathsf{max}}\cdot \min \left\{|S|,|T|\right\},{|E|}\right\} \\
    &= 2t\cdot \min\left\{d_{\mathsf{max}}\cdot \min\left\{\sqrt{\frac{|S|}{|T|}}, \sqrt{\frac{|T|}{|S|}}\right\},\frac{|E|}{\sqrt{|S|\cdot |T|}}\right\}  \leq 2t\cdot d_{\mathsf{max}}.
  \end{aligned}
\end{equation}
holds simultaneously.  
We now let the $\alpha$ in Lemma \ref{l.spectral} be $\alpha = 2td_{\mathsf{max}}$.
Combing the fact that $L_{\widetilde{G}} \mathbf{1} = \mathbf{0}$ and Equation (\ref{e.4.31}) implies that $\|L_{\widetilde{G}}\|_{\infty} \leq 2\alpha$, and all diagonal entries of $A$ has absolute value $\alpha$ which is less than $\alpha(\log (\ell/\alpha) +1)$. Here, $\ell$ is the infinity norm of $L_{\widetilde{G}}$, and it is bounded by $2\alpha$. 
Also, Equation \ref{e.4.32} implies that for all non-zero vectors $u,v\in \{0,1\}^n$ and $u^\top v  =0$, it holds that
$$\frac{|u^\top L_{\widetilde{G}} v|}{\|u\|_2\|v\|_2} \leq \alpha.$$
Thus, Theorem \ref{t.spectral} follows by applying Lemma \ref{l.spectral}. with $\alpha = 2td_{\mathsf{max}} = O(d_{\mathsf{max}}\log(n/\delta)/\epsilon)$.  
\end{proof}

\section{Optimality of our algorithms}\label{app:optimality}
We have proposed our algorithms for approximating cut queries by actually preserving the answers to all linear queries on the given graph. In this section, we show that for answering any linear queries in $[0,1]^N$ by a synthetic graph $\widehat{G}$ under $(\epsilon,\delta)$-differential privacy, the error w.r.t. $G = (V,E)$ is at least $\Omega((1-\delta)|E|/(e^\epsilon + 1))$ for $\epsilon>0$ and $0<\delta<1$. Thus, both our $(\epsilon, \delta)$-differentially private algorithms for answering linear queries are optimal.
\begin{theorem}\label{t.lb_on_sparse}
  Recall that $N = {n\choose 2}$. Let $\mathcal{M}:\mathbb{R}_{\geq 0}^N\rightarrow \mathbb{R}^N$ be a $(\epsilon, \delta)$-algorithm such that for any given graph $G$, it outputs a $\widehat{G}$ which satisfies $$\mathbb{E}[Eval(G,\widehat{G})] = \mathbb{E} \left[\max_{q\in [0,1]^N}|q(G)- (\widehat{G})|\right] \leq \alpha,$$ then $\alpha = \Omega\left(\frac{(1-\delta)|E|}{e^\epsilon + 1}\right)$.
\end{theorem}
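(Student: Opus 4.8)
The plan is to prove this lower bound by a \emph{probabilistic/averaging argument} over a family of hard instances, where the only privacy input is a \emph{single-edge (coordinate-wise)} application of $(\epsilon,\delta)$-differential privacy. I would deliberately avoid a group-privacy/packing argument along the length-$|E|$ path from $G$ to the empty graph: such an argument loses a factor $e^{|E|\epsilon}$ and only yields a useful bound when $\epsilon=O(1/|E|)$, whereas the coordinate-wise approach preserves the sharp $1/(e^{\epsilon}+1)$ dependence for every $\epsilon$.

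First I would reduce $Eval$ to the $\ell_1$ error: for any $G$ and any $\widehat{G}\in\mathbb{R}^N$, plugging $q=\mathbf{1}_{\{i:\,G[i]>\widehat{G}[i]\}}$ and $q=\mathbf{1}_{\{i:\,G[i]<\widehat{G}[i]\}}$ into $Eval$ shows $\|G-\widehat{G}\|_1=\|(G-\widehat{G})^+\|_1+\|(G-\widehat{G})^-\|_1\le 2\,Eval(G,\widehat{G})$. Hence it suffices to exhibit a graph $G$ with $\mathbb{E}_{\mathcal{M}}\big[\|G-\mathcal{M}(G)\|_1\big]=\Omega\big((1-\delta)|E|/(e^{\epsilon}+1)\big)$.

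Next, fix $m\le N$ and $m$ distinct coordinates $a_1,\dots,a_m\in[N]$, and for $b\in\{0,1\}^m$ let $G_b\in\mathbb{R}^N_{\ge0}$ put weight $1+b_i$ on $a_i$ and $0$ elsewhere; thus every $G_b$ has \emph{exactly} $m$ edges, and flipping the $i$-th bit of $b$ changes only the $a_i$-coordinate by $1$, so $G_b$ and its flip are neighboring. Draw $b$ uniformly at random, run $\widehat{G}=\mathcal{M}(G_b)$, and call coordinate $i$ \emph{bad} if $|G_b[a_i]-\widehat{G}[a_i]|\ge 1/2$. Fix $i$ and the remaining bits $b_{-i}$; let $G^0,G^1$ be the two instances with $b_i=0,1$ (so $G^0[a_i]=1$, $G^1[a_i]=2$) and $p_c:=\Pr[\,\mathcal{M}(G^c)[a_i]\ge 3/2\,]$. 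If $b_i=1$ and $\widehat{G}[a_i]<3/2$, or $b_i=0$ and $\widehat{G}[a_i]\ge3/2$, then $i$ is bad, so $\Pr[\,i\text{ bad}\mid b_{-i}\,]\ge\tfrac12(1-p_1)+\tfrac12 p_0$. Since $G^0$ and $G^1$ are neighbors, $(\epsilon,\delta)$-privacy gives $p_1\le e^{\epsilon}p_0+\delta$ and $p_0\le e^{\epsilon}p_1+\delta$; minimizing $1-p_1+p_0$ over $[0,1]^2$ subject to these two inequalities is a short two-variable calculation whose minimum is $(1-\delta)/e^{\epsilon}$, attained at $(p_0,p_1)=((1-\delta)/e^{\epsilon},1)$. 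Thus each coordinate is bad with probability at least $\tfrac{1-\delta}{2e^{\epsilon}}$, for every $b_{-i}$.

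By linearity, the expected number of bad coordinates is at least $\tfrac{m(1-\delta)}{2e^{\epsilon}}$, and combining $Eval(G_b,\widehat{G})\ge\tfrac12\|G_b-\widehat{G}\|_1$ with $\|G_b-\widehat{G}\|_1\ge\sum_i|G_b[a_i]-\widehat{G}[a_i]|\ge\tfrac12\,(\#\text{bad coordinates})$ gives $\mathbb{E}_{b,\mathcal{M}}[Eval(G_b,\mathcal{M}(G_b))]\ge\tfrac{m(1-\delta)}{8e^{\epsilon}}=\Omega\big((1-\delta)m/(e^{\epsilon}+1)\big)$. Picking $b^\star$ maximizing the inner expectation and noting $|E(G_{b^\star})|=m$ then yields a single graph witnessing the claimed bound. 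I expect the only delicate points to be (i) choosing the instance family so that it keeps \emph{exactly} $m$ edges while still consisting of pairs of single-coordinate neighbors — a plain $0/1$ weighting would either make $|E|$ random or force distance-$2$ neighbors and an $e^{2\epsilon}$ loss, which is why the $1+b_i$ weighting is used — and (ii) extracting the precise constant $\tfrac{1-\delta}{e^{\epsilon}+1}$ from the two-variable linear program rather than settling for a cruder estimate; everything else is routine bookkeeping.
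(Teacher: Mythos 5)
Your proposal is correct. The first step (reducing $Eval$ to $\tfrac12\|G-\widehat G\|_1$ via the two indicator queries) is exactly the paper's Equation~(\ref{eq.lower_bound}), and the heart of your argument --- a per-coordinate, two-point $(\epsilon,\delta)$-DP computation showing each coordinate must err by a constant with probability $\Omega\bigl(\tfrac{1-\delta}{e^\epsilon+1}\bigr)$ --- is the same mechanism that drives the paper's Lemma~\ref{l.lb_on_l1norm} (your linear-program minimum $(1-\delta)/e^{\epsilon}$ versus the paper's threshold $\beta\geq\tfrac{1-\delta}{e^\epsilon+1}$ differ only by a factor of $2$, immaterial for the $\Omega(\cdot)$ claim). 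Where you genuinely diverge is in the packaging: the paper first proves the dense case $|E|=N$ by contradiction against a uniform per-coordinate accuracy guarantee on a fixed $G$ and its $N$ neighbors $G^{(i)}$, and then handles sparse graphs by an embedding argument ("borrow dimensions with zero weight"); you instead build, for each target $m$, an explicit product family $\{G_b\}_{b\in\{0,1\}^m}$ with weights $1+b_i$ on a fixed $m$-element support, bound the expected number of bad coordinates by averaging over a uniformly random $b$, and extract a single witness $G_{b^\star}$ with exactly $m$ edges. Your route buys a cleaner treatment of two points the paper leaves somewhat informal --- it never has to decide "on which of the two neighboring graphs the failure occurs" coordinate by coordinate (the averaging over $b$ absorbs this), and it dispenses with the sparse-case reduction entirely since every instance in the family has exactly $m$ edges --- at the cost of slightly worse constants; the paper's version is marginally shorter in the dense case. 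Both yield the stated $\Omega\bigl(\tfrac{(1-\delta)|E|}{e^\epsilon+1}\bigr)$ bound, and your two-variable minimization is verified to have minimum $(1-\delta)/e^{\epsilon}$ at $(p_0,p_1)=\bigl((1-\delta)/e^{\epsilon},1\bigr)$, so no gap remains.
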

\begin{proof}
  We first note that preserving the $\ell_1$ norm of $G$ can be reduced to answering any collection $\mathcal{Q}$ of linear queries in $[0,1]^N$. (It is easy to verify the opposite direction is also correct). Let $\widehat{G}$ be the output. Then we construct two linear queries 
  \begin{equation*}
    q_+ := \left\{\begin{aligned}
      1, &\quad i\in [N] \text{ and }G[i]\geq \widehat{G}[i], \\
      0, &\quad i\in [N] \text{ and }G[i]< \widehat{G}[i];
    \end{aligned} \right.
  \end{equation*}
  and $q_- = \{1\}^N - q_+$, which flips the answer of $q_+$. Then we see that 
  $$\langle q_+, (G - \widehat{G}) \rangle + \langle{q_-}, (G-\widehat{G})\rangle = \|G - \widehat{G}\|_1,$$
  where $\langle \cdot, \cdot \rangle$ is the inner product of two vectors. Since $q_+,q_-\in [0,1]^N$, then 
  \begin{equation}\label{eq.lower_bound}
   \begin{split}
    Eval(G,\widehat{G}) &= \max_{q\in [0,1]^N}|q(G - \widehat{G})| \\
    &\geq \max \{\langle q_+, (G - \widehat{G}) \rangle, \langle{q_-}, (G-\widehat{G})\rangle\} \\
    &\geq \|G-\widehat{G}\|_1/2.   
   \end{split} 
  \end{equation}
  Now we aim to find the lower bounds on minimizing $\|G - \widehat{G}\|_1$. Consider the dense case where $|E| = N$, there is a widely accepted sense in which the Laplace mechanism, i.e., adding a random perturbation on each pair of vertices achieves optimal error on preserving $G$ in terms of $\ell_1$ norm for mechanisms adding oblivious noise (see \cite{aumuller2022representing} and Theorem 6 in \cite{koufogiannis2015optimality}). Since each magnitude of the i.i.d. Laplace noise added by the Laplace mechanism is $O(1/\epsilon)$, then the error in terms of $\ell_1$ norm is $O(N\log n/\epsilon) = O(|E|\log n/\epsilon)$. Here, the $\log n$ factor is due to the union bound. For general algorithms in the sparse case, we prove the following lemma for the sake of completeness:

  \begin{lemma}\label{l.lb_on_l1norm}
    Given $\epsilon>0$ and $0<\delta<1$. Let $\mathcal{M}:\mathbb{R}_{\geq 0}^N\rightarrow \mathbb{R}^N$ be a $(\epsilon, \delta)$-algorithm such that for any given graph $G$, it outputs a $\widehat{G}$ which satisfies $\mathbb{E}\left[\|G - \widehat{G}\|_1\right] \leq \alpha$, then $\alpha = \Omega\left(\frac{(1-\delta)|E|}{e^\epsilon + 1}\right)$. 
  \end{lemma}

\begin{proof}
  We first consider the dense case where $|E| = N$, and the goal is to show that the error $\alpha$ is at least $\Omega(N)$. Let $\mathcal{M}_i:\mathbb{R}_{\geq 0}^N \rightarrow \mathbb{R}$ be the projection of $\mathcal{M}$ on the $i$-th position for $i\in [N]$, namely $\widehat{G}[i] = \mathcal{M}_i(G)$ if $\mathcal{M}(G) = \widehat{G}$. By the post-processing lemma, each $\mathcal{M}_i$ is simultaneously $(\epsilon,\delta)$-differential privacy. For any $G\in \mathbb{R}_{\geq 0}^N$, let $G^{(i)}$ for $i\in [N]$ be $N$ different neighboring graphs of $G$ such that $G^{(i)}[i] = G[i]+1$. Let $0<\beta <1-\delta$ be a parameter. Suppose that for any fixed $i\in [N]$, with probability at least $1-\beta$, the error $|G[i] - \widehat{G}[i]|$ is at most $\alpha_i$, where $\alpha_i$ is any constant satisfies $0< \alpha_i<1/2$. Since $\mathcal{M}_i$ is $(\epsilon,\delta)$-differentially private, then 
  \begin{equation}
    \begin{aligned}
      \beta &\geq Pr[|\mathcal{M}_i(G^{(i)}) - G^{(i)}[i]|> \alpha_i] \\
      &\geq Pr[\mathcal{M}_i(G^{(i)}) <G^{(i)}[i]- \alpha_i] \\
      & \geq Pr[\mathcal{M}_i(G^{(i)}) <G^{(i)}[i]-1/2]\\
      &\geq Pr[\mathcal{M}_i(G^{(i)}) \leq G[i]+\alpha_i] \\
      & \geq e^{-\epsilon} \cdot (Pr[\mathcal{M}_i(G) \leq G[i]+\alpha_i]-\delta)\\
      &\geq (1-\beta-\delta)\cdot e^{-\epsilon}.
    \end{aligned}
  \end{equation}
  This inequality is violated for any $\beta <\frac{e^{-\epsilon}(1-\delta)}{1+e^{-\epsilon}}$. Therefore, for all $0<\epsilon,\delta<1$, if we choose a $\beta$ satisfying $0.99\cdot \frac{1-\delta}{e^\epsilon+1}<\beta<\frac{e^{-\epsilon}(1-\delta)}{1+e^{-\epsilon}}<1-\delta$, then we have that for any $i\in [N]$,
  $$Pr\left[|G[i] - \widehat{G}[i]| > \alpha_i \right] >\beta>0.99\cdot \frac{1-\delta}{e^\epsilon +1}.$$
  Then, the expected error on $\ell_1$ norm will be 
  $$\mathbb{E} \left[ \left\|G-\widehat{G} \right\|_1 \right] \geq \sum_{i\in [N]} \alpha_i \cdot Pr \left[|G[i] - \widehat{G}[i]| > \alpha_i \right] \geq 0.99\cdot \sum_{i\in [N]}\frac{\alpha_i(1-\delta)}{e^\epsilon+1},$$ 
  which will be at least $\Omega\left({(1-\delta)N \over (e^\epsilon + 1)}\right)$.

  We now move to the sparse case, i.e.,  $|E| = o(N)$. For the sake of contradiction, suppose there is an algorithm that outputs a graph, $\widehat{G}$, such that $\mathbb{E}\left[\|\widehat{G} - G\|_1 \right] = o(|E|)$, then consider the density case with $N' = \Theta(|E|)$, we can always borrow some dimensions with zero weight to make it a sparse case with respect to $N$, where $N' = o(N)$. The above argument shows that the error for the dense case is at least $\Omega(N') = \Omega(|E|)$. Note that we construct the sparse case by borrowing new dimensions, and it will only increase the error in terms of $\ell_1$ norm. That is, the error in terms of $\ell_1$ norm in the sparse case w.r.t. $N$ is also at least $\Omega(|E|)$, which leads to a contraction and concludes the proof of Lemma~\ref{l.lb_on_l1norm}. 
\end{proof}
\noindent Combining Equation (\ref{eq.lower_bound}) and Lemma \ref{l.lb_on_l1norm} completes the proof of Theorem \ref{t.lb_on_sparse}.
\end{proof}

\section{Extension to Continual Observation}\label{app:contious_observation}\label{sec:contious_observation}

In previous sections, we mainly discuss static algorithms on confidential graphs. However, there is a long line of work on the dynamic setting known as continual observation model \citep{chan2011private,dwork2010differentially}. Here, we introduce a general setting for private graph analysis under continual observation. Consider that we start from an empty graph $\widetilde{G}_0 = G_{\text{empty}} = ([n],\varnothing)$. Given a time upper bound $T$, a private algorithm $\mathcal{M}$ on graphs and a utility function $s:\mathbb{R}_{\geq 0}^{n\choose 2}\times \mathbb{R}_{\geq 0}^{n\choose 2} \rightarrow \mathbb{R}_{\geq 0}$, in each round $i = 1,\cdots, T$:
\begin{enumerate}
  \item An oblivious adversary propose an update $G_i = (e,w)\in [{n \choose 2}]\times \mathbb{R}_{\geq 0}$, and let $\widetilde{G}_{i} = \widetilde{G}_{i-1} + G_i$, which means increasing the weight of the $e$-th pair of vertices by $w$. 
  \item $\mathcal{M}$ outputs a synthetic graph that approximates $\widetilde{G}_{i}$ with respect to the utility function $s$.
\end{enumerate}

To apply Definition \ref{d.dp} in continual observation, we view streams as the input datasets and two streams are neighboring if they differ in one update by at most $1$ (on same edge $e$). This is also called the {\em event-level} privacy.  We note that under such a privacy notion, two neighboring streams could still have different topology (consider two updates $(e,0)$ and $(e,1)$ where $e \leq {n \choose 2}$).

\textbf{The algorithm.} 
We use the binary mechanism  \citep{chan2011private, dwork2010differentially}. A trivial approach is to initialize ${n\choose 2}$ copies of binary mechanism on each pair of vertices, which needs $O(n^2)$ time per round. Our algorithm speeds up the procedure using the sparsity of partial sums of graphs. In Algorithm \ref{alg.continuous_observation}, we use $G_t$ ($t\in [T]$) to denote the graph whose only edge is the $t$-th coming edge (with weight $w_t$) in the stream. We use $G_{\text{empty}}$ to denote the empty graph.

\begin{algorithm}[t]
	\caption{Binary mechanism (in the graph setting)}\label{alg.continuous_observation}
	\KwIn{A sequence of edges $\{(e_i,w_i)\}_{i\in [T]}$, a mechanism $\mathcal{M}:\mathbb{R}_{\geq 0}^{n\choose 2} \rightarrow \mathbb{R}_{\geq 0}^{n\choose 2}$.}
	\KwOut{A sequence of synthetic graphs $\{\widehat{G}_i\}_{i\in [T]}$.}

    \For{$t = \{1,2,\cdots, T\}$}{
      Express $t$ in its binary form $t = \sum_{l\geq 0} k_l(t) \cdot 2^{l}$\; 
      \tcp{Here, $k_l(t)\in \{0,1\}$ is the $(l+1)$-th bit of $t$ (in binary form) starting from the right hand.}
      Let $j \leftarrow \min_{l\geq 0} \{k_l(t)\neq 0\}$\;
      Let $\alpha_{j} \leftarrow \sum_{l<j} \alpha_l + G_t$\;
      \For{$0\leq l \leq j-1$}{
        $\alpha_i = G_{\text{empty}}$
      }
      Run $\mathcal{M}$ on $\alpha_j$, and set $\widehat{\alpha}_j = \mathcal{M}(\alpha_j)$ \;
      Release $\widehat{G}_t = \sum_{l\geq 0}\widehat{\alpha}_l$ \;
    }
\end{algorithm}

\begin{theorem}
Given $T\in \mathbb{N}$, for any stream $S\in ([{n\choose 2}]\times \mathbb{R}_{\geq 0})^T$ with $m$ be the number of edges in the final state $\widetilde{G}_T$, let $\widetilde G_t$ denote the graph formed by the stream until time $t$. Then for any $\epsilon>0$ and $0<\delta<1$, Algorithm \ref{alg.continuous_observation} is an $O(\log T)$ amortized update time, $(\epsilon,(T+1)\delta)$ differentially private algorithm under continual observation that at every time epoch, outputs a graph $\widehat G_t$, such that with probability at least $1 -T\delta$,
$$\max_{t\in [T]} Eval\left(\widehat{G}_t, \widetilde{G}_t\right) = O \left(\frac{m\log ({n\over \delta})  \sqrt{\log^3(T) \log({1\over \delta})}}{\epsilon}\right).
$$

\end{theorem}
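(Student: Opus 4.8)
The plan is to run Algorithm~\ref{alg.continuous_observation} with the generic subroutine $\mathcal{M}$ instantiated as the linear-time, sparsity-preserving Algorithm~\ref{alg2} of Theorem~\ref{t.ana_on_alg2}, but invoked with a rescaled internal budget. Write $k := \lceil\log_2 T\rceil+1$ for the number of levels of the binary counter, set the internal failure parameter to $\delta_0:=\delta$, and pick the internal privacy parameter $\epsilon_0 := c\,\epsilon/\sqrt{k\log(1/\delta)}$ for a small absolute constant $c$; with these choices each call of $\mathcal{M}$ on an $m'$-edge graph is $(\epsilon_0,\delta_0)$-differentially private, outputs an $O(m')$-edge graph, runs in $O(m')$ time, and with probability $\ge 1-\delta_0$ achieves $Eval = O(m'\log(n/\delta_0)/\epsilon_0)$ (Theorem~\ref{t.ana_on_alg2}). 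I then need to establish three things: the $O(\log T)$ amortized running time, the $(\epsilon,(T+1)\delta)$ privacy, and the stated high-probability accuracy.

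\textbf{Running time.} In round $t$ the algorithm only touches levels $0,\dots,j(t)$, where $j(t)$ is the number of trailing zeros of $t$, and it makes a single call of $\mathcal{M}$, on a partial sum $\alpha_{j(t)}$ with at most $\min\{2^{j(t)},m\}$ edges (a partial sum of the stream can only contain edges of the final graph $\widetilde G_T$, which has $m$ of them, since weights only increase). Level $l$ is completed about $T/2^{l+1}$ times over $T$ rounds, each completion costs $O(2^l)$ time for the $\mathcal{M}$ call, and since $\mathcal{M}$ preserves sparsity the running sum $\widehat G_t=\sum_l \widehat\alpha_l$ can be maintained and re-released incrementally at the same order of cost. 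Summing, the total work is $\sum_{l=0}^{\lceil\log_2 T\rceil}O(2^l)\cdot \tfrac{T}{2^l}=O(T\log T)$, i.e.\ amortized $O(\log T)$ per round; this is the easy part.

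\textbf{Privacy (the main obstacle).} Fix neighboring streams differing in a single update on some edge $e$ by at most $1$. Every partial sum not containing that update has the same value in both streams, so the corresponding $\mathcal{M}$-call has an identical output distribution; the update lies in at most $k$ partial sums (one per level), and on each such pair the two inputs to $\mathcal{M}$ are edge-level neighboring graphs, hence its outputs are $(\epsilon_0,\delta_0)$-indistinguishable. Since the released stream $(\widehat G_1,\dots,\widehat G_T)$ is a post-processing of the $\mathcal{M}$-outputs, (adaptive) advanced composition (Lemma~\ref{l.adv_composition}) over these $\le k$ calls with slack parameter $\delta$ gives $(\epsilon',\,k\delta_0+\delta)$-privacy, where $\epsilon'=\sqrt{2k\log(1/\delta)}\,\epsilon_0+k\epsilon_0(e^{\epsilon_0}-1)=\Theta(\epsilon)$ in the standard regime $\epsilon\le\log(1/\delta)$; choosing the constant $c$ makes $\epsilon'=\epsilon$, and $k\delta_0+\delta\le(\lceil\log_2 T\rceil+2)\delta\le(T+1)\delta$. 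The delicate point is exactly this bookkeeping: one must pay only for the $O(\log T)$ affected calls (using that the rest are distributionally identical) and calibrate $\epsilon_0$ so the loss is $\epsilon$ rather than a $\mathrm{poly}\log$-blowup — and this very calibration is what injects the $\sqrt{\log T\log(1/\delta)}$ factor into the error.

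\textbf{Utility.} There are exactly $T$ calls of $\mathcal{M}$; by Theorem~\ref{t.ana_on_alg2} and a union bound, with probability $\ge 1-T\delta_0=1-T\delta$ every call simultaneously satisfies $Eval(\alpha_l,\widehat\alpha_l)=O(m\log(n/\delta)/\epsilon_0)$, using $|E(\alpha_l)|\le m$. For any $t$, since $\widetilde G_t=\sum_{l\in\mathrm{bin}(t)}\alpha_l$, $\widehat G_t=\sum_{l\in\mathrm{bin}(t)}\widehat\alpha_l$, and $Eval(\cdot,\cdot)=\max_{q\in[0,1]^N}|q^\top(\cdot-\cdot)|$ satisfies the triangle inequality in its argument, I get
\[
Eval(\widehat G_t,\widetilde G_t)\le \sum_{l\in\mathrm{bin}(t)}Eval(\alpha_l,\widehat\alpha_l)\le k\cdot O\!\left(\frac{m\log(n/\delta)}{\epsilon_0}\right)=O\!\left(\frac{m\log(n/\delta)\,\sqrt{\log^3(T)\log(1/\delta)}}{\epsilon}\right),
\]
substituting $k=O(\log T)$, $\epsilon_0=\Theta(\epsilon/\sqrt{\log T\log(1/\delta)})$, and $\log T\cdot\sqrt{\log T}=\sqrt{\log^3 T}$. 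Taking the maximum over $t\in[T]$ yields the claim.
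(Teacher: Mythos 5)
Your proposal is correct and follows essentially the same route as the paper: instantiate the binary mechanism with Algorithm~\ref{alg2} at internal budget $\epsilon_0=\Theta\bigl(\epsilon/\sqrt{\log T\log(1/\delta)}\bigr)$, $\delta_0=\delta$, argue privacy by advanced composition over only the $O(\log T)$ partial sums containing the differing update (the remaining calls being distributionally identical), bound the per-round error by sub-additivity/triangle inequality of $Eval$ over the $O(\log T)$ released partial sums together with a union bound over all $T$ calls, and charge the $O(T\log T)$ total work of the layered partial sums to get $O(\log T)$ amortized time. The bookkeeping details (sparsity of partial sums bounded by $m$, $k\delta+\delta\le(T+1)\delta$, the $\sqrt{\log^3 T\log(1/\delta)}$ factor from the calibration of $\epsilon_0$) all match the paper's argument.
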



We first give the privacy guarantee of Algorithm \ref{alg.continuous_observation}, which is a direct application of the advanced composition lemma (see Lemma \ref{l.adv_composition}).
\begin{theorem}\label{t.continuous_pri}
  Given that $\mathcal{M}:\mathbb{R}_{\geq 0}^{n\choose 2} \rightarrow \mathbb{R}_{\geq 0}^{n\choose 2}$ is a $(\epsilon,\delta)$ edge level differentially private algorithm, then Algorithm \ref{alg.continuous_observation} (with $\mathcal{M}$ as input) preserves $(\epsilon',\delta'+T\delta)$ edge level differential privacy for any $\delta'\in (0,1)$, where $$\epsilon' = O\left(\epsilon\cdot \sqrt{\log(T)\log (1/\delta')}\right).$$
\end{theorem}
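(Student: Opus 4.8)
The plan is to run the textbook privacy analysis of the binary (tree) mechanism in the edge-level graph setting. The first step is to record the structural fact that the skeleton of Algorithm~\ref{alg.continuous_observation} is data-independent: which updates are aggregated into which partial sum $\alpha_j$, and at which round the noisy version $\widehat\alpha_j = \mathcal M(\alpha_j)$ is produced, is determined entirely by $T$ and the binary expansions of the round indices, not by the edge weights. Over the $T$ rounds, $\mathcal M$ is invoked exactly $T$ times---once per round, on the freshly formed p-sum---and I would check that the reset-and-accumulate rule ``$\alpha_j \leftarrow \sum_{l<j}\alpha_l + G_t$, then zero out $\alpha_0,\dots,\alpha_{j-1}$'' implements the standard dyadic decomposition of the prefix $[1,t]$. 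The consequence I want is that each individual update $G_\tau$ is contained in at most $k := \lceil\log_2 T\rceil + 1$ of the p-sums ever passed to $\mathcal M$, one per level of the binary tree that covers position $\tau$.

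Next I would fix a pair of neighboring streams $S\sim S'$ that differ in a single update, say at round $\tau$ on edge $e$, by at most $1$ in weight. Every p-sum that does not involve round $\tau$ is literally the same graph under $S$ and $S'$, so the corresponding call to $\mathcal M$ produces an identically distributed output and contributes nothing to the privacy loss. For each of the at most $k$ p-sums $\alpha_j$ that do involve round $\tau$, the two versions differ on the single edge $e$ by at most $1$, i.e.\ they are edge-level neighbors, so the $(\epsilon,\delta)$ edge-level differential privacy of $\mathcal M$ says that releasing $\widehat\alpha_j$ is $(\epsilon,\delta)$-DP with respect to this change. Since these releases can depend adaptively on earlier ones, I would then invoke the advanced composition lemma (Lemma~\ref{l.adv_composition}): for any $\delta'\in(0,1)$, the joint release of all $\widehat\alpha_j$ is $(\epsilon'', k\delta+\delta')$-differentially private with $\epsilon'' = \sqrt{2k\log(1/\delta')}\,\epsilon + k\epsilon(e^\epsilon-1)$.

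To finish, I would use $k=\lceil\log_2 T\rceil+1\le T$ to get $k\delta+\delta'\le \delta'+T\delta$, and note that $\epsilon'' = O(\epsilon\sqrt{\log(T)\log(1/\delta')})$ because the term $k\epsilon(e^\epsilon-1)=O(\epsilon^2\log T)$ is lower order for bounded $\epsilon$. Each released graph $\widehat G_t = \sum_{l\ge 0}\widehat\alpha_l$ is a fixed deterministic function of the $\widehat\alpha_j$, so the whole output stream $\{\widehat G_t\}_{t\in[T]}$ is obtained from $\{\widehat\alpha_j\}$ by post-processing; the post-processing lemma then upgrades the above to the stated $(\epsilon'',\delta'+T\delta)$ edge-level guarantee for the entire stream.

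The step I expect to be the main obstacle---or at least the only part that is not a one-line invocation of a named lemma---is the structural claim in the first paragraph: verifying by induction on $t$ that the bookkeeping of Algorithm~\ref{alg.continuous_observation} really does realize the dyadic segment decomposition, so that exactly $T$ calls to $\mathcal M$ are made, each update is re-aggregated at most once per bit of $T$, and the set of p-sums affected by a change at round $\tau$ is data-independent and of size $\le k$. Once that is pinned down, the privacy bound is a direct composition-plus-post-processing argument. The amortized $O(\log T)$ update time and the utility bound of the companion theorem are orthogonal: they follow by instantiating $\mathcal M$ with Algorithm~\ref{alg2} or \ref{alg1_simple} and summing the per-level edge counts and run-times, as sketched in Section~\ref{sec:contious_observation}.
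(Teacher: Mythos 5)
Your proposal is correct and follows essentially the same route as the paper's (sketched) proof: the differing update enters at most $O(\log T)$ of the p-sums passed to $\mathcal{M}$, the remaining p-sums are identically distributed under the two streams, advanced composition (Lemma~\ref{l.adv_composition}) is applied over the affected releases, and the released graphs $\widehat{G}_t$ are post-processing of the noisy p-sums. Your additional verification of the dyadic bookkeeping, and the bound $k\delta+\delta'\le T\delta+\delta'$, only make explicit what the paper leaves implicit, with the same (shared) caveat that absorbing the $k\epsilon(e^\epsilon-1)$ term into $O(\epsilon\sqrt{\log T\log(1/\delta')})$ presumes the usual small-$\epsilon$ regime.
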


\begin{proof}
  (Sketch.) Let $S = \{(e_i^S,w_i^S)\}_{i\in [T]}$ and $S' = \{(e_i^{S'},w_i^{S'})\}_{i\in [T]}$ be a pair of neighboring streams of length $T$ that only differs in one item by $1$. Let $t$ be the time such that $w_t^S \neq w_t^{S'}$ and $|w_t^S - w_t^{S'}|\leq 1$. Note that in Algorithm \ref{alg.continuous_observation}, the $t$-th edge can occur in at most $O(\log T)$ different partial sums. Since the distribution of the released partial sums formed by graphs that don't contain the $t$-th edge is identical between two neighboring inputs, then every output in time $i\in [T]$ can be considered as the outcome of post-processing over the released partial sums formed by graphs that contain the $t$-th edge. Therefore, Theorem \ref{t.continuous_pri} can be proved by applying the advanced composition lemma.
\end{proof}

\noindent For the accuracy part, we show that if the utility function $s$ is ``sub-additive'', then it's possible to have a good utility bound.

\begin{definition}
  [Sub-additive measure] Let $s:\mathbb{R}_{\geq 0}^{{n\choose 2}}\times \mathbb{R}_{\geq 0}^{{n\choose 2}} \rightarrow \mathbb{R}_{\geq 0}^{{n\choose 2}}$ be a symmetric measure. We say that $s$ is sub-additive if for any $\ell\in \mathbb{N}_{\geq 0}$ and $G_1,\cdots,G_\ell,G_{\ell+1}, G_{2\ell}\in \mathbb{R}_{\geq 0}^{{n\choose 2}}$, it holds that 
  $$s\left(\sum_{i=1}^\ell G_i, \sum_{j = \ell+1}^{2l}G_j\right) \leq \sum_{i=1}^{l}s(G_i,G_{i+l}).$$
\end{definition}
\noindent For any $t\in [T]$, let $\widetilde{G}_t = \sum_{i\in [t]}G_i$. For any $G\in \mathbb{R}_{\geq 0}^N$, we also define 
$$[G] = \left\{G':~\forall i\in [N], G'[i]\leq G[i] \right\}.$$ We now discuss the utility of Algorithm \ref{alg.continuous_observation}. Note that in each round, the output is simply the sum of synthetic graphs given by the private algorithm $\mathcal{M}$. For $t\in [T]$ and $0\leq i \leq \lceil \log T\rceil$, let $\alpha_i^{(t)}$ be the graph represented by the partial sum $\alpha_i$ in Algorithm \ref{alg.continuous_observation} at round $t$ (in a same way, we define $\widehat{\alpha}_i^{(t)}$ be the perturbed partial sum). Suppose the measurement $s$ that we are interested in is sub-additive. Since there are at most $\lceil \log T\rceil$ such partial sums (with corresponding synthetic graphs), then the error on any round $t\in [T]$ is 
\begin{align*}
    s\left(\widehat{G}_t, \widetilde{G}_t\right) &= s\left( \sum_{i = 0}^{\lceil \log T\rceil } \widehat{\alpha}_i^{(t)}, \sum_{i = 0}^{\lceil \log T\rceil } \alpha_i^{(t)}\right) \\
    &\leq \log T \left(\max_{G\in [\widetilde{G}_T]} \texttt{err}(G)\right).
\end{align*}
Thus, we directly give the following theorem on utility:
\begin{theorem}\label{t.continuous_utility}
  Let $s:\mathbb{R}_{\geq 0}^{{n\choose 2}}\times \mathbb{R}_{\geq 0}^{{n\choose 2}} \rightarrow \mathbb{R}$ be any sub-additive measure. If $\mathcal{M}:\mathbb{R}_{\geq 0}^{n\choose 2} \rightarrow \mathbb{R}_{\geq 0}^{n\choose 2}$ is a $(\epsilon,\delta)$ edge level differentially private algorithm and for any $G$, with probability at least $1-\beta$, it outputs $\widehat{G}$ such that $s(G,\widehat{G}) \leq \texttt{err}(G)$, then if Algorithm \ref{alg.continuous_observation} outputs $\widehat{G}_1,\cdots, \widehat{G}_T$, it holds with probability at least $1-T\beta$ such that 
  $$\max_{t\in [T]} s\left(\widehat{G}_t, \widetilde{G}_t\right) \leq \log T \left(\max_{G\in [\widetilde{G}_T]} \texttt{err}(G)\right).$$
\end{theorem}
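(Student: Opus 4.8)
The plan is to make rigorous the inequality chain sketched in the paragraph preceding the statement, being careful about two points that the informal argument glosses over: that every graph ever fed to $\mathcal{M}$ lies in $[\widetilde{G}_T]$, and that a single union bound over the $T$ invocations of $\mathcal{M}$ already suffices for the claimed failure probability.

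First I would record the combinatorial structure of Algorithm~\ref{alg.continuous_observation}. Writing $t=\sum_{l\ge 0}k_l(t)2^l$, the standard binary-mechanism invariant is that at round $t$ the slot $\alpha_i$ is non-empty exactly when $k_i(t)=1$, in which case it holds the sum of the stream updates over the unique dyadic block of length $2^i$ lying directly below $t$; consequently, as vectors in $\mathbb{R}^N$, $\widetilde{G}_t=\sum_{i:\,k_i(t)=1}\alpha_i^{(t)}$, and correspondingly $\widehat{G}_t=\sum_{i:\,k_i(t)=1}\widehat{\alpha}_i^{(t)}$ where $\widehat{\alpha}_i^{(t)}=\mathcal{M}(\alpha_i^{(t)})$ was produced when that block was last sealed (empty slots are the zero vector and contribute nothing to either sum). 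Since all stream updates are non-negative, every such block sum is coordinate-wise at most $\widetilde{G}_t$, and $\widetilde{G}_t$ is coordinate-wise at most $\widetilde{G}_T$; hence every argument ever handed to $\mathcal{M}$ lies in $[\widetilde{G}_T]$.

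Next I would set up the union bound and invoke sub-additivity. Over the whole run $\mathcal{M}$ is invoked exactly $T$ times, once per round, and by hypothesis the $r$-th invocation, on some $G_{(r)}\in[\widetilde{G}_T]$, fails to satisfy $s(G_{(r)},\mathcal{M}(G_{(r)}))\le\texttt{err}(G_{(r)})$ with probability at most $\beta$; let $\mathcal{E}$ be the event that all $T$ invocations succeed, so $Pr[\mathcal{E}]\ge 1-T\beta$. Condition on $\mathcal{E}$, fix a round $t$, and let $\ell$ be the number of $1$-bits of $t$, which is at most $\lceil\log_2 T\rceil+1$. Applying the sub-additivity of $s$ to the $\ell$ graphs $\{\widehat{\alpha}_i^{(t)}\}_{i:\,k_i(t)=1}$ versus $\{\alpha_i^{(t)}\}_{i:\,k_i(t)=1}$, paired by the index $i$ and summing to $\widehat{G}_t$ and $\widetilde{G}_t$ respectively, gives $s(\widehat{G}_t,\widetilde{G}_t)\le\sum_{i:\,k_i(t)=1}s(\widehat{\alpha}_i^{(t)},\alpha_i^{(t)})$. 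On $\mathcal{E}$ each summand is at most $\texttt{err}(\alpha_i^{(t)})\le\max_{G\in[\widetilde{G}_T]}\texttt{err}(G)$ by the domination established above, so $s(\widehat{G}_t,\widetilde{G}_t)\le(\lceil\log_2 T\rceil+1)\max_{G\in[\widetilde{G}_T]}\texttt{err}(G)$; taking the maximum over $t\in[T]$, still under $\mathcal{E}$, gives the stated bound (with $\log T$ standing for $\lceil\log_2 T\rceil+1$ as in the paper's convention).

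The individual steps are short; the one genuine subtlety — and hence the step I would take most care over — is the domination $\alpha_i^{(t)}\in[\widetilde{G}_T]$, since that is exactly what converts the per-call bound $\texttt{err}(\alpha_i^{(t)})$ into the uniform quantity $\max_{G\in[\widetilde{G}_T]}\texttt{err}(G)$; it uses non-negativity of the updates (so partial sums are monotone) together with the fact that Algorithm~\ref{alg.continuous_observation} only ever feeds $\mathcal{M}$ contiguous block sums of the input stream. A secondary point I would make explicit is that the union bound is over the $T$ calls to $\mathcal{M}$, not over $T\cdot\log T$ events, which is what keeps the failure probability at $T\beta$ rather than $T\beta\log T$.
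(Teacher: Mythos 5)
Your proposal is correct and follows essentially the same route as the paper, whose own justification is just the inequality chain in the paragraph preceding the theorem: decompose $\widehat{G}_t$ and $\widetilde{G}_t$ into the at most $O(\log T)$ matching partial sums, apply sub-additivity termwise, bound each term by $\max_{G\in[\widetilde{G}_T]}\texttt{err}(G)$ since every block sum is dominated by $\widetilde{G}_T$, and union-bound over the $T$ calls to $\mathcal{M}$. You simply make explicit the binary-mechanism invariant, the domination step, and the union bound that the paper leaves implicit.
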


\begin{remark}
  In Theorem \ref{t.continuous_utility}, we see that we need to set $\beta = o(1/T)$ so that we can get a high probability bound.
\end{remark}

Now, we substitute the static algorithm $\mathcal{M}$ by Algorithm \ref{alg2} and let $s$ be $s(\widehat{G}_t, \widetilde{G}_t) = Eval(\widehat{G}_t, \widetilde{G}_t)$ for any $t$. By specifying parameters, we have the following corollary, by the fact that $Eval(\cdot,\cdot)$ is obviously sub-additive.

\begin{corollary}
Given $T\in \mathbb{N}$, $\epsilon>0$ and $0<\delta<1$, for any stream $S\in ([{n\choose 2}]\times \mathbb{R}_{\geq 0})^T$ with $m$ be the number of edges in the final state $\widetilde{G}_T$, there exists an $O(\log(T))$ amortized run-time $(\epsilon,(T+1)\delta)$ differentially private algorithm under continual observation such that with probability at least $1-\delta T$, it holds that:
 $$\max_{t\in [T]} Eval\left(\widehat{G}_t, \widetilde{G}_t\right) \leq O\left((\log T)^{1.5} \cdot \left(\frac{m\log (n/\delta)\sqrt{\log(1/\delta)}}{\epsilon}\right)\right).$$
\end{corollary}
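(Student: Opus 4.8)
The plan is to instantiate the binary mechanism of Algorithm~\ref{alg.continuous_observation} with the static mechanism $\mathcal{M}$ taken to be the truncated filtering algorithm (Algorithm~\ref{alg2}), run internally with a down-scaled budget, and then to assemble the three modular guarantees already available: the advanced-composition privacy bound of Theorem~\ref{t.continuous_pri}, the sub-additivity utility bound of Theorem~\ref{t.continuous_utility}, and the amortized run-time accounting already discussed in Section~\ref{sec:contious_observation}. Concretely I would run $\mathcal{M}$ with parameters $(\epsilon_0,\delta)$, where $\epsilon_0 = \Theta\!\big(\epsilon/\sqrt{\log T\,\log(1/\delta)}\big)$, and take the composition slack $\delta'$ appearing in Theorem~\ref{t.continuous_pri} to be $\delta$. \textbf{Privacy.} By Theorem~\ref{t.ana_on_alg2}, Algorithm~\ref{alg2} with parameters $(\epsilon_0,\delta)$ is $(\epsilon_0,\delta)$-edge-level differentially private, so feeding this $\mathcal{M}$ into Theorem~\ref{t.continuous_pri} with $\delta'=\delta$ shows that Algorithm~\ref{alg.continuous_observation} is $(\epsilon',(T+1)\delta)$-differentially private with $\epsilon' = O\!\big(\epsilon_0\sqrt{\log T\,\log(1/\delta)}\big)$; the choice of $\epsilon_0$ makes $\epsilon'\le\epsilon$, which is exactly the claimed privacy guarantee.

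\textbf{Utility.} The structural observation is that every partial sum $\alpha_i^{(t)}$ that the algorithm ever forms is the sum of a contiguous block of stream updates, so because all updates have non-negative weights it is a coordinate-wise lower bound of the final graph $\widetilde{G}_T$; hence $\alpha_i^{(t)}\in[\widetilde{G}_T]$ and its support contains at most $m=|E(\widetilde{G}_T)|$ edges. Consequently the per-call guarantee of Algorithm~\ref{alg2} from Theorem~\ref{t.ana_on_alg2} specializes to the uniform bound $\texttt{err}(G)=O\!\big(m\log(n/\delta)/\epsilon_0\big)$ over all $G\in[\widetilde{G}_T]$, each call failing with probability at most $\delta$. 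Since $Eval(\cdot,\cdot)$ is a symmetric sub-additive measure, Theorem~\ref{t.continuous_utility} with $\beta=\delta$ then gives, with probability at least $1-T\delta$,
\[
\max_{t\in[T]}Eval(\widehat{G}_t,\widetilde{G}_t)\;\le\;\log T\cdot O\!\left(\frac{m\log(n/\delta)}{\epsilon_0}\right)\;=\;O\!\left(\frac{m\log(n/\delta)\sqrt{\log^3 T\,\log(1/\delta)}}{\epsilon}\right),
\]
which is the stated accuracy after substituting the value of $\epsilon_0$.

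\textbf{Run-time.} At round $t$ the algorithm rebuilds only the single partial sum at level $j=\min\{l:k_l(t)\ne 0\}$ and calls $\mathcal{M}$ once on it; that partial sum aggregates $2^j$ consecutive updates and so has $O(2^j)$ edges, and Algorithm~\ref{alg2} runs in time linear in its input size. Since level $j$ is rebuilt at most $T/2^{j+1}$ times over the whole stream, the total work is $\sum_{j=0}^{\lceil\log_2 T\rceil}(T/2^{j+1})\cdot O(2^j)=O(T\log T)$, i.e.\ $O(\log T)$ amortized per round; maintaining the released graph $\widehat{G}_t=\sum_l\widehat{\alpha}_l$ incrementally (removing the cleared layers and adding $\widehat{\alpha}_j$) fits inside the same bound.

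\textbf{Main obstacle.} The only delicate point is bookkeeping the single parameter $\delta$ so that it simultaneously absorbs (i) the base mechanism's privacy slack, (ii) the advanced-composition slack $\delta'$, and (iii) the per-call utility failure probability, and then tracking how the re-scaling of $\epsilon_0$ propagates into the error: one factor of $\sqrt{\log T}$ enters through advanced composition and an additional factor of $\log T$ enters through sub-additivity across the $O(\log T)$ layers, and together these produce the $\sqrt{\log^3 T}$ in the final bound. Everything else is a direct composition of the modular lemmas already established.
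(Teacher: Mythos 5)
Your proposal is correct and follows essentially the same route as the paper: instantiate the binary mechanism with Algorithm~\ref{alg2} at budget $\epsilon_0=\Theta\bigl(\epsilon/\sqrt{\log T\log(1/\delta)}\bigr)$, $\delta_0=\delta$, invoke Theorem~\ref{t.continuous_pri} for privacy and Theorem~\ref{t.continuous_utility} (with the per-call error bound from Theorem~\ref{t.ana_on_alg2} and a union bound over the $T$ calls) for utility, and use the same layer-wise $\sum_j (T/2^j)\cdot O(2^j)=O(T\log T)$ accounting for the amortized run-time. Your explicit justification that each partial sum lies in $[\widetilde{G}_T]$ (by non-negativity of the updates) is a detail the paper leaves implicit, but otherwise the arguments coincide.
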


\begin{proof}
In each resample procedure, we call the static algorithm $\mathcal{M}$ (Algorithm \ref{alg2}) with parameter
$$\epsilon_0 = \frac{\epsilon}{\sqrt{\log T \log(1/\delta)}} \text{  and  } \delta_0 = \delta.$$
By Theorem \ref{t.continuous_pri}, we see that Algorithm \ref{alg.continuous_observation} preserves $(\epsilon, (T+1)\delta)$-differential privacy under a stream of length $T$. By the utility guarantee of Algorithm \ref{alg2} (Theorem \ref{t.ana_on_alg2}), we have that 

\begin{equation*}
  \begin{aligned}
    \max_{t\in [T]} Eval\left(\widehat{G}_t, \widetilde{G}_t\right) &\leq \log T \left(\max_{G\in [\widetilde{G}_T]} \texttt{err}(G)\right) \\
    & = \log T \cdot \left(\frac{m \log (n/\delta_0)}{\epsilon_0}\right)\\
    & = O\left((\log T)^{1.5} \cdot \left(\frac{m\log (n/\delta)\sqrt{\log(1/\delta)}}{\epsilon}\right)\right)
  \end{aligned}
\end{equation*}
holds with probability at least $1-T \delta$ due to the union bound (recall that Algorithms \ref{alg2} succeeds with probability at least $1-\delta$).

For the time complexity, let's assume $T = 2^a$ for some $a\in \mathbb{N}$ for simplicity. In this case, we have to compute $T = 2^a$ synthetic graphs, and there are $a$ layers in the binary mechanism (Algorithm \ref{alg.continuous_observation}). In the $i$-th layer (from bottom to top), each partial sum contains at most $2^i$ edges, but there are $T/2^i$ nodes. By the time guarantee of Algorithm \ref{alg2}, for an $m$ edges graph, the algorithm needs $O(m)$ time even if $m$ is not publicly known. Thus, to compute all synthetic graphs corresponding to the nodes in all layers, the time needed is 
$$\sum_{i=0}^{a} O(2^i) \cdot {T \over 2^i} = O(T\log_2 T).$$
Since there are $T$ rounds in total, then the average run-time for each round is $O(\log T)$. Note that the case where $\log_2 T\notin \mathbb{N}$ can be reduced to $T' = 2^{\lceil \log_2 T\rceil} \leq 2T$, which completes the proof.
\end{proof}

\section{A detailed comparison with other approaches}\label{s.compare}

In this section, we briefly summarize previous techniques on differentially private cut approximation, and make a detailed comparison between the utility (in addition to the resource required) from those techniques and ours (see Table \ref{t1}). In Table \ref{t1}, $W = {\|G\|_1 \over \|G\|_0}$ is the average weight.

\begin{table}[!h]
  \centering
  \caption{The comparison of existing results (for constant $\epsilon$).}\label{t1}
  \begin{tabular}{|c|c|c|c|c|}
      
      \hline
      \textbf{Method} &  \makecell[c]{\textbf{Additive error for all} \\$(S,V\backslash S)$ \textbf{cuts}} &  \makecell[c]{\textbf{Efficient?}} & \makecell[c]{\textbf{Purely }\\\textbf{additive?}} & \makecell[c]{\textbf{Preserve }\\\textbf{sparsity?}}\\
      \hline 
      Dwork et al. \cite{dwork2006calibrating} & ${O}\left(2^{n/2}\right)$ & No & Yes &  No \\
      \hline
      
      Hardt and Rothblum~\cite{hardt2010multiplicative} &${{O}}\left(n\sqrt{|E|W} \cdot \texttt{poly}(\log n )\right)$ & No &Yes& No\\
      \hline 

      Gupta et al. \cite{gupta2012iterative} & 
      ${{O}}\left(\sqrt{n|E|W}\cdot \texttt{poly}(\log n )\right)$ & No & Yes&  No \\
      \hline
      
      Blocki et al. \cite{blocki2012johnson}& ${{O}}\left(n^{1.5} \cdot \texttt{poly}(\log n )\right)$ & Yes & No &  No \\
      \hline
      
      McSherry and Talwar \cite{mcsherry2007mechanism} & ${{O}}(n\cdot \texttt{poly}(\log n ))$ & No & No & Yes \\
      \hline 
      
      Warner \cite{warner1965randomized} & ${{O}}(n^{1.5}\cdot \texttt{poly}(\log n ))$ & Yes & Yes & No  \\
      \hline
      
      Upadhyay et al. \cite{upadhyay2021differentially} & ${{O}}(n^{1.5}\cdot \texttt{poly}(\log n ))$ & Yes & Yes & No \\
      \hline 
      
      Eliavs et al. \cite{eliavs2020differentially} & ${O}(\sqrt{n|E|W}\log ^2n)$ &Yes &Yes&  No \\
      \hline 
     
     Liu et al. \cite{liu2023optimal} & ${O}(\sqrt{n|E|}\log ^3n)$ &Yes &Yes&  No \\
      \hline 
      (Ours) & \textcolor{red}{${{{O}}}(|E|\log n)$} &  \textcolor{red}{Yes} &\textcolor{red}{Yes}& \textcolor{red}{Yes}\\
      \hline 

  \end{tabular}
\end{table}

\subsection{Online algorithms}
Much as our task is to output a synthetic graph that preserves cut approximation, a group of algorithms that we are also interested in answering all cut queries in the interactive setting, in which the algorithm works like a stateful API that handles each cut query. Compared to the offline setting (i.e., output a synthetic graph), the interactive setting is usually considered less demanding (see, \cite{gupta2012iterative,ganev2022robin}). A naive approach in such an online setting is to add an independent Laplace noise on the answer to each cut query. Recall that under the edge level differential privacy, the sensitivity of cut queries is $1$. Therefore, by the advanced composition lemma of differential privacy in Dwork et al. \cite{dwork2010boosting}, to answer $k$ cut queries, we need to add a noise $Z\sim \texttt{Lap}(\sqrt{k}/\epsilon)$ to preserve $(\epsilon,\delta)$-differential privacy, where $\delta$ is negligible. Since there are at most ${O}(2^n)$ different $(S,V\backslash {S})$-cut queries, then the Laplace mechanism has noise proportional to $2^{n/2}/\epsilon$.

The private multiplicative weights method is an important improvement in the online analysis of private datasets (see Hardt and Rothblum \cite{hardt2010multiplicative}). In  Gupta et al.~\cite{gupta2012iterative}, the authors use a multiplicative weights algorithm (as a component of the iterative database construction approach) to answer linear queries normalized in the range $[0,1]$ as we have defined before. Given parameter $\epsilon$, this approach gives an ${O}(\sqrt{n|E|/\epsilon})$ bound on the purely additive error on answering $(S,V\backslash {S})$-cut queries if the input graph is unweighted. However, it's easy to verify that their algorithm can be easily extended to an ${O}(\sqrt{n|E|W/\epsilon})$ upper bound for weight graphs whose average weight is $W$. This is because the error bound relies on the $\ell_1$ norm the vector representation of the input graph, namely the sum of weights.

We note that the multiplicative weights approach answers each cut query efficiently, but it cannot reconstruct a synthetic graph efficiently since there are exponentially many $(S,T)$-cut queries. In our algorithm, we construct a new graph in almost linear time. Note that our algorithms can also answer linear queries with a purely additive error, which implies that our algorithms can approximate both $(S,V\backslash {S})$-cut queries and $(S,T)$-cut queries with a purely additive error like the multiplicative weights approach. Moreover, for sparse weighted graphs where $|E| = {{O}}(n)$, our algorithm outperforms the online algorithms. However, the utility bound is not directly comparable for dense graphs where $|E| = \omega(n)$.

\subsection{Multiplicative Gaussian}

In Blocki et al. \cite{blocki2012johnson}, the authors use
the Johnson-Lindenstrauss transformation on the square root of the Laplacian matrix of graphs to preserve differential privacy. Formally speaking, fix an integer $r$, they pick a random Gaussian matrix $M\in \mathbb{R}^{r\times {n\choose 2}}$ whose entries are independently drawn from $\mathcal{N}(0,1)$, and output a private sketch $\widehat{L}_G = \frac{1}{r} E_G^\top M^\top M E_G$, where $E_G\in \mathbb{R}^{{n\choose 2}\times n}$ is the edge adjacency matrix of the input graph. In the worst case, the multiplicative Gaussian method gives $\widetilde{{O}}(n^{1.5})$ additive error, with a constant multiplicative error (caused by the Johnson-Lindenstrauss transformation).

Compared to our algorithm, the private sketch $\widehat{L}_G$ produced by the multiplicative Gaussian method isn't necessarily a Laplacian matrix of any graph. Moreover, when $|E| = o(n^{1.5})$, the multiplicative Gaussian method pays more additive error.

\subsection{Additive Gaussian/random flipping}
Instead of multiplying a random Gaussian matrix, a much more direct method is to overlay a random Gaussian graph on the original graph $G$. That is, adding a Gaussian noise drawn from $\mathcal{N}(0,{O}(\log(1/\delta)/\epsilon^2))$ on each pair of vertices independently, and releasing the new graph as $\widehat{G}$. By the Gaussian mechanism \citep{dwork2014algorithmic}, we see that it preserves $(\epsilon, \delta)$ differential privacy. For the utility part, since each cut contains at most ${O}(n^2)$ Gaussian noise, by the property of sub-Gaussian, one can verify that for each cut $S$, $|\Phi_G(S) - \Phi_{\widehat{G}}(S)|\leq \widetilde{{O}}(n/\epsilon)$ with high probability. Using the union bound over all $2^n$ different $(S,V\backslash {S})$-cuts, the total error becomes $\max_{S\subset V} |\Phi_G(S) - \Phi_{\widehat{G}}(S)|\leq \widetilde{{O}}(n^{1.5}/\epsilon)$ with high probability. Again, for sparse graphs when $|E| = o(n^{1.5})$, the additive Gaussian mechanism performs worse than our algorithm.

Another noise-adding strategy is to use the randomized response. In particular, for each pair of vertices $\{u,v\}$ of an unweighted graph, we choose its weight to be $1$ or $-1$ independently, with probability $Pr[w_{\{u,v\}}' = 1] = \frac{1+\epsilon w_{\{u,v\}} }{2}$ and $Pr[w_{\{u,v\}}' = -1] = \frac{1-\epsilon w_{\{u,v\}} }{2}$. It is easy to verify that randomized response gives an $\widetilde{{O}}(n^{1.5}/\epsilon)$ purely additive error, which is the same as the additive Gaussian method (see Blocki et al. \cite{blocki2012johnson} for the calculation). The main flaw of the randomized response mechanism is that it only works on unweighted graphs.

\subsection{Exponential mechanism}

A well-known fact in graph theory is that each weighted graph has a cut-sparsifier that approximates the sizes of all $(S,V\backslash {S})$-cuts~\citep{benczur1996approximating}. A graph $G'$ is an $\alpha$ cut-sparsifier for $G$ if it holds that for any $S\subseteq V$, $(1-\alpha)\Phi_G(S)\leq \Phi_{G'}(S) \leq (1+\alpha) \Phi_{G}(S)$. There have been a lot of works that have shown that each weighted graph with non-negative weights has an $\alpha$ edge-sparsifier with at most ${O}(n/\alpha^2)$ edges~\citep{allen2015spectral, batson2012twice, lee2015constructing}. If the maximum weight is polynomial in $n$, we can always describe any such sparse edge-sparsifier in at most $\widetilde{O}(n)$ bits, and thus there are at most $2^{\widetilde{O}(n)}$ such sparsifiers. Therefore, we can consider all edge sparsifiers to be candidates, and as suggested in Blocki et al. \citep{blocki2012johnson}, for any input graph $G$ and sparsifier $H$, we use $$q(G,H) = \max_S \{\min_{\eta: |\eta-1|\leq \alpha} |\Phi_H(S) - \Phi_G(S)| \}$$ as the scoring function. Clearly, such a scoring function has sensitivity $1$ under edge level differential privacy; therefore, is $(\epsilon,0)$-differentially private. Further, using Lemma \ref{l.exp_utility}, it is easy to verify that this instantiation of the exponential mechanism results in an additive error of $\widetilde{{O}}(n)$ with a constant multiplicative error, the latter due to sparsification. However, it is not clear how to sample from this instantiation of the exponential mechanism; while the score function is convex, the space is not convex, and therefore, algorithms such as the ones that give mixing time of sampling from log-concave distribution~\citep{bassily2014private} do not apply. Furthermore, in the exponential mechanism, not only do we need to sample in a space of exponential size, but computing the scoring function for a specified $H$ is also hard. 

\subsection{Mirror descent}
The mirror descent methods optimize a convex function $f(x)$ over a convex domain. Since Liu et al. \cite{liu2023optimal} uses the mirror-descent based algorithm of Eli{\'a}{\v{s}} et al. \cite{eliavs2020differentially}, in what follows, we just discuss Eli{\'a}{\v{s}} et al. \cite{eliavs2020differentially}. In Eli{\'a}{\v{s}} et al. \cite{eliavs2020differentially}, for any input weighted graph $G$ and candidate $G'$, the authors use the cut norm~\cite{alon2004approximating}: 
$$d_{cut}(G,G') = \max\{|x^\top (A-A')y|; x,y\in \{0,1\}^n\}$$
as the loss function over the space of a semi-definite cone of Laplacian matrices. They then use private mirror descent to optimize $G'$. If we use the entropy mirror map, i.e., $\Phi(x) = \sum_{e\in {V \choose 2}} x_e\log x_e$, it can be shown that private mirror descent gives ${O}(\sqrt{n|E|W/\epsilon})$ purely additive error in expectation, which is same as the result of multiplicative weights approach (using entropy regularizer). However, the mirror descent approach outputs a synthetic graph in polynomial time, while the multiplicative weights method doesn't. Thus, this algorithm can be seen as an efficient implementation of {\em private multiplicative weight} (PMW) algorithm.

Compared to the mirror descent approach, we do not need any dependency on the maximum weight $W = \| G\|_\infty$. However, it's still an interesting question whether the square root dependency on $W$ in mirror descent approaches is a fundamental barrier. Recall the scale invariance property (Remark \ref{rem:scale}), the dependency on $W$ in  Eli{\'a}{\v{s}} et al. \cite{eliavs2020differentially} seems necessary due to the trade-off between $\epsilon$ and edge weights. Also, in Lemma 4.3 of  Eli{\'a}{\v{s}} et al. \cite{eliavs2020differentially}, the authors show that the error between the cut norm and its convex approximation has relevance to the sum of weights, and hence $W$ also comes into play.
Another important property of our algorithm is that it preserves the sparsity. That is, if we are promised that the input graph is sparse, then the output graph is also sparse. While in Eli{\'a}{\v{s}} et al. \cite{eliavs2020differentially}, the mirror descent method always outputs a weighted complete graph.

\begin{remark}
    [Discussion regarding high probability bound.] 
\label{rem:highprob}
The bound in Eli{\'a}{\v{s}} et al. \cite{eliavs2020differentially} is in expectation and they refer to Nemirovski et al. \cite{nemirovski2009robust} to get a high probability bound. Nemirovski et al. \cite{nemirovski2009robust} presents two general ways to get a high probability bound from the expectation. The stronger condition (stated as eq. (2.50) in Nemirovski et al. \cite{nemirovski2009robust} and stated below) that allows a bound that gives $1-\beta$ while incurring an extra $\log(1/\beta)$ factor is achieved if 
\[
\mathbb E\left[\exp\left({\|g\|_\infty^2 \over M^2 }\right)\right] = \int n^2 \cdot exp(-t) \cdot exp(t^2/M^2)
\]
is bounded. Here $\|g\|_\infty$ is the $\ell_\infty$ norm of gradients in the mirror descent. However, the algorithm in Eli{\'a}{\v{s}} et al. \cite{eliavs2020differentially} does not satisfy this stronger condition. In fact, the expectation is unbounded if $M$ is poly-logarithmic in $n$. Eli{\'a}{\v{s}} et al. \cite{eliavs2020differentially} satisfies the weaker condition (stated as eq. (2.40) in Nemirovski et al. \cite{nemirovski2009robust}). In fact, Lemma 4.5 in Eli{\'a}{\v{s}} et al. \cite{eliavs2020differentially} implies that $\mathbb E[\|g\|_\infty^2] = O(log^2 n).$ This condition though translates an expectation bound to $1-\beta$ probability bound with an extra ${1 \over \beta^2}$ factor in the error. In particular, if we want $\beta = o(1/\sqrt{n})$, then Eli{\'a}{\v{s}} et al. \cite{eliavs2020differentially} results in an error $O(\sqrt{mn^3/\epsilon} \log^2(n/\delta))$. This is worse than our bound, irrespective of whether the graph is weighted or unweighted or is sparse or dense. 
\end{remark}

\section{Some Real World Examples}
\label{app:usecases}
In this section, we enumerate some of the other real-world examples of graphs that are sparse and heavily weighted, and there are natural reasons to study cut queries. This list is by no means exhaustive, but it serves to show that, in practice, one almost always encounters sparse highly weighted very large graphs in terms of nodes, making $\widetilde O(n^7)$ time algorithm infeasible in practice.

Three of the main applications of graph analysis are in understanding social network graphs, financial transactions, and internet traffic. In what follows, we give representative examples of each of these settings and supplement our claims that these examples result in sparse weighted graphs with publicly available information.

\paragraph{Social network graph} There are four major social network graphs in the current age: three of them are subsidiaries to Facebook: Instagram, WhatsApp, and messenger, and one to Apple: iMessage. For each of these, there is a natural way to construct graphs. 
\begin{itemize}
    \item WhatsApp is an undirected weighted graphs, where the nodes are users and there is an edge if two people interact using WhatsApp with the edge weight representing the number of messages exchanged between them. According to reports, in the year 2021, there were approximately $2$ billion users and about $36$ trillion total messages (or the sum of weights on the edges) were sent only in 2021.  
    \item Instagram is a directed graph, where a user corresponds to an account and there is a directed edge if a user interacts (in terms of likes on the post or comments) with the other user. In 2022, there were about $2.35$ billion (active or inactive) Instagram accounts and just on a single personal (Christiano Ronaldo), there are more than $90$ billion likes. 
    \item Facebook messenger is another undirected graph with about $3$ billion users in 2022 and a total of about $5$ trillion messages exchanged just in the year 2022. 

    \item Finally iMessage has currently 1.3 billion users and total 18.2 trillion messages were exchanged in 2021.  
\end{itemize}

In all these cases, it is easy to verify that the degree of any node (that is other users any user interacts with) is significantly less. Here, the natural notion of privacy is whether, at a given time, two users interact or not. This corresponds to edge-level differential privacy. Moreover, in social network, like Instagram or Messenger, unless an account is private, one can always infer who is connected to whom. Moreover, the companies holding these data already know the connection; only the messages are end-to-end encrypted. 

\paragraph{Financial graphs} We use the example of Chase Bank and transactions on its ATM. According to reports, there are approximately $18$ million accounts in Chase Bank and $16,000$ ATMs, while the total number of ATM transactions done in 2021 is more than $600$ million. To the best of our knowledge, Chase Bank has never published the total number of transactions between its accounts or transactions done through other methods, like Zelle or Venmo.  

\paragraph{Internet Activity Graph} Currently, there are about $4.3$ billion active IP addresses. However, just one search engine, Google, there are about $90,000$ web-search per second. The internet is a sparse graph because the number of connections between nodes (websites, servers, etc.) is much smaller than the total number of possible connections.

\section{An improved upper bound for fixed topologies}\label{app:known_topology}

In this section, we relax edge-level differential privacy to a slightly weaker notion and prove a better utility bound under this setting. 
In short, we assume the topology (i.e., the edge set) is already public. This setting aligns with practical scenarios where data analysts already know the connections, one of the use cases that has been elaborated in Appendix \ref{app:usecases}. Two graphs with the same edge set are neighboring if the edge weight $w_e$ differs by $1$ for exactly one $e\in E$. This assumption has also been well studied in the literature on private graph analysis (see \cite{bodwin2024discrepancy, chen2023differentially, fan2022distances, sealfon2016shortest, roth2021mycelium} for example). In this setting, we assume $|E| = \Omega(n)$ without losing generality. Otherwise, we can always remove the isolated vertices since the topology is known.

Given that we don't have to achieve privacy on topology, we just need to add independent Laplace noise on each edge in $E$: 
\begin{enumerate}
  \item For any $e\in E$, draw an independent Laplace noise $Z\sim \texttt{Lap}(1/\epsilon)$;
  \item Let $w_e \leftarrow w_e+Z.$
\end{enumerate}

The application of the Laplace mechanism yields the following theorem for the utility of private cut approximation:

\begin{theorem}\label{t.topology_is_known}
  Given $\epsilon>0$, there is a $(\epsilon,0)$-differentially private algorithm (under the assumption that the topology is public) such that for any input weighted graph $G = (V,E)$, with high probability, it outputs a $\widehat{G} = (V,\widehat{E})$ such that for any $(S,T)$-cut where $S,T\subset V$ and $S\cap T  = \varnothing$, $\widehat{G}$ satisfies 
  $$|\Phi_G(S,T) - \Phi_{\widehat{G}}(S,T)| \leq \min\left\{O\left(n{\log n}\cdot \sqrt{\min\{|S|,|T|\}}\right), O\left(\sqrt{n|E|\log^2 n}\right)\right\}.$$
  \end{theorem}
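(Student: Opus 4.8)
The plan is to observe that, once the topology $E$ is public, the only private quantity is the edge-weight vector $(w_e)_{e\in E}$, and to release it with the Laplace mechanism. \textbf{Privacy} is then immediate: two neighboring graphs $G\sim G'$ share the edge set and differ by at most $1$ in exactly one coordinate $w_e$, so $G\mapsto (w_e)_{e\in E}$ has $\ell_1$-sensitivity $1$, and adding i.i.d.\ $\lap(1/\epsilon)$ noise to each $w_e$, $e\in E$, is $(\epsilon,0)$-differentially private by \Cref{l.laplace}; releasing the resulting $\widehat G=(V,\widehat E)$ (or any post-processing, e.g.\ zeroing out negative entries) keeps $(\epsilon,0)$-DP.

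For \textbf{utility}, fix disjoint $S,T\subseteq V$ and let $E(S,T)=\{e=(u,v)\in E: u\in S,\ v\in T\}$ be the crossing edges \emph{present in $G$}. Since noise is added only to coordinates in $E$, we have the exact identity
$$\Phi_{\widehat G}(S,T)-\Phi_G(S,T)=\sum_{e\in E(S,T)}Z_e,\qquad Z_e\ \text{i.i.d.}\ \sim\ \lap(1/\epsilon),$$
so it suffices to control a sum of $k:=|E(S,T)|$ i.i.d.\ Laplace variables, uniformly over all disjoint pairs $(S,T)$. Two elementary counting bounds drive the two terms in the $\min$: always $k\le|E|$, and $k\le|S|\cdot|T|\le n\cdot\min\{|S|,|T|\}$ since $|S||T|=\max\{|S|,|T|\}\cdot\min\{|S|,|T|\}$.

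The sum of $k$ i.i.d.\ $\lap(1/\epsilon)$ variables is sub-exponential, $\Pr\big[|\sum_{i\le k}Z_i|>s\big]\le 2\exp(-c\min\{\epsilon^2 s^2/k,\ \epsilon s\})$ for a universal $c$, so with failure probability $\beta$ the deviation is $O\big(\epsilon^{-1}(\sqrt{k\ln(1/\beta)}+\ln(1/\beta))\big)$. I will take the union bound \emph{stratified by the sizes} $a=|S|$, $b=|T|$: there are at most $n^{a+b}$ disjoint pairs of those sizes, so assigning each such pair failure probability $n^{-(a+b)-\Theta(1)}$ makes the overall failure probability $n^{-\Theta(1)}$ while $\ln(1/\beta_{a,b})=O((a+b)\log n)$. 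Substituting $k\le n\min\{a,b\}$ and $\min\{a,b\}(a+b)\le 2n\min\{a,b\}$, and noting $(a+b)\le 2n$ and $\min\{a,b\}\ge 1$, the deviation is $O(n\log n\sqrt{\min\{|S|,|T|\}}/\epsilon)$; substituting $k\le|E|$ and using $|E|=\Omega(n)$ to absorb the linear term gives $O(\sqrt{n|E|\log^2 n}/\epsilon)$. Taking, for each cut, the better of the two bounds, and absorbing the constant $\epsilon$, yields \Cref{t.topology_is_known}.

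\textbf{Main obstacle.} The only genuine subtlety is the union bound over the exponentially many disjoint pairs $(S,T)$: a single-$\beta$ union bound with $\ln(1/\beta)=\Theta(n)$ already gives bounds of this shape, but the stratification by $(|S|,|T|)$ is what cleanly produces the $\sqrt{\min\{|S|,|T|\}}$ scaling — so that small cuts incur proportionally small error — without paying an extra $\sqrt n$. Everything else (the sensitivity computation and the sub-exponential tail of a sum of Laplace variables) is routine.
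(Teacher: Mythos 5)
Your proposal is correct, and it uses the same algorithm and the same two ingredients as the paper: the sensitivity-$1$ Laplace release of the public-topology weight vector for privacy, and a Bernstein-type tail bound for sums of i.i.d.\ $\texttt{Lap}(1/\epsilon)$ variables (the paper's Lemma~\ref{l.tail_of_laplace}) combined with a union bound over cuts for utility. The one place where you genuinely diverge is the bookkeeping of the union bound. The paper pads the crossing set $E(S,T)$ to a fixed-size set $\tau(S)$ of $\min\{n|S|\lceil\log n\rceil,\binom{n}{2}\}$ edges (using $\{0,1\}$ coefficients so the padding contributes nothing), applies the tail bound at threshold $\sqrt{n\log n\,|\tau(S)|}/\epsilon$ to get a per-cut failure probability of $\exp(-\Theta(n\log n))$, and then takes a crude union bound over all $\le 2^{2n}$ disjoint pairs; this is exactly why the artificial $\log n$ inflation of $\tau(S)$ appears. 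You instead keep $k=|E(S,T)|$ as is and stratify the union bound by $(|S|,|T|)=(a,b)$, paying only $\ln(1/\beta_{a,b})=O((a+b)\log n)$, and then use the deterministic counts $k\le n\min\{a,b\}$ and $k\le|E|$ (with $|E|=\Omega(n)$, which the paper also assumes in this section, to absorb the linear term). Both routes give the stated bounds; your stratification is arguably cleaner and even yields a slightly sharper first term, $O\bigl(n\sqrt{\log n}\cdot\sqrt{\min\{|S|,|T|\}}/\epsilon\bigr)$ before rounding up to the theorem's $n\log n\sqrt{\min\{|S|,|T|\}}$, whereas the paper's padding makes the per-cut failure probability uniform so that no stratification is needed. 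One small caution: your parenthetical ``zero out negative entries'' post-processing is fine for privacy but would break the exact identity $\Phi_{\widehat G}(S,T)-\Phi_G(S,T)=\sum_{e\in E(S,T)}Z_e$ on which your utility analysis rests, so the utility statement should be read for the untruncated release, exactly as in the paper's algorithm.
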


\begin{remark}
  We note that under the assumption where the topology is public, the second bound of Theorem \ref{t.topology_is_known} entirely removes the dependency on the maximum edge weight $W$, compared to the $O(\sqrt{n|E|W})$ upper bound proposed in \cite{eliavs2020differentially}. Here, we pay an extra $\log n$ factor in both bounds due to the union bound, while \cite{eliavs2020differentially} only guarantees that the expected accuracy is $O(\sqrt{n|E|W})$.
\end{remark}

Clearly, the algorithm given in this section is $(\epsilon,0)$-differentially private (see Lemma \ref{l.laplace}). We prove the utility part of this theorem by the tail inequality of independent combinations of Laplace noise, which can be proved by standard techniques:
  \begin{lemma}
  [\cite{kotz2001laplace, gupta2012iterative}]
  \label{l.tail_of_laplace}
     Given $k\in \mathbb{N}_{>0}$ and $b>0$. Let $\{Z_i\}_{i\in [k]}$ be i.i.d random variables such that $Z_i\in \texttt{Lap}(b)$, then for any $q_1,\cdots q_k\in [0,1]$, 
    $$Pr\left[\sum_{i\in [k]} q_iZ_i > \alpha\right] \leq \left\{
        \begin{aligned}
            &\exp\left(-\frac{\alpha^2}{6kb^2}\right), \quad &\alpha \leq kb, \\
            &\exp\left(-\frac{\alpha}{6kb}\right), &\alpha > kb.
        \end{aligned}
        \right.$$
    \end{lemma}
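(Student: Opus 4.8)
The plan is the standard Chernoff/Cram\'er argument applied to the weighted sum $S := \sum_{i\in[k]} q_i Z_i$: bound its moment generating function, apply Markov's inequality to $e^{tS}$, and optimize over the free parameter $t>0$. The one feature that needs care — and that is precisely what produces the two branches in the statement — is that the MGF of a $\texttt{Lap}(b)$ variable is finite only on the bounded interval $|t|<1/b$, so the unconstrained optimum in $t$ is admissible only when $\alpha$ is not too large compared with $kb$.

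First I would recall that for $Z\sim\texttt{Lap}(b)$ and $|t|<1/b$ one has $\mathbb{E}[e^{tZ}] = (1-b^2t^2)^{-1}$, and combine this with the elementary estimate $(1-x)^{-1}\le e^{\frac43 x}$, valid for $x\in[0,\tfrac14]$ (since $\tfrac43 x + \ln(1-x)$ is nonnegative and increasing there). Because each $q_i\in[0,1]$, for any $0\le t\le \tfrac{1}{2b}$ we have $b^2 q_i^2 t^2\le \tfrac14$ and hence $\mathbb{E}[e^{t q_i Z_i}] = (1-b^2 q_i^2 t^2)^{-1} \le e^{\frac43 b^2 q_i^2 t^2} \le e^{\frac43 b^2 t^2}$. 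By independence, $\mathbb{E}[e^{tS}] \le e^{\frac43 k b^2 t^2}$ for all $0\le t\le \tfrac{1}{2b}$, so Markov's inequality gives $\Pr[S>\alpha] \le \exp\!\big(-t\alpha + \tfrac43 k b^2 t^2\big)$ for every such $t$. (Symmetry of the Laplace law makes the sign of the $q_i$ irrelevant, so assuming $q_i\in[0,1]$ loses nothing.)

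Next I would optimize the exponent $-t\alpha + \tfrac43 k b^2 t^2$ over $t\in(0,\tfrac1{2b}]$. Its unconstrained minimizer is $t^\star = \tfrac{3\alpha}{8kb^2}$, which satisfies $t^\star \le \tfrac1{2b}$ exactly when $\alpha \le \tfrac43 kb$; in particular this covers the regime $\alpha\le kb$, where plugging $t^\star$ back in yields $\Pr[S>\alpha]\le \exp\!\big(-\tfrac{3\alpha^2}{16kb^2}\big)$, and since $\tfrac{3}{16}>\tfrac16$ this is at most $\exp(-\alpha^2/(6kb^2))$, the first branch. When $\alpha > kb$ (so that, for $\alpha>\tfrac43 kb$, $t^\star$ leaves the admissible interval) I would instead take the boundary value $t=\tfrac1{2b}$, giving $\Pr[S>\alpha]\le \exp\!\big(-\tfrac{\alpha}{2b}+\tfrac{k}{3}\big)$; using $\alpha>\tfrac43 kb$ to bound $\tfrac k3 \le \tfrac{\alpha}{4b}$ and then $k\ge 1$ to pass from $\tfrac{\alpha}{4b}$ to $\tfrac{\alpha}{6kb}$ gives the second branch. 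The narrow window $kb<\alpha\le\tfrac43 kb$ is still handled by the quadratic bound (which is only smaller there), so the two cases glue together cleanly; this bookkeeping is routine.

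The only genuine obstacle is the bounded domain of the Laplace MGF: one cannot optimize $t$ over all of $\mathbb{R}_{>0}$, and it is precisely the infeasibility of the unconstrained optimum past $\alpha\asymp kb$ that switches the decay from sub-Gaussian to sub-exponential. Everything else — the MGF identity, the inequality $(1-x)^{-1}\le e^{\frac43 x}$, and tensorization over the independent coordinates — is entirely standard, so I would keep those steps terse and spend the writeup on matching the constants in the two branches to the claimed factor $6$.
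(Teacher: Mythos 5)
Your proposal is correct. The paper does not actually prove this lemma --- it imports it from \cite{kotz2001laplace, gupta2012iterative} with the remark that it follows by ``standard techniques'' --- and the standard technique in question is exactly your Chernoff/Cram\'er argument: bound the Laplace MGF $(1-b^2q_i^2t^2)^{-1}$ by $e^{O(b^2t^2)}$ on a restricted interval $|t|\le \tfrac{1}{2b}$, tensorize, and optimize $t$ subject to that constraint, with the boundary value $t=\tfrac{1}{2b}$ producing the sub-exponential branch for large $\alpha$. Your constants check out ($\tfrac{3}{16}>\tfrac16$ in the quadratic regime; $-\tfrac{\alpha}{2b}+\tfrac{k}{3}\le -\tfrac{\alpha}{4b}\le-\tfrac{\alpha}{6kb}$ in the linear regime, using $k\ge 1$), and the gluing over the window $kb<\alpha\le\tfrac43 kb$ is handled correctly.
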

\noindent Now, we are ready to prove Theorem \ref{t.topology_is_known}.
\begin{proof}
  [Proof of Theorem \ref{t.topology_is_known}] 
   
We first give the bound $O\left({n}\log n\sqrt{\min\{|S|,|T|\}}\right)$. For any set of vertices $S$, let $E(S,T)$ be the collection of edges in the given graph that crosses $S$ and $T$. Given $G$ and any disjoint $S,T\in 2^V$ such that $|S|\leq |T|$, the error in the synthetic graph can be written as $\texttt{err}(S,T) = \sum_{e \in E(S,T)} Z_e$, where $\{Z_e\}_{e\in |E|}$ are i.i.d Laplace noise from $\texttt{Lap}(1/\epsilon)$. 

Let $\alpha(S) = \min\{n|S|\cdot \lceil\log n\rceil,{n\choose 2}\}$. One can further find a set of $\alpha(S)$ edges in a complete graph that includes $E(S,T)$, let that set be $\tau(S)$. Then, the error can be re-written as $$\texttt{err}(S,T) = \sum_{e\in \tau(S)} q_e Z_e, 
\quad \text{where} \quad q_e = \begin{cases}
    1 & e\in \tau(S)\cap E(S,T) \\
    0 & \text{otherwise}
\end{cases}.$$ 

By Lemma \ref{l.tail_of_laplace}, we see that 
\begin{equation}
  Pr\left[\sum_{e\in \tau(S)} q_eZ_e > \frac{\sqrt{n\log n|\tau(S)|}}{\epsilon}\right] \leq \exp\left(-\frac{n\log n|\tau(S)|/\epsilon^2}{6|\tau(S)|/\epsilon^2}\right) = \exp\left(-\frac{n\log n}{6}\right)
\end{equation}
since $\frac{\sqrt{n\log n|\tau(S)|}}{\epsilon}  \leq \frac{1}{\epsilon}\min \left\{n\log n\sqrt{|S|}, O(n^{1.5}\sqrt{\log n})\right\} \leq \left(\frac{|\tau(S)|}{\epsilon}\right)$. Since there are at most $2^{2n}$ different $(S,T)$-cuts, then we see that 
\begin{equation}
  \begin{aligned}
    &Pr\left[\exists \text{ disjoint }S,T\in 2^V \texttt{ s.t.} |S|\leq |T| \land \texttt{err}(S,T) > n\log n\sqrt{|S|}\right] \\
    &\leq Pr\left[\exists \text{ disjoint }S,T\in 2^V \texttt{ s.t.} |S|\leq |T| \land \texttt{err}(S,T) > \sqrt{n\log n|\tau(S)|}\right]\\
    &\leq \exp\left(-\Theta(\log n)\right) = O\left(\frac{1}{\texttt{poly}(n)}\right).
  \end{aligned}
\end{equation}
The case where $|S|\geq |T|$ is symmetric. Thus, we have that with probability at least $1-O(1/n^c)$, the error is at most $O\left({n}\log n\sqrt{\min\{|S|,|T|\}}\right)$. 

The argument for the second bound is almost the same, in which we just replace the size of $\tau'(S)$ by $\min\{|E|\cdot \lceil \log n\rceil, {n\choose 2}\}$. This is because that $|E|\geq |E(S,T)|$, thus we can still find such $\tau'(S)$ that includes $E(S,T)$. Then, we have that 
$$\frac{\sqrt{n\log n\cdot |\tau'(S)|}}{\epsilon}  \leq \frac{1}{\epsilon}\min\{\log n\cdot \sqrt{n|E|} ,O(n^{1.5}\sqrt{\log n})\} \leq\left(\frac{|\tau'(S)|}{\epsilon}\right)$$
since we have assumed that $|E| = \Omega(n)$. Similarly, we have that 
\begin{equation}
  \begin{aligned}
    &Pr\left[\exists \text{ disjoint }S,T\subseteq V \texttt{ s.t.} |S|\leq |T| \land \texttt{err}(S,T) > \log n\sqrt{n|E|}\right] \\
    &\leq Pr\left[\exists \text{ disjoint }S,T\subseteq V \texttt{ s.t.} |S|\leq |T| \land \texttt{err}(S,T) > \sqrt{n\log n|\tau'(S)|}\right]\\
    &\leq \exp\left(-\Theta(\log n)\right) = O\left(\frac{1}{\texttt{poly}(n)}\right),
  \end{aligned}
\end{equation}
which is what is claimed in Theorem~\ref{t.topology_is_known}.
\end{proof}

\end{document}